\newcommand{\PSigma} {\Pi_{\Sigma}}
\newcommand{\Pxi} {\Pi_{\xi}}
\newcommand{\PTheta} {\Pi_{\Theta}}
\newcommand{\PSigmaO} {\Pi_{\Sigma_0}}
\newcommand{\figdir}{figs} 
\def\postcap{\vspace{-0.275in}}
\newtheorem{assumption}{Assumption}[section]
\newtheorem{theorem}{Theorem}[section]
\newtheorem{lemma}{Lemma}[section]
\title{Bayesian Nonparametric Covariance Regression}
\author{Emily Fox}
\address{Duke University,
Durham, NC
USA.}
\email{fox@stat.duke.edu}
\author[Fox and Dunson]{David Dunson}
\address{Duke University,
Durham, NC
USA.}
\email{dunson@stat.duke.edu}
\begin{document}

\ifpdf
	\DeclareGraphicsExtensions{.pdf, .jpg, .tif}
	\else
	\DeclareGraphicsExtensions{.eps, .jpg}
	\fi

\begin{abstract}
Although there is a rich literature on methods for allowing the variance in a univariate regression model to vary with predictors, time and other factors, relatively little has been done in the multivariate case. Our focus is on developing a class of nonparametric covariance regression models, which allow an unknown $p \times p$ covariance matrix to change flexibly with predictors. The proposed modeling framework induces a prior on a collection of covariance matrices indexed by predictors through priors for predictor-dependent loadings matrices in a factor model. In particular, the predictor-dependent loadings are characterized as a sparse combination of a collection of unknown dictionary functions (e.g, Gaussian process random functions). The induced covariance is then a regularized quadratic function of these dictionary elements. Our proposed framework leads to a highly-flexible, but computationally tractable formulation with simple conjugate posterior updates that can readily handle missing data. Theoretical properties are discussed and the methods are illustrated through simulations studies and an application to the Google Flu Trends data.
\end{abstract}


{\small \textbf{\emph{Keywords: } covariance estimation; Gaussian process; heteroscedastic regression; nonparametric Bayes; stochastic process.}}

\section{Introduction}
\label{sec:intro}

Spurred by the increasing prevalence of high-dimensional datasets and the computational capacity to analyze them, capturing heteroscedasticity in multivariate processes has become a growing focus in many applied domains.  For example, within the field of financial time series modeling, capturing the time-varying \emph{volatility} and \emph{co-volatility} of a collection of risky assets is key in devising a portfolio management scheme.  Likewise, the spatial statistics community is often faced with multivariate measurements (e.g., temperature, precipitation, etc.) recorded at a large collection of locations, necessitating methodology to model the strong spatial (and spatio-temporal) variations in correlations.  More generally, imagine that one has some arbitrary, potentially multivariate predictor space $\mathcal{X}$ and a collection of multivariate response vectors $y$.  The problem of mean regression (i.e., $\mu(x) = E(y \mid x)$) has been well studied in both the univariate and multivariate settings. Although there is a rich literature on methods for allowing the variance in a univariate regression model to vary with predictors, there is a dearth of methodologies for the general case of multivariate covariance regression (i.e., $\Sigma(x) = \mbox{cov}(y \mid x)$).  The covariance matrix captures key correlations between the elements of the response vector, and the typical assumption of a homoscedastic model can have significant impact on inferences.  

Historically, the problem of multivariate covariance regression has typically been addressed by standard regression operations on the unconstrained elements of the log or Cholesky decomposition of the covariance (or precision) matrix.  For example, \cite{Pourahmadi:99} proposes to model elements of $\mbox{chol}(\Sigma(x)^{-1})$ as a linear function of the predictors.  The weights associated with the $i$th row have a nice interpretation in terms of the conditional distribution of $y_i$ given $y_1,y_2,\dots,y_{i-1}$; however, the model is not invariant to permutations of the elements of $y$ which is problematic in applications where there does not exist a natural ordering.  Alternatively,~\cite{Chiu:96} consider modeling each element of $\log(\Sigma(x))$ as a linear function of the predictor.  An issue with this formulation is in the interpretability of the model: a submatrix of $\Sigma(x)$ does not necessarily coincide with a submatrix of the matrix logarithm.  Additionally, both the models of \cite{Pourahmadi:99} and \cite{Chiu:96} involve a large number of parameters (specifically, $d\times p(p+1)/2$ assuming $x\in \Re^d$.)  More recently,~\cite{Hoff:10} propose a covariance regression model in which $\Sigma(x) = A + Bxx'B'$ with $A$ positive definite and $B$ real. This model has interpretable parameters and may be equated with a latent factor model, leading to computational advantages.  However, the model still has key limitations in (i) scaling to large $p$ domains, and (ii) flexibility based on the parametric approach.  Specifically, the model restricts the difference between $\Sigma(x)$ and the baseline matrix $A$ to be rank 1.  Higher rank models can be considered via extensions such as $\Sigma(x) = A + Bxx'B' + Cxx'C'$, but this dramatically increases the parameterization and requires definition of the maximal rank difference.  

For volatility modeling where the covariate space is typically taken to be discrete time, heteroscedasticity has classically been captured via either variants of ARCH~\citep{Engle:02} or stochastic volatility models~\citep{Harvey:94}.  The former directly specifies the volatility matrix as a linear combination of lagged volatilities and squared returns, which suffers from curse of dimensionality, and is typically limited to datasets with 5 or fewer dimensions. 
Alternatively, multivariate stochastic volatility models assume $\Sigma(t) = A\Gamma(t)A'$, with $A$ real, $\Gamma(t) = \mbox{diag}(\exp h_{1t},\ldots,\exp h_{pt})$, and $h_{it}$ independent autoregressive processes. See~\cite{Chib:09} for a survey of such approaches.  
More recently, a number of papers have examined inducing covariance processes through variants of a Wishart process. \cite{PhilipovGlickman:06} take $\Sigma(t)^{-1}\mid \Sigma(t-1) \sim W(n,S_{t-1})$ with $S_{t-1} = 1/n (A^{1/2})(\Sigma(t-1)^{-1})^\nu(A^{1/2})'$\footnote{Extending to higher dimensions,~\cite{PhilipovGlickman:06b} apply this model to the covariance of a lower-dimensional latent factor in a standard latent factor model.}.  Alternatively, a conditionally \emph{non-central} Wishart distribution is induced on the precision matrix in~\cite{Gourieroux:09} by taking $\Sigma(t) = \sum_{k=1}^p x_{kt}x_{kt}'$, with each $x_{k}$ independently a first order Gaussian autoregressive process.  Key limitations of these types of Wishart processes are that posterior computations are extremely challenging, theory is lacking (e.g., simple descriptions of marginal distributions), and single parameters (e.g., $n$ and $\nu$) control the inter- and intra-temporal covariance relationships.  \cite{PradoWest} review alternative models of time-varying covariance matrices for dynamic linear models via discounting methods that maintain conjugacy.  Central to all of the cited volatility models is the assumption of Markov dynamics, limiting the ability to capture long-range dependencies and often leading to spiky trajectories.  Additionally, these methods assume a regular grid of observations that cannot easily accommodate missing values.

Within the spatial statistics community, the term \emph{Wishart process} is typically used to specify a different formulation than those described herein for volatility modeling.  Specifically, letting $\Sigma(s)$ denote the covariance of a $p$-dimensional observation at geographic location $s \in \mathcal{R}^2$,~\cite{Gelfand:04} assume that $\Sigma(s) = A(s)A(s)' + \Sigma_0$ with $\Sigma_0$ diagonal and $T(s)=A(s)A(s)'$ following a \emph{matric-variate Wishart process}. This Wishart process is such that $T(s)^{-1} = \Theta\xi(s)\xi(s)'\Theta'$ with $\xi_{\ell j} \sim \mbox{GP}(0,c_j)$ independently for each $\ell,j$ and $\Theta$ typically taken to be diagonal.  The induced distribution on $T(s)$ is then marginally inverse Wishart\footnote{More generally,~\cite{Gelfand:04} develop a \emph{spatial} coregionalization model such that $\mbox{cov}(y(s_1),y(s_2)) = \rho(s_1-s_2)A(s_1)A(s_2)' + \Sigma_0$ (i.e., a model with spatial dependencies arising in both the covariance and \emph{cross} covariance).}.  Posterior computations in this model rely on Metropolis-Hastings proposals that do not scale well to dimensions $p$ larger than 2-3 and cannot naturally accommodate missing data.  In terms of spatio-temporal processes,~\cite{Lopes:08} build upon a standard dynamic factor model to develop nonseparable and nonstationary space-time models. Specifically, the vector $y_t$ of univariate observations $y_{st}$ at spatial locations $s$ is modeled as $y_t = \mu_t + \beta f_t + \epsilon_t$ with the components of the latent factors $f_t$ independently evolving according to a first-order autoregressive process and columns of the factor loadings matrix $\beta$ independently drawn from Markov random fields.  Key assumptions of this formulation are that the observations evolve in discrete time on a regular grid, and that the dynamics of the spatio-temporal process are captured by independent random walks on the components of the latent factors.

In this paper, we present a Bayesian nonparametric approach to multivariate covariance regression that allows the covariance matrix to change flexibly with predictors and readily scales to high-dimensional datasets. The proposed modeling framework induces a prior on a collection of covariance matrices $\Sigma_{\mathcal{X}} = \{ \Sigma(x), x \in \mathcal{X} \}$ through specification of a prior on a predictor-dependent latent factor model. In particular, the predictor-dependent loadings are characterized as a sparse combination of a collection of unknown dictionary functions (e.g, Gaussian process random functions). The induced covariance is then a regularized quadratic function of these dictionary elements.  The proposed methodology has numerous advantages over previous approaches.  By employing collections of continuous random functions, we allow for an irregular grid of observations.  Similarly, we can easily cope with missing data within our framework without relying on imputing the missing values.  Another fundamental property of the proposed methodology is the fact that our combined use of a shrinkage prior with a latent factor model enables us (in theory) to handle high-dimensional datasets (e.g., on the order of hundreds of dimensions) in the presence of a limited number of observations.  Essential in being able to cope with such large datasets in practice is the fact that our computations are tractable, based on simple conjugate posterior updates.  Finally, we are able to state theoretical properties of our proposed prior, such as large support.

The paper is organized as follows.  In Section~\ref{sec:covReg}, we describe our proposed Bayesian nonparametric covariance regression model in addition to analyzing the theoretical properties of the model.  Section~\ref{sec:comp} details the Gibbs sampling steps involved in our posterior computations.  Finally, a number of simulation studies are examined in Section~\ref{sec:sim}, with an application to the Google Trends flu dataset presented in Section~\ref{sec:app}.

\section{Covariance Regression Priors}
\label{sec:covReg}

\subsection{Notation and Motivation}

Let $\Sigma(x)$ denote the $p \times p$ covariance matrix at ``location'' $x \in \mathcal{X}$.  In general, $x$ is an arbitrary, possibly multivariate predictor value.  In dynamical modeling, $x$ may simply represent a discrete time index (i.e., $\mathcal{X} = \{1,\ldots, T\}$) or, in spatial modeling, a geographical location (i.e., $\mathcal{X} = \Re^2$).  Another simple, tractable case is when $x$ represents an ordered categorical predictor (i.e.,  $\mathcal{X} = \{1,\ldots, N\}$). We seek a prior for $\Sigma_{\mathcal{X}} = \{ \Sigma(x), x \in \mathcal{X} \}$, the collection of covariance matrices over the space of predictor values.

Letting $\Sigma_{\mathcal{X}} \sim \PSigma$ our goal is to choose a prior $\PSigma$ for the collection of covariance matrices that has large support and leads to good performance in large $p$ settings.  By ``good'' we mean accurate estimation in small samples, taking advantage of shrinkage priors and efficient computation that scales well as $p$ increases.  We initially focus on the relatively simple setting in which 
\begin{align}
	y_i \sim \mathcal{N}_p( \mu(x_i), \Sigma(x_i))
	\label{eqn:obsModel}
\end{align}	
independently for each $i$.  Such a formulation could be extended to settings in which data are collected at repeated times for different subjects, as in multivariate longitudinal data analysis, by embedding the proposed model within a hierarchical framework.  See Section~\ref{sec:discussion} for a discussion.

\subsection{Proposed Latent Factor Model}

In large $p$ settings, modeling a $p \times p$ covariance matrix $\Sigma(x)$ over an arbitrary predictor space $\mathcal{X}$ represents an enormous dimensional regression problem; we aim to reduce dimensionality for tractability in building a flexible nonparametric model for the predictor-dependent covariances.  A popular approach for coping with such high dimensional (non-predictor-dependent) covariance matrices $\Sigma$ in the presence of limited data is to assume that the covariance has a decomposition as $\Lambda\Lambda' + \Sigma_0$ where $\Lambda$ is a $p \times k$ factor loadings matrix with $k << p$ and $\Sigma_0$ is a $p \times p$ diagonal matrix with non-negative entries.  To build in predictor dependence, we assume a decomposition
\begin{align}
	\Sigma(x) = \Lambda(x)\Lambda(x)' + \Sigma_0,
\end{align}
where $\Lambda(x)$ is a $p \times k$ factor loadings matrix that is indexed by predictors $x$ and where $\Sigma_0 = \mbox{diag}(\sigma_1^2,\ldots, \sigma_p^2)$. Assuming initially for simplicity that $\mu(x) = 0$, such a decomposition is induced by marginalizing out a set of latent factors $\eta_i$ from the following latent factor model:
\begin{align}
	\begin{aligned}
		y_i = \Lambda(x_i)&\eta_i + \epsilon_i\\
		\eta_i \sim \mathcal{N}_k(0,I_k), &\quad \epsilon_i \sim \mathcal{N}_p(0,\Sigma_0). \label{eq:base}
	\end{aligned}
\end{align}
Here, $x_i = (x_{i1},\ldots, x_{iq})'$ is the predictor associated with the $i$th observation $y_i$.

Despite the dimensionality reduction introduced by the latent factor model of Eq.~\eqref{eq:base}, modeling a $p \times k$ dimensional predictor-dependent factor loadings matrix $\Lambda(x)$ still represents a significant challenge for large $p$ domains.  To further reduce dimensionality, and following the strategy of building a flexible high-dimensional model from simple low-dimensional pieces, we assume that each element of $\Lambda(x)$ is a linear combination of a much smaller number of unknown dictionary functions $\xi_{\ell k}: \mathcal{X} \to \Re$. That is, we propose to let
\begin{align}
\Lambda(x_i) = \Theta \xi(x_i), \label{eq:Lamx}
\end{align}
where $\Theta \in \Re^{p \times L}$ is the matrix of coefficients relating the predictor-dependent factor loadings matrix to the set of dictionary functions comprising the $L \times k$ dimensional matrix $\xi(x)$. Typically, $k << p$ and $L << p$.  Since we can write
\begin{align}
[\Lambda(\cdot)]_{rs} = \sum_{\ell = 1}^L \theta_{r\ell}\xi_{\ell s}(\cdot),
\end{align}
we see that the weighted sum of the $s$th column of dictionary functions $\xi_{\cdot s}(\cdot)$, with weights specified by the $r$th row of $\Theta$, characterizes the impact of the $s$th latent factor on $y_{ir}$, the $r$th component of the response at predictor location $x_i$.  By characterizing the elements of $\Lambda(x_i)$ as a linear combination of these flexible dictionary functions, we obtain a highly-flexible but computationally tractable 
formulation. 

In marginalizing out the latent factors, we now obtain the following induced covariance structure
\begin{align}
\mbox{cov}(y_i\, |\, x_i=x) = \Sigma(x) = \Theta \xi(x) \xi(x)' \Theta' + \Sigma_0. \label{eq:covyx}
\end{align}  
Note that the above decomposition of $\Sigma(x)$ is not unique and there are actually infinitely many such equivalent decompositions.  For example, take $\Theta_1 = c\Theta$ and $\xi_1(\cdot) = (1/c)\xi(\cdot)$. Alternatively, consider $\xi_1(\cdot) = \xi(\cdot)P$ for any orthogonal matrix $P$ or $\Theta_1 = [\Theta \,\, 0_{p \times d}]$ and $\xi_1 = [\xi;\, \xi_0]$ for any $d \times k$ matrix of dictionary functions $\xi_0$.  One can also increase the dimension of the latent factors and take $\xi_1 = [\xi \,\, 0_{L \times d}]$. In standard (non-predictor-dependent) latent factor modeling, a common approach to ensure identifiability is to constrain the factor loadings matrix to be block lower triangular with strictly positive diagonal elements~\citep{GewekeZhou:96}, though such a constraint induces order dependence among the responses~\citep{AguilarWest:00,West:03,LopesWest:04,Carvalho:08}. However, for tasks such as inference on the covariance matrix and prediction, identifiability of a unique decomposition is not necessary. Thus, we do not restrict ourselves to a unique decomposition of $\Sigma(x)$, allowing us to define priors with better computational properties. 

Although we are not interested in identifying a unique decomposition of $\Sigma(x)$, we are interested in characterizing the class of covariance regressions $\Sigma(x)$ that can be decomposed as in Eq.~\eqref{eq:covyx}.  Lemma~\ref{lemma:factorization} states that for $L$ and $k$ sufficiently large, any covariance regression has such a decomposition. For $L,k \geq p$, let $\mathcal{X}_{\xi}$ denote the space of all $L \times k$ dimensional matrices of arbitrarily complex dictionary functions mapping from $\mathcal{X}\rightarrow \Re$, $\mathcal{X}_{\Sigma_0}$ be the space of all $p \times p$ diagonal matrices with non-negative entries, and $\mathcal{X}_{\Theta}$ be the space of all $p \times L$ dimensional matrices $\Theta$ such that $\Theta\Theta'$ has finite elements.

\begin{lemma}
	Given a symmetric positive semidefinite matrix $\Sigma(x) \succ 0,\, \forall x \in \mathcal{X}$, there exists $\{\xi(\cdot),\Theta,\Sigma_0\} \in \mathcal{X}_{\xi} \otimes \mathcal{X}_{\Theta} \otimes \mathcal{X}_{\Sigma_0}$ such that
	\begin{align}
		\Sigma(x) = \Theta \xi(x)\xi(x)'\Theta' + \Sigma_0, \quad \forall x \in \mathcal{X}.
	\end{align}
	\begin{proof}
	Assume without loss of generality that $\Sigma_0 = 0_{p\times p}$ and take $k,L\geq p$.  Consider
	\begin{align}
		\begin{aligned}
			\Theta = [I_p \,\, 0_{p\times L-p}] \hspace{0.2in}
			\xi(x) = \begin{bmatrix} \mbox{chol}(\Sigma(x)) & 0_{p \times k-p}\\ 0_{L-p \times p} & 0_{L-p \times k-p} \end{bmatrix}.
		\end{aligned}
	\end{align}
	Then, $\Sigma(x) = \Theta\xi(x)\xi(x)'\Theta', \, \forall x \in \mathcal{X}$.
	\end{proof}
	\label{lemma:factorization}
\end{lemma}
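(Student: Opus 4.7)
The claim is purely an existence statement about decomposability, so my plan is to give an explicit construction rather than anything delicate. The key observation is that, since each $\Sigma(x)$ is symmetric positive semidefinite with finite entries, it admits a square-root factorization $\Sigma(x) = A(x)A(x)'$ with $A(x) \in \mathbb{R}^{p \times p}$ (for concreteness, the Cholesky factor, which exists for any PSD matrix possibly with a pivoting convention). This is really the only analytic ingredient; everything else is bookkeeping with block matrices.

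Given such an $A(\cdot)$, I would set $\Sigma_0 = 0_{p \times p}$ (allowed since the zero matrix is diagonal with non-negative entries, hence lies in $\mathcal{X}_{\Sigma_0}$) and exploit the hypothesis $L, k \geq p$ to embed $A(x)$ into the top-left block of an $L \times k$ matrix $\xi(x)$, padding the remaining entries with zeros. I would then pick $\Theta = [\, I_p \ \ 0_{p \times (L-p)}\,]$, so $\Theta \in \mathcal{X}_\Theta$ trivially (its entries, and hence those of $\Theta\Theta'$, are finite).

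Verification is then a one-line block multiplication: $\xi(x)\xi(x)'$ has $A(x)A(x)' = \Sigma(x)$ in its top-left $p \times p$ block and zeros elsewhere, and left/right multiplication by $\Theta$ and $\Theta'$ extracts precisely that block, yielding $\Theta \xi(x) \xi(x)' \Theta' = \Sigma(x)$ for every $x \in \mathcal{X}$. Adding $\Sigma_0 = 0$ leaves this unchanged, completing the construction.

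There is essentially no obstacle here — the only thing to double-check is that the constructed $\xi(\cdot)$ is an admissible element of $\mathcal{X}_\xi$, but since $\mathcal{X}_\xi$ is described as the space of \emph{arbitrarily complex} $L \times k$ matrices of dictionary functions $\mathcal{X} \to \mathbb{R}$, any measurable (or even pathological) choice of $x \mapsto A(x)$ is allowed. Thus the lemma follows; its content is really just that the parameterization in Eq.~\eqref{eq:covyx} is universal once $L, k \geq p$, and that the restriction to modest $L, k$ in practice is a modeling choice rather than an expressive limitation.
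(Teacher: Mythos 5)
Your construction is exactly the paper's: take $\Sigma_0 = 0$, embed $\mbox{chol}(\Sigma(x))$ in the top-left $p \times p$ block of $\xi(x)$ with zero padding, set $\Theta = [\, I_p \ \ 0_{p\times(L-p)}\,]$, and verify by block multiplication. The proposal is correct and follows essentially the same route as the paper, with the added (welcome) care of checking that each constructed component lies in its admissible space.
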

Now that we have established that there exist decompositions of $\Sigma(x)$ into the form specified by Equation~\eqref{eq:covyx}, the question is whether we can specify a prior on the elements $\xi(\cdot)$, $\Theta$, and $\Sigma_0$ that provides large support on such decompositions.  This is explored in Section~\ref{sec:priors}.

In order to generalize the model to also allow the mean $\mu(x)$ to vary flexibly with predictors, we can follow a nonparametric latent factor regression approach and let 
\begin{align}
\eta_i = \psi( x_i ) + \nu_i,\quad \nu_i \sim \mathcal{N}_k( 0, I_k ), \label{eq:latreg}
\end{align}
where $\psi(x_i) = [ \psi_1(x_i),\ldots, \psi_k(x_i) ]'$, and $\psi_j: \mathcal{X} \to \Re$ is an unknown function relating the predictors to the mean of the $j$th factor, for $j=1,\ldots, k$.  These $\psi_j(\cdot)$ functions can be modeled in a related manner to the $\xi_{\ell k}(\cdot)$ functions described above.  The induced mean of $y_i$ conditionally on $x_i=x$ and marginalizing out the latent factors is then $\mu(x) = \Theta \xi(x) \psi(x)$. For simplicity, however, we focus our discussions on the case where $\mu(x)=0$.

\subsection{Prior Specification}
\label{sec:priors}

Working within a Bayesian framework, we place independent priors on $\xi(\cdot)$, $\Theta$, and $\Sigma_0$ in Eq.~\eqref{eq:covyx} to induce a prior on $\Sigma_{\mathcal{X}}$. Let $\Pxi$, $\PTheta$, and $\PSigmaO$ denote each of these independent priors, respectively.  Recall that $\PSigma$  denotes the induced prior on $\Sigma_{\mathcal{X}}$. 

Aiming to capture covariances that vary continuously over $\mathcal{X}$ combined with the goal of maintaining simple computations for inference, we specify the dictionary functions as
\begin{align}
	\xi_{\ell k}(\cdot) \sim \mbox{GP}(0,c)
\end{align}
independently for all $\ell,k$, with $c$ a squared exponential correlation function having $c(\xi,\xi') = \exp(-\kappa ||\xi-\xi'||_2^2)$.

To cope with the fact that the number of latent dictionary functions is a model choice we are required to make, we seek a prior $\PTheta$ that favors many values of $\Theta$ being close to zero so that we may choose $L$ larger than the expected number of dictionary functions (also controlled by the latent factor dimension $k$). As proposed in~\citet{Bhattacharya:10}, we use the following shrinkage prior:
\begin{align}
	\begin{aligned}
		\theta_{j\ell} \mid \phi_{j \ell},\tau_{\ell} \sim \mathcal{N}(0,\phi_{j \ell}^{-1}\tau_{\ell}^{-1}) &\quad \phi_{j \ell} \sim \mbox{Ga}(3/2,3/2)\\
		\delta_1 \sim \mbox{Ga}(a_1,1), \quad \delta_h \sim \mbox{Ga}(a_2&,1),\, h \geq 2, \quad \tau_{\ell} = \prod_{h=1}^\ell \delta_h.
	\end{aligned} 
	\label{eqn:shrinkage}
\end{align}
Choosing $a_2>1$ implies that $\delta_h$ is greater than 1 in expectation so that $\tau_{\ell}$ tends stochastically towards infinity as $\ell$ goes to infinity, thus shrinking the elements $\theta_{j\ell}$ toward zero increasingly as $\ell$ grows.  The $\phi_{j\ell}$ precision parameters allow for flexibility in how the elements of $\Theta$ are shrunk towards zero by incorporating local shrinkage specific to each element of $\Theta$, while $\tau_{\ell}$ provides a global column-wise shrinkage factor.

Finally, we specify $\PSigmaO$ via the usual inverse gamma priors on the diagonal elements of $\Sigma_0$.  That is,
\begin{align}
	\sigma_j^{-2} \sim \mbox{Ga}(a_\sigma,b_\sigma)
\end{align}
independently for each $j=1,\ldots,p$.

\subsection{Theoretical Properties}

In this section, we explore the theoretical properties of the proposed Bayesian nonparametric covariance regression model.  In particular, we focus on the support of the induced prior $\PSigma$ based on the priors $\Pxi$, $\PTheta$, and $\PSigmaO$ defined in Section~\ref{sec:priors}.  Large support implies that the prior can generate covariance regressions that are arbitrarily close to any function $\{\Sigma(x),x \in \mathcal{X}\}$ in a large class.  Such a support property is the defining feature of a Bayesian nonparametric approach and cannot simply be assumed.  Often, seemingly flexible models can have quite restricted support due to hidden constraints in the model and not to real prior knowledge that certain values are implausible.  Although we have chosen a specific form for a shrinkage prior $\PTheta$, we aim to make our statement of prior support as general as possible and thus simply assume that $\PTheta$ satisfies a set of two conditions given by Assumption~\ref{assumption:absSum} and Assumption~\ref{assumption:rankTheta}.  In Lemma~\ref{lemma:absSum}, we show that the $\PTheta$ specified in Eq.~\eqref{eqn:shrinkage} satisfies these assumptions.  The proofs associated with the theoretical statements made in this section can be found in the Appendix.

\begin{assumption}
	$\PTheta$ is such that $\sum_{\ell} E[|\theta_{j\ell}|] < \infty$.  This property ensures that the prior on the rows of $\Theta$ shrinks the elements towards zero fast enough as $\ell \rightarrow \infty$.
	\label{assumption:absSum}
\end{assumption}

\begin{assumption}
	$\PTheta$ is such that $\PTheta\left(\mbox{rank}(\Theta)=p\right) > 0$.  That is, there is positive prior probability of $\Theta$ being full rank. \label{assumption:rankTheta}
\end{assumption}

The following theorem shows that, for $k \geq p$ and as $L \rightarrow \infty$, the induced prior $\PSigma$ places positive probability on the space of all covariance functions $\Sigma^*(x)$ that are continuous on $\mathcal{X}$.  
\begin{theorem}
	Let $\PSigma$ denote the induced prior on $\{\Sigma(x),x \in \mathcal{X}\}$ based on the specified prior $\Pxi \otimes \PTheta \otimes \PSigmaO$ on $\mathcal{X}_{\xi} \otimes \mathcal{X}_{\Theta} \otimes \mathcal{X}_{\Sigma_0}$.  Assuming $\mathcal{X}$ compact, for all continuous $\Sigma^*(x)$ and for all $\epsilon>0$,
	\begin{align}
		\PSigma \left( \sup_{x\in\mathcal{X}} ||\Sigma(x)-\Sigma^*(x)||_2 < \epsilon \right) > 0.
	\end{align}
	\label{thm:priorsupport}
\end{theorem}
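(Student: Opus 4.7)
The plan is to construct a deterministic target $(\tilde\xi, \tilde\Theta, 0)$ that exactly reproduces $\Sigma^*(x)$, and then combine (i) continuity of the map $(\xi, \Theta, \Sigma_0)\mapsto \Theta\xi(\cdot)\xi(\cdot)'\Theta' + \Sigma_0$ with (ii) positive prior mass on every sup-norm neighborhood of that target. Since $\Sigma^*$ is continuous and positive semidefinite on the compact set $\mathcal{X}$, its symmetric square root $A^*(x) = \Sigma^*(x)^{1/2}$ is continuous in $x$ and uniformly bounded. Following the construction in Lemma~\ref{lemma:factorization} (using $A^*$ in place of the Cholesky, which avoids subtleties when $\Sigma^*(x)$ fails to be strictly positive definite), I would set $\tilde\Theta = [I_p\,\, 0_{p\times(L-p)}]$ and let $\tilde\xi(x)$ have $A^*(x)$ in its top-left $p\times p$ block with zeros elsewhere, so that $\tilde\Theta\tilde\xi(x)\tilde\xi(x)'\tilde\Theta' = \Sigma^*(x)$ for every $x\in\mathcal{X}$.

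Next I would establish continuity quantitatively. Writing $\Theta\xi\xi'\Theta' - \tilde\Theta\tilde\xi\tilde\xi'\tilde\Theta'$ as a telescoping sum of the four terms $(\Theta-\tilde\Theta)\xi\xi'\Theta'$, $\tilde\Theta(\xi-\tilde\xi)\xi'\Theta'$, $\tilde\Theta\tilde\xi(\xi-\tilde\xi)'\Theta'$, and $\tilde\Theta\tilde\xi\tilde\xi'(\Theta-\tilde\Theta)'$, and invoking submultiplicativity of the spectral norm together with the uniform boundedness of $\tilde\xi$ on the compact $\mathcal{X}$, I would deduce that for any $\epsilon>0$ there is a $\delta>0$ such that on the event $E_\delta = \{\|\Theta - \tilde\Theta\|_2 < \delta\} \cap \{\sup_x\|\xi(x) - \tilde\xi(x)\|_2 < \delta\} \cap \{\|\Sigma_0\|_2 < \delta\}$ the desired sup-norm bound $\sup_x\|\Sigma(x) - \Sigma^*(x)\|_2 < \epsilon$ holds. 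Since $\Pxi$, $\PTheta$, $\PSigmaO$ are independent, it suffices to show that each of the three factor events carries positive prior probability.

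For $\Pxi$, the classical large-support property of a GP with squared-exponential covariance on a compact domain (its topological support in sup-norm is all of $C(\mathcal{X})$), applied coordinate-wise to the entries of $\xi$, yields $\Pxi(\sup_x\|\xi(x)-\tilde\xi(x)\|_2 < \delta) > 0$ because $\tilde\xi$ is continuous. For $\PSigmaO$, each $\sigma_j^2$ is inverse-gamma and therefore puts positive mass on $(0,\delta)$. The delicate piece is $\PTheta$: the rows of $\Theta$ are infinite sequences, so I must simultaneously place $\theta_{j\ell}$ for $\ell\le p$ in a neighborhood of the corresponding row of $I_p$ while controlling the cumulative tail $\sum_{\ell > p}\theta_{j\ell}\xi_{\ell k}(\cdot)$ uniformly in $x$. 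For the head, I would condition on the hyperparameters $\{\phi_{j\ell}, \tau_\ell\}_{\ell\le p}$ and use absolute continuity of the conditional Gaussian density of $\{\theta_{j\ell}\}_{\ell\le p}$; Assumption~\ref{assumption:rankTheta} certifies that the rank-$p$ target $\tilde\Theta$ lies in the support of $\PTheta$. For the tail, I would combine Assumption~\ref{assumption:absSum} with Fubini, Markov's inequality, and the finiteness of $E[\sup_x|\xi_{\ell k}(x)|]$ (a standard consequence of GP continuity on a compact set) to conclude that with positive probability the tail contribution is bounded by $\delta/2$ uniformly in $x$.

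The main obstacle is rigorizing this tail-control step. The route is to establish that $\sum_\ell E|\theta_{j\ell}| \cdot E[\sup_x|\xi_{\ell k}(x)|] < \infty$ so that $\sum_\ell \theta_{j\ell}\sup_x|\xi_{\ell k}(x)|$ is an almost-surely finite random variable, and then to argue via Markov that a sufficiently small tail value occurs on a non-null event, which can be intersected with the head neighborhood (constructed under a disjoint block of hyperparameters) to give positive mass to the full $\PTheta$ neighborhood. Once the three events have been shown to have positive prior mass, independence of $\Pxi$, $\PTheta$, and $\PSigmaO$ yields $\PSigma(E_\delta) > 0$, and by the continuity bound $\PSigma(\sup_x\|\Sigma(x)-\Sigma^*(x)\|_2 < \epsilon) \geq \PSigma(E_\delta) > 0$, completing the argument.
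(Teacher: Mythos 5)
Your proposal is correct in outline but takes a genuinely different route from the paper. You work \emph{globally}: you build a single deterministic triple $(\tilde\xi,\tilde\Theta,0)$ that reproduces $\Sigma^*(\cdot)$ exactly on all of $\mathcal{X}$ (via the symmetric square root, a sensible robustification of Lemma~\ref{lemma:factorization} when $\Sigma^*(x)$ may be singular), then push positive prior mass on a sup-norm neighborhood of that triple through the continuous quadratic map. The paper instead argues \emph{locally}: it covers the compact $\mathcal{X}$ by finitely many balls, establishes positive mass at each center $x_0$ using the fact that $\Theta\xi(x_0)\xi(x_0)'\Theta'\mid\Theta\sim W(k,\Theta\Theta')$ has full support on the positive semidefinite cone for \emph{any} full-rank $\Theta$, and then glues via the almost-sure continuity of $\Sigma(\cdot)$ (Theorem~\ref{thm:continuity}) and the continuity of $\Sigma^*$. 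What your approach buys is directness --- no subcover, no triangle-inequality bookkeeping at ball centers, and the sup-norm control over all of $\mathcal{X}$ comes straight from the classical sup-norm support of the squared-exponential GP. What the paper's approach buys is generality in $\PTheta$: because the conditional Wishart has full support for every full-rank $\Theta$, the paper never needs the prior to charge a neighborhood of a \emph{specific} target $\tilde\Theta$, so Assumption~\ref{assumption:rankTheta} suffices; your argument genuinely needs the Gaussian form of $\PTheta$ (absolute continuity of the head), since positive probability of full rank does not by itself put $[I_p\;0]$ in the support.

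Two points in your sketch need care. First, the event $E_\delta$ as literally written has $\PSigma$-probability zero: with $L\to\infty$ the matrix $\xi(x)$ has infinitely many i.i.d.\ GP rows, so $\sup_x\|\xi(x)-\tilde\xi(x)\|_2<\delta$ is an intersection of infinitely many independent events each of probability bounded away from $1$. Your third paragraph effectively repairs this by controlling the tail of $\Lambda=\Theta\xi$ through $\sum_{\ell>N}|\theta_{j\ell}|\sup_x|\xi_{\ell k}(x)|$ rather than through $\xi$ alone, but the repaired event must replace $E_\delta$ in the continuity bound. Second, Markov's inequality yields a \emph{positive-probability} bound on that tail only if its expectation is already below the tolerance; since Assumption~\ref{assumption:absSum} makes $\sum_{\ell>N}E|\theta_{j\ell}|\cdot E[\sup_x|\xi_{\ell k}(x)|]$ a convergent tail, you must choose the head/tail split $N$ (depending on $\delta$) large enough that this expectation is small, handling the finitely many intermediate columns $p<\ell\le N$ as part of the head; note also that the columns of $\Theta$ are coupled through the shared $\delta_h$ in $\tau_\ell$, so the head and tail events should be intersected after conditioning on the global shrinkage hyperparameters rather than treated as independent blocks. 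With these repairs the argument goes through.
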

Intuitively, the support on continuous covariance functions $\Sigma^*(x)$ arises from the continuity of the Gaussian process dictionary functions. However, since we are mixing over infinitely many such dictionary functions, we need the mixing weights specified by $\Theta$ to tend towards zero, and to do so ``fast enough''---this is where Assumption~\ref{assumption:absSum} becomes important.  See Theorem~\ref{thm:continuity}.  The proof of Theorem~\ref{thm:priorsupport} relies on the large support of $\PSigma$ at any point $x_0 \in \mathcal{X}$. Since each $\xi_{\ell k}(x_0)$ is independently Gaussian distributed (based on properties of the Gaussian process prior), $\xi(x_0)\xi(x_0)'$ is Wishart distributed. Conditioned on $\Theta$, $\Theta\xi(x_0)\xi(x_0)'\Theta'$ is also Wishart distributed.  More generally, for fixed $\Theta$, $\Theta\xi(x)\xi(x)'\Theta'$ follows the matric-variate Wishart process of~\cite{Gelfand:04}.  Combining the large support of the Wishart distribution with that of the gamma distribution on the inverse elements of $\Sigma_0$ provides the desired large support of the induced prior $\PSigma$ at each predictor location $x_0$.

\begin{theorem}
	For every finite $k$ and $L\rightarrow \infty$ (or $L$ finite), $\Lambda(\cdot)= \Theta\xi(\cdot)$ is almost surely continuous on $\mathcal{X}$.
	\label{thm:continuity}
\end{theorem}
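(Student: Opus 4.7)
The plan is to separate the two regimes. For finite $L$, each entry of the loadings matrix is a finite linear combination $[\Lambda(x)]_{rs}=\sum_{\ell=1}^{L}\theta_{r\ell}\xi_{\ell s}(x)$. Each $\xi_{\ell s}$ is a mean-zero Gaussian process on the compact set $\mathcal{X}$ with the squared-exponential covariance $c(x,x')=\exp(-\kappa\|x-x'\|_2^2)$, which is smooth; the bound $E\bigl[(\xi_{\ell s}(x)-\xi_{\ell s}(x'))^{2}\bigr]=2(1-e^{-\kappa\|x-x'\|_2^{2}})\lesssim\|x-x'\|_2^{2}$ lets standard GP sample-path arguments (Kolmogorov continuity, or Dudley entropy) produce an a.s.\ continuous modification of each $\xi_{\ell s}$. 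A finite linear combination of a.s.\ continuous functions is a.s.\ continuous, so $\Lambda(\cdot)$ is a.s.\ continuous on $\mathcal{X}$.

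The substantive case is $L\to\infty$. Here I would show that the random series $\sum_{\ell=1}^{\infty}\theta_{r\ell}\xi_{\ell s}(x)$ converges uniformly on $\mathcal{X}$ almost surely, via a probabilistic Weierstrass $M$-test. Writing $\|\cdot\|_\infty$ for the sup norm on $\mathcal{X}$, bound the $\ell$-th term uniformly in $x$ by $M_\ell:=|\theta_{r\ell}|\,\|\xi_{\ell s}\|_\infty$. If $\sum_\ell M_\ell<\infty$ almost surely, then the partial sums form a uniformly Cauchy sequence of a.s.\ continuous functions on the compact set $\mathcal{X}$, whose uniform limit is therefore a.s.\ continuous; applying this to each of the $pk$ entries of $\Lambda$ and intersecting the corresponding a.s.\ events then yields the theorem.

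Thus the crux is to verify $\sum_\ell M_\ell<\infty$ a.s. Since $\Pi_{\xi}\otimes\Pi_{\Theta}$ is a product prior and the $\xi_{\ell s}$ are i.i.d.\ across $\ell$, taking expectations factorizes:
\[
E\Bigl[\sum_{\ell=1}^{\infty}|\theta_{r\ell}|\,\|\xi_{\ell s}\|_\infty\Bigr]
=\Bigl(\sum_{\ell=1}^{\infty}E|\theta_{r\ell}|\Bigr)\,E\|\xi_{1s}\|_\infty .
\]
The first factor is finite by Assumption~\ref{assumption:absSum}. The second factor is finite by a Borell--TIS or Fernique-type bound: a centered Gaussian process with a.s.\ continuous paths on a compact set has all moments of its sup-norm finite (and by symmetry the same holds for $\sup|\xi|$). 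The product is therefore finite, so $\sum_\ell M_\ell<\infty$ a.s., as required. This integrability step is where I expect essentially all of the work to concentrate; once it is secured, the $M$-test and the finite-$L$ argument combine mechanically to finish the proof. The role of Assumption~\ref{assumption:absSum} is precisely to convert the elementwise shrinkage of $\Pi_{\Theta}$ into the uniform-in-$x$ control needed for an infinite-dictionary limit.
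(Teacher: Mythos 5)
Your proposal is correct and follows essentially the same route as the paper: a probabilistic Weierstrass $M$-test with $M_\ell = |\theta_{j\ell}|\sup_{x}|\xi_{\ell s}(x)|$, factorizing the expectation of $\sum_\ell M_\ell$ into $\bigl(\sum_\ell E|\theta_{j\ell}|\bigr)\cdot E\bigl[\sup_x|\xi_{\ell s}(x)|\bigr]$ and invoking Assumption~\ref{assumption:absSum} for the first factor. The only difference is that you explicitly justify the finiteness of the expected sup-norm (Borell--TIS/Fernique) and the a.s.\ continuity of individual Gaussian process paths (Kolmogorov), steps the paper simply asserts from compactness of $\mathcal{X}$.
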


Lemma~\ref{lemma:absSum} specifies the conditions under which the prior $\PTheta$ specified in Eq.~\eqref{eqn:shrinkage} satisfies Assumption~\ref{assumption:absSum}, which provides a sufficient condition used in the proof of prior support.
\begin{lemma}
	Based on the prior specified in Eq.~\eqref{eqn:shrinkage} and choosing $a_2>2$, Assumption~\ref{assumption:absSum} is satisfied.  That is,
	%
		$\sum_{\ell} E[|\theta_{j\ell}|] < \infty$.
	%
	\label{lemma:absSum}
\end{lemma}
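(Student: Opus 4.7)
The plan is to condition on the variance parameters, reducing $E[|\theta_{j\ell}|]$ to a product of tractable moments. Since $\theta_{j\ell}\mid \phi_{j\ell},\tau_\ell \sim \mathcal{N}(0,\phi_{j\ell}^{-1}\tau_\ell^{-1})$, we have
\begin{equation}
E\bigl[|\theta_{j\ell}|\,\big|\,\phi_{j\ell},\tau_\ell\bigr] = \sqrt{\tfrac{2}{\pi}}\,\phi_{j\ell}^{-1/2}\tau_\ell^{-1/2}.
\end{equation}
Because $\phi_{j\ell}$ is independent of $\tau_\ell$ (which depends only on the $\delta_h$'s), taking outer expectations gives $E[|\theta_{j\ell}|] = \sqrt{2/\pi}\,E[\phi_{j\ell}^{-1/2}]\,E[\tau_\ell^{-1/2}]$. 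Thus the whole problem reduces to bounding these two moments and summing in $\ell$.

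Next I would verify that $E[\phi_{j\ell}^{-1/2}]$ is a finite constant not depending on $\ell$. For $\phi\sim\mathrm{Ga}(3/2,3/2)$ the negative-half moment is $E[\phi^{-1/2}] = \Gamma(1)/\Gamma(3/2)\cdot(3/2)^{1/2}$, which is finite (the shape $3/2>1/2$ suffices for existence). For the $\tau_\ell^{-1/2}$ piece, I use independence of the $\delta_h$ to write
\begin{equation}
E\bigl[\tau_\ell^{-1/2}\bigr] = E[\delta_1^{-1/2}]\prod_{h=2}^{\ell} E[\delta_h^{-1/2}] = \frac{\Gamma(a_1-\tfrac12)}{\Gamma(a_1)}\left(\frac{\Gamma(a_2-\tfrac12)}{\Gamma(a_2)}\right)^{\ell-1},
\end{equation}
using the standard identity $E[\delta^{-1/2}] = \Gamma(a-1/2)/\Gamma(a)$ for $\delta\sim\mathrm{Ga}(a,1)$, which requires $a>1/2$. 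The condition $a_2>2$ (and $a_1>1/2$, implicit) guarantees that every factor is finite.

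The one remaining step---and the only place the hypothesis $a_2>2$ is really used---is to show that the common factor $r := \Gamma(a_2-\tfrac12)/\Gamma(a_2)$ satisfies $r<1$, so that $E[\tau_\ell^{-1/2}]$ decays geometrically in $\ell$. This follows because $\Gamma$ is strictly increasing on $[x_0,\infty)$ for some $x_0<3/2$ (its unique positive minimum is at $\approx 1.4616$), so for $a_2>2$ both $a_2-1/2$ and $a_2$ lie in the increasing regime and $\Gamma(a_2-1/2)<\Gamma(a_2)$. (One can also check the boundary: at $a_2=2$, $r=\Gamma(3/2)=\sqrt{\pi}/2\approx 0.886<1$, and $r$ is decreasing in $a_2$ beyond this point.)

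Putting the pieces together,
\begin{equation}
\sum_{\ell=1}^\infty E[|\theta_{j\ell}|] \;\le\; \sqrt{\tfrac{2}{\pi}}\,E[\phi^{-1/2}]\,\frac{\Gamma(a_1-1/2)}{\Gamma(a_1)}\sum_{\ell=1}^\infty r^{\ell-1} \;=\; \frac{C}{1-r} < \infty,
\end{equation}
yielding Assumption~\ref{assumption:absSum}. The main (really only) obstacle is the gamma-ratio inequality $\Gamma(a_2-1/2)<\Gamma(a_2)$, and the rest is routine conditional-expectation bookkeeping.
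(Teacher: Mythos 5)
Your proof is correct and follows the same overall strategy as the paper's: condition on $(\phi_{j\ell},\tau_\ell)$, use the half-normal first moment, factor the outer expectation by independence, and sum a geometric series in $\ell$. There is, however, one substantive difference, and it is in your favor. The paper computes $E[|\theta_{j\ell}|\mid\phi_{j\ell},\tau_\ell]$ as $\sqrt{2/\pi}\,\phi_{j\ell}^{-1}\tau_\ell^{-1}$, i.e.\ proportional to the conditional \emph{variance}, and consequently reduces everything to inverse-gamma means, obtaining the ratio $1/(a_2-1)$ and hence convergence for $a_2>2$. Since the prior (as used in the Gibbs sampler, where $\phi_{j\ell}\tau_\ell$ enters as a precision) has variance $\phi_{j\ell}^{-1}\tau_\ell^{-1}$, the correct conditional moment is $\sqrt{2/\pi}\,(\phi_{j\ell}\tau_\ell)^{-1/2}$, which is what you use. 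This changes the required moments to negative half-moments of gammas and the geometric ratio to $r=\Gamma(a_2-\tfrac12)/\Gamma(a_2)$, so you need the extra observation that $\Gamma$ is increasing on $[3/2,\infty)$ to get $r<1$; you handle this correctly, and your version in fact converges under a weaker condition than $a_2>2$ (the paper's version is the one that genuinely needs $a_2>2$). The bookkeeping with the tower property, the independence of $\phi_{j\ell}$ from the $\delta_h$, and the finiteness requirements $a_1>1/2$ and $a_2-\tfrac12>0$ are all in order.
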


It is also of interest to analyze the moments associated with the proposed prior.  As detailed in the Appendix, the first moment can be derived based on the implied inverse gamma prior on the $\sigma_j^2$ combined with the fact that $\Theta\xi(x)\xi(x)'\Theta'$ is marginally Wishart distributed at every location $x$, with the prior on $\Theta$ specified in Equation~\eqref{eqn:shrinkage}.
\begin{lemma}
	Let $\mu_{\sigma}$ denote the mean of $\sigma_j^2$, $j=1,\dots,p$. Then, 
\begin{align}
	E[\Sigma(x)] = \mbox{diag}\left(k\sum_{\ell}\phi_{1\ell}^{-1}\tau_{\ell}^{-1} + \mu_{\sigma},\dots,k\sum_{\ell}\phi_{p\ell}^{-1}\tau_{\ell}^{-1} + \mu_{\sigma}\right).
\end{align}
\label{lemma:firstMoment}
\end{lemma}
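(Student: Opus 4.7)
The plan is to apply linearity of expectation to $\Sigma(x) = \Theta\xi(x)\xi(x)'\Theta' + \Sigma_0$ and evaluate the two summands separately, exploiting the prior independence of $\xi(\cdot)$, $\Theta$, and $\Sigma_0$. The noise piece is immediate: since $\sigma_j^{-2}\sim\mbox{Ga}(a_\sigma,b_\sigma)$ independently, $E[\Sigma_0]=\mu_\sigma I_p$.

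For the factor piece, I would first note that the squared-exponential kernel satisfies $c(x,x)=1$, so for each fixed $x$ the entries $\xi_{\ell s}(x)$ are i.i.d.\ $\mathcal{N}(0,1)$ across $(\ell,s)$. Consequently $\xi(x)\xi(x)'$ is marginally Wishart with $k$ degrees of freedom and scale $I_L$, giving $E[\xi(x)\xi(x)']=kI_L$. Using independence of $\Theta$ and $\xi$ and conditioning on $\Theta$ via the tower rule then produces $E[\Theta\xi(x)\xi(x)'\Theta']=kE[\Theta\Theta']$.

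It remains to compute $E[\Theta\Theta']$. Writing $(\Theta\Theta')_{ij}=\sum_\ell \theta_{i\ell}\theta_{j\ell}$ and conditioning on the shrinkage parameters $(\phi,\tau)$, the $\theta_{i\ell}$ are independent zero-mean Gaussians, so cross-row terms have mean zero ($i\neq j$) while each $(\Theta\Theta')_{jj}$ contributes $\sum_\ell \phi_{j\ell}^{-1}\tau_\ell^{-1}$. Adding the noise contribution and multiplying by $k$ gives the stated diagonal matrix, read as the conditional expectation given $(\phi,\tau)$.

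The main technical obstacle is justifying the interchange of expectation and the sum over $\ell$ in the large-$L$ regime. This is where Assumption~\ref{assumption:absSum}, verified in Lemma~\ref{lemma:absSum} under $a_2>2$, becomes essential: the same calculation that bounds $\sum_\ell E[|\theta_{j\ell}|]$ also delivers $\sum_\ell E[\theta_{j\ell}^2]<\infty$, because $E[\tau_\ell^{-1}]=(a_1-1)^{-1}(a_2-1)^{-(\ell-1)}$ decays geometrically when $a_2>2$ and $E[\phi_{j\ell}^{-1}]$ is finite for $\phi_{j\ell}\sim\mbox{Ga}(3/2,3/2)$. Monotone convergence (equivalently Fubini) then legitimises swapping the sum with the expectation and guarantees that the displayed series converges.
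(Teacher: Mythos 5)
Your proposal is correct and follows essentially the same route as the paper: decompose $\Sigma(x)$ into the factor and noise pieces, use the (conditionally) Wishart mean $kE[\Theta\Theta']$ for the factor term, and compute $E[\Theta\Theta']_{ij}=\delta_{ij}\sum_{\ell}\phi_{i\ell}^{-1}\tau_{\ell}^{-1}$ given the shrinkage parameters. Your added remark on justifying the interchange of the sum over $\ell$ with the expectation via $a_2>2$ is a sensible refinement of a step the paper leaves implicit, but it does not change the argument.
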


Since our goal is to develop a covariance regression model, it is natural to consider the correlation induced between an element of the covariance matrix at different predictor locations $x$ and $x'$.
\begin{lemma}
	Let $\sigma_{\sigma}^2$ denote the variance of $\sigma_j^2$, $j=1,\dots,p$. Then, 
\begin{align}
	\mbox{cov}(\Sigma_{ij}(x),\Sigma_{ij}(x')) = \left\{ 
		\begin{array}{ll}
			kc(x,x')\left(5\sum_{\ell}\phi_{i\ell}^{-2}\tau_{\ell}^{-2} + (\sum_{\ell}\phi_{i\ell}^{-1}\tau_{\ell}^{-1})^2\right) + \sigma_{\sigma}^2 & i=j,\\
			kc(x,x')\left(\sum_{\ell}\phi_{i\ell}^{-1}\phi_{j\ell}^{-1}\tau_{\ell}^{-2} + \sum_{\ell}\phi_{i\ell}^{-1}\tau_{\ell}^{-1}\sum_{\ell'}\phi_{j\ell'}^{-1}\tau_{\ell'}^{-1}\right) & i\neq j.
		\end{array}\right.
		\label{eqn:covSigma_ij}
\end{align}
For any two elements $\Sigma_{ij}(x)$ and $\Sigma_{uv}(x')$ with $i\neq u$ or $j \neq v$,
\begin{align}
	\mbox{cov}(\Sigma_{ij}(x),\Sigma_{uv}(x')) = 0.
	\label{eqn:covSigma_uv}
\end{align}
\label{lemma:secondMoment}
\end{lemma}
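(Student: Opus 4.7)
\textbf{Proof plan for Lemma \ref{lemma:secondMoment}.}

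First I would decompose $\Sigma_{ij}(x) = T_{ij}(x) + \delta_{ij}\sigma_i^2$, where
$T_{ij}(x) = [\Theta\xi(x)\xi(x)'\Theta']_{ij} = \sum_{\ell,m=1}^L\sum_{s=1}^k \theta_{i\ell}\theta_{jm}\xi_{\ell s}(x)\xi_{ms}(x).$
Since $\Sigma_0$ is independent of $(\Theta,\xi)$ and its entries $\sigma_1^2,\ldots,\sigma_p^2$ are independent,
$\mathrm{cov}(\Sigma_{ij}(x),\Sigma_{uv}(x')) = \mathrm{cov}(T_{ij}(x),T_{uv}(x')) + \delta_{ij}\delta_{uv}\delta_{iu}\sigma_\sigma^2,$
which already accounts for the $\sigma_\sigma^2$ in the $i=j$ branch of \eqref{eqn:covSigma_ij} and contributes nothing off the common diagonal. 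Everything that follows is about the covariance of the $T_{ij}$.

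Next, I would use the independence of $\Theta$ and $\xi$ (and of $\xi_{\ell s}$ across different $(\ell,s)$) to write
$E[T_{ij}(x)T_{uv}(x')] = \sum_{\ell,m,\ell',m',s,s'} E[\theta_{i\ell}\theta_{jm}\theta_{u\ell'}\theta_{vm'}]\,E[\xi_{\ell s}(x)\xi_{ms}(x)\xi_{\ell's'}(x')\xi_{m's'}(x')].$
Applying Isserlis' theorem to the $\xi$-factor, with $c(x,x)=1$ and $E[\xi_{\ell s}(x)\xi_{\ell's'}(x')]=\delta_{\ell\ell'}\delta_{ss'}c(x,x')$, yields the three-pairing expansion
$\delta_{\ell m}\delta_{\ell'm'} + \delta_{ss'}c(x,x')^2\bigl(\delta_{\ell\ell'}\delta_{mm'}+\delta_{\ell m'}\delta_{m\ell'}\bigr).$
Conditionally on $(\phi,\tau)$ the $\theta_{a\alpha}$ are independent Gaussians with $E[\theta_{a\alpha}^2]=\phi_{a\alpha}^{-1}\tau_\alpha^{-1}$, so Isserlis applied to the $\theta$-factor gives three analogous pairings with Kronecker-delta matchings of $\{(i,\ell),(j,m),(u,\ell'),(v,m')\}$.

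Then I would combine the two expansions term by term. The product of the first $\xi$-pairing with the first $\theta$-pairing gives precisely $E[T_{ij}(x)]E[T_{uv}(x')]$ (using Lemma \ref{lemma:firstMoment} for the means), which is subtracted when forming the covariance. The remaining eight cross-terms collapse after summation over $\ell,m,\ell',m',s,s'$: the double Kronecker structure forces index identifications that either (a) match $(i,j)=(u,v)$ or $(i,j)=(v,u)$ and reduce the six-fold sum over $(\ell,m,\ell',m',s,s')$ to single sums such as $\sum_\ell \phi_{i\ell}^{-1}\phi_{j\ell}^{-1}\tau_\ell^{-2}$ and $\sum_\ell\phi_{i\ell}^{-2}\tau_\ell^{-2}$, each multiplied by $kc(x,x')$ from the single $\delta_{ss'}$-induced factor of $k$, or (b) vanish if the index patterns cannot be matched. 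Casework on whether $i=j$, whether $u=v$, and whether $(i,j)\in\{(u,v),(v,u)\}$ then produces the $5\sum\phi_{i\ell}^{-2}\tau_\ell^{-2}+(\sum\phi_{i\ell}^{-1}\tau_\ell^{-1})^2$ coefficient in the diagonal case (where multiple pairings coincide), the $\sum\phi_{i\ell}^{-1}\phi_{j\ell}^{-1}\tau_\ell^{-2}+\sum\phi_{i\ell}^{-1}\tau_\ell^{-1}\sum_{\ell'}\phi_{j\ell'}^{-1}\tau_{\ell'}^{-1}$ coefficient in the off-diagonal case, and zero whenever $(i,j)$ and $(u,v)$ differ in a way that no Isserlis pairing can match.

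The main obstacle is the double Isserlis bookkeeping: there are $3\times 3=9$ pairing combinations, and one must carefully identify which collapse to the mean product, which produce the $c(x,x')^2$ terms and which produce terms without $c(x,x')^2$, and then subsume symmetric Isserlis pairings (e.g.\ the two symmetric $\xi$-pairings each contribute identically by relabeling $\ell\leftrightarrow m$) to obtain the stated coefficients $5$ and $1$. This is the step where I expect the most care is needed; once the bookkeeping is in place, every sum reduces by the Kronecker deltas and the identity $\sum_{s,s'}\delta_{ss'}=k$.
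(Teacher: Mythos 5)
Your overall organization differs from the paper's: you expand $E[T_{ij}(x)T_{uv}(x')]$ in one shot via a double Isserlis expansion over both $\xi$ and $\theta$, whereas the paper conditions on $\Theta$, computes the fourth Gaussian-process moments of $g_{in}(x)=\sum_\ell\theta_{i\ell}\xi_{\ell n}(x)$ to get $\mbox{cov}(\Sigma_{ij}(x),\Sigma_{uv}(x')\mid\Theta)$, and only then averages over $\Theta$. Both routes use the same ingredients, but your plan has a step that fails as stated: the claim that, after removing the product of the two ``first'' pairings (which gives $E[T_{ij}(x)]E[T_{uv}(x')]$), the remaining eight cross-terms all collapse to single sums carrying a factor of $c(x,x')$. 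They do not. The product of the first $\xi$-pairing $\delta_{\ell m}\delta_{\ell'm'}$ (which carries no $\delta_{ss'}$ and no $c$) with the second and third $\theta$-pairings survives the Kronecker contractions and produces terms such as $k^{2}\sum_{\ell}\phi_{i\ell}^{-1}\phi_{j\ell}^{-1}\tau_{\ell}^{-2}$ for $i=u$, $j=v$, $i\neq j$ (and $2k^{2}\sum_{\ell}\phi_{i\ell}^{-2}\tau_{\ell}^{-2}$ on the diagonal). These are $O(k^2)$, carry no $c(x,x')$ dependence, and do not appear in \eqref{eqn:covSigma_ij}. They are exactly the term $\mbox{cov}\bigl(E[\Sigma_{ij}(x)\mid\Theta],\,E[\Sigma_{uv}(x')\mid\Theta]\bigr)=k^{2}\mbox{cov}\bigl((\Theta\Theta')_{ij},(\Theta\Theta')_{uv}\bigr)$ from the law of total covariance. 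The paper's proof never encounters them because it reports only $E_{\Theta}\bigl[\mbox{cov}(\cdot,\cdot\mid\Theta)\bigr]$; your fully marginal computation necessarily does encounter them, so you cannot arrive at \eqref{eqn:covSigma_ij} by the route you describe without either dropping these terms or explicitly restricting to the $\Theta$-conditional covariance averaged over $\Theta$.

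A second, smaller inconsistency: your Isserlis expansion correctly produces a factor $c(x,x')^{2}$ on the surviving $\delta_{ss'}$-pairings (as does the paper's own appendix computation, which ends with $kc^{2}(x,x')\{\cdots\}$), yet your concluding sentence asserts the prefactor $kc(x,x')$ of the lemma statement. You should either carry the square through or flag the mismatch; as written your plan silently switches from $c^2$ to $c$ at the last step.
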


We can thus conclude that in the limit as the distance between the predictors tends towards infinity, the correlation decays at a rate defined by the Gaussian process kernel $c(x,x')$ with a limit:
\begin{align}
	\lim_{||x-x'||\rightarrow \infty}\mbox{cov}(\Sigma_{ij}(x),\Sigma_{uv}(x')) = \left\{ 
		\begin{array}{cc}
		\sigma_{\sigma}^2 & i=j=u=v,\\
			0 & \mbox{otherwise}.
		\end{array}\right.
\end{align}
It is perhaps counterintuitive that the correlation between $\Sigma_{ii}(x)$ and $\Sigma_{ii}(x')$ does not go to zero as the distance between the predictors $x$ and $x'$ tends to infinity.  However, although the correlation between $\xi(x)$ and $\xi(x')$ goes to zero, the diagonal matrix $\Sigma_0$ does not depend on $x$ or $x'$ and thus retains the correlation between the diagonal elements of $\Sigma(x)$ and $\Sigma(x')$.

Equation~\eqref{eqn:covSigma_ij} implies that the autocorrelation $ACF(x) = \mbox{corr}(\Sigma_{ij}(0),\Sigma_{ij}(x))$ is simply specified by $c(0,x)$.  When we choose a Gaussian process kernel $c(x,x') = \exp(-\kappa||x-x'||_2^2)$, we have
\begin{align}
ACF(x) = \exp(-\kappa ||x||_2^2).
\label{eqn:ACF}
\end{align}
Thus, we see that the length-scale parameter $\kappa$ directly determines the shape of the autocorrelation function.

Finally, one can analyze the stationarity properties of the proposed covariance regression prior.
\begin{lemma}
	Our proposed covariance regression model defines a first-order stationary process in that
	%
		$\PSigma(\Sigma(x)) = \PSigma(\Sigma(x')), \, \forall x,x' \in \mathcal{X}$.
	%
	Furthermore, the process is wide sense stationary: $\mbox{cov}(\Sigma_{ij}(x),\Sigma_{uv}(x'))$ solely depends upon $||x-x'||$.
	\begin{proof}
		The first-order stationarity follows immediately from the stationarity of the Gaussian process dictionary elements $\xi_{\ell k}(\cdot)$ and recalling that $\Sigma(x) = \Theta\xi(x)\xi(x)'\Theta' + \Sigma_0$.  Assuming a Gaussian process kernel $c(x,x')$ that solely depends upon the distance between $x$ and $x'$ (as in Section~\ref{sec:priors}), Equations~\eqref{eqn:covSigma_ij}-~\eqref{eqn:covSigma_uv} imply that the defined process is wide sense stationary.
	\end{proof}
	\label{lemma:stationarity}
\end{lemma}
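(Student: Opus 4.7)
The plan is to reduce both claims to two ingredients already established or immediate from the construction: (i) the Gaussian process dictionary elements $\xi_{\ell k}(\cdot)$ are strictly stationary because the chosen kernel $c(x,x') = \exp(-\kappa\|x-x'\|_2^2)$ depends only on $\|x-x'\|_2$, and (ii) the priors $\PTheta$ and $\PSigmaO$ on $\Theta$ and $\Sigma_0$ do not depend on the predictor at all. With these two facts in hand, each assertion becomes a short distributional bookkeeping argument using the explicit representation $\Sigma(x) = \Theta\,\xi(x)\xi(x)'\,\Theta' + \Sigma_0$.

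For the first-order stationarity claim, I would fix arbitrary $x,x' \in \mathcal{X}$ and observe that, because $c$ is translation invariant, each coordinate process $\xi_{\ell k}(\cdot)$ is strictly stationary, so the $L\times k$ random matrix $\xi(x)$ has the same joint law as $\xi(x')$. Since $(\Theta,\Sigma_0)$ is drawn from $\PTheta \otimes \PSigmaO$ independently of $\xi(\cdot)$ and with no dependence on location, the triple $(\Theta,\xi(x),\Sigma_0)$ has the same joint distribution as $(\Theta,\xi(x'),\Sigma_0)$. The map $(\Theta,\xi,\Sigma_0) \mapsto \Theta\xi\xi'\Theta' + \Sigma_0$ is a fixed measurable function that does not itself involve $x$, so its pushforward law is identical at $x$ and $x'$; this is precisely $\PSigma(\Sigma(x)) = \PSigma(\Sigma(x'))$.

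For wide sense stationarity, the cleanest route is to cite Lemma~\ref{lemma:secondMoment} directly. Every case of that lemma expresses $\mbox{cov}(\Sigma_{ij}(x),\Sigma_{uv}(x'))$ as a linear combination of constants (independent of the predictors) multiplied by $c(x,x')$, together with the purely predictor-free term $\sigma_\sigma^2$ in the diagonal case $i=j=u=v$; the off-block case in Equation~\eqref{eqn:covSigma_uv} is identically zero. Under the standing assumption that $c(x,x')$ is a function of $\|x-x'\|_2$ alone, each such covariance is a function of $\|x-x'\|_2$ alone, establishing wide sense stationarity. Mean-constancy required for the wide sense notion is already delivered by Lemma~\ref{lemma:firstMoment}, whose expression for $E[\Sigma(x)]$ carries no $x$ dependence.

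There is really no main obstacle here: the argument is a straightforward consequence of previously stated results, and the only care needed is to verify that no step in the construction of $\Sigma(x)$ from $(\Theta,\xi(x),\Sigma_0)$ injects an explicit predictor dependence beyond what sits inside $\xi(x)$ and the kernel $c$. That verification is transparent from Equation~\eqref{eq:covyx}.
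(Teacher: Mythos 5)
Your proposal is correct and follows essentially the same route as the paper: first-order stationarity from the stationarity of the Gaussian process dictionary elements together with the predictor-independence of $(\Theta,\Sigma_0)$, and wide sense stationarity by reading off Equations~\eqref{eqn:covSigma_ij}--\eqref{eqn:covSigma_uv} under a distance-dependent kernel. You simply spell out the distributional bookkeeping (and the mean-constancy via Lemma~\ref{lemma:firstMoment}) that the paper leaves implicit.
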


\section{Posterior Computation}
\label{sec:comp}

\subsection{Gibbs Sampling with a Fixed Truncation Level}
\label{sec:Gibbs}

Based on a fixed truncation level $L^*$ and a latent factor dimension $k^*$, we propose a Gibbs sampler for posterior computation.  The derivation of Step 1 is provided in the Appendix. 

\paragraph{Step 1}

Update each dictionary function $\xi_{\ell m}(\cdot)$ from the conditional posterior given $\{y_i\}$, $\Theta$, $\{\eta_i\}$, $\Sigma_0$. We can rewrite the observation model for the $j$th component of the $i$th response as

\begin{align}
	y_{ij} = \sum_{m=1}^{k^*} \eta_{im} \sum_{\ell=1}^{L^*} \theta_{j\ell}\xi_{\ell m}(x_i) + \epsilon_{ij}.
\end{align}
Conditioning on $\xi(\cdot)^{-\ell m} = \{\xi_{rs}(\cdot), r\neq\ell, s\neq m\}$, our Gaussian process prior on the dictionary functions implies the following conditional posterior
\begin{align}
	\begin{bmatrix} \xi_{\ell m}(x_1) \\ \xi_{\ell m}(x_2) \\ \vdots \\ \xi_{\ell m}(x_n) \end{bmatrix} \mid \{y_i\},\Theta,\eta,\xi(\cdot)^{-\ell m},\Sigma_0 &\sim 
	\mathcal{N}_n\left(\tilde{\Sigma}_{\xi}\begin{bmatrix} \eta_{1m}\sum_{j=1}^p \theta_{j\ell}\sigma_j^{-2}\tilde{y}_{1j} \\ \vdots \\ \eta_{nm}\sum_{j=1}^p \theta_{j\ell}\sigma_j^{-2}\tilde{y}_{nj} \end{bmatrix},\tilde{\Sigma}_{\xi} \right),
\end{align} 
where $\tilde{y}_{ij} = y_{ij} - \sum_{(r,s)\neq(\ell,m)} \theta_{jr}\xi_{rs}(x_i)$ and, taking $K$ to be the Gaussian process covariance matrix with $K_{ij} = c(x_i,x_j)$,
\begin{align}
	\tilde{\Sigma}_{\xi}^{-1} &= K^{-1} + \mbox{diag}\left(\eta_{1m}^2\sum_{j=1}^p \theta_{j\ell}^2\sigma_j^{-2},\dots,\eta_{nm}^2\sum_{j=1}^p\theta_{j\ell}^2\sigma_j^{-2}\right).	
\end{align}

\paragraph{Step 2}

Next we sample each latent factor $\eta_i$ given $y_i$, $\xi(\cdot)$, $\Theta$, $\Sigma_0$. Recalling Eq.~\eqref{eq:base} and the fact that $\eta_i \sim \mathcal{N}_{k^*}(0,I_{k^*})$,
\begin{multline}
	\eta_i \mid y_i,\Theta,\xi(x_i),\Sigma_0\\ \sim \mathcal{N}_{k^*}\left(\left(I + \xi(x_i)'\Theta'\Sigma_0^{-1}\Theta\xi(x_i) \right)^{-1}\xi(x_i)'\Theta'\Sigma_0^{-1}y_i,\left(I + \xi(x_i)'\Theta'\Sigma_0^{-1}\Theta\xi(x_i) \right)^{-1} \right).
	\label{eqn:etaPost}
\end{multline}

\paragraph{Step 3}

Let $\theta_{j\cdot} = \begin{bmatrix} \theta_{j1} & \dots & \theta_{jL^*} \end{bmatrix}$. Recalling the $\mbox{Ga}(a_\sigma,b_\sigma)$ prior on each precision parameter $\sigma_j^{-2}$ associated with the diagonal noise covariance matrix $\Sigma_0$, standard conjugate posterior analysis yields the posterior
\begin{align}
	\sigma_j^{-2} \mid \{y_i\},\Theta,\eta, \xi(\cdot) \sim \mbox{Ga}\left(a_\sigma + \frac{n}{2},b_\sigma + \frac{1}{2}\sum_{i=1}^n (y_{ij} - \theta_{j\cdot}\xi(x_i)\eta_i)^2\right).
\end{align}

\paragraph{Step 4}

Conditioned on the hyperparameters $\phi$ and $\tau$, the Gaussian prior on the elements of $\Theta$ specified in Eq.~\eqref{eqn:shrinkage} combined with the likelihood defined by Eq.~\eqref{eq:base} imply the following posterior for each row of $\Theta$:
\begin{align}
	\theta_{j\cdot} \mid \{y_i\},\eta,\xi(\cdot),\phi,\tau \sim \mathcal{N}_{L^*}\left( \tilde{\Sigma}_{\theta}
	\tilde{\eta}'\sigma_j^{-2}\begin{bmatrix} y_{1j} \\ \vdots \\ y_{nj} \end{bmatrix},
	\tilde{\Sigma}_{\theta} \right),
\end{align}
where $\tilde{\eta}' = \begin{bmatrix} \xi(x_1)\eta_1 & \xi(x_2)\eta_2 & \dots & \xi(x_n)\eta_n \end{bmatrix}$ and
\begin{align}
	\tilde{\Sigma}_{\theta}^{-1} = \sigma_j^{-2}\tilde{\eta}'\tilde{\eta} + \mbox{diag}(\phi_{j1}\tau_1,\dots,\phi_{jL^*}\tau_{L^*}).
\end{align}

\paragraph{Step 5}

Examining Eq.~\eqref{eqn:shrinkage} and using standard conjugate analysis results in the following posterior for each local shrinkage hyperparameter $\phi_{j \ell}$ given $\theta_{j \ell}$ and $\tau_{\ell}$:
\begin{align}
	\phi_{j \ell} \mid \theta_{j \ell},\tau_{\ell} \sim \mbox{Ga}\left(2, \frac{3+\tau_\ell\theta_{j \ell}^2}{2}\right).
\end{align}

\paragraph{Step 6}

As in~\citet{Bhattacharya:10}, the global shrinkage hyperparameters are updated as
\begin{align}
	\begin{aligned}
		\delta_1 \mid \Theta,\tau^{(-1)} &\sim \mbox{Ga}\left(a_1 + \frac{pL^*}{2},1 + \frac{1}{2}\sum_{\ell=1}^{L^*} \tau_\ell^{(-1)}\sum_{j=1}^p\phi_{j\ell}\theta_{j\ell}^2\right)\\
		\delta_h \mid \Theta,\tau^{(-h)} &\sim \mbox{Ga}\left(a_2 + \frac{p(L^*-h+1)}{2},1 + \frac{1}{2}\sum_{\ell=1}^{L^*} \tau_\ell^{(-h)}\sum_{j=1}^p\phi_{j\ell}\theta_{j\ell}^2\right),
	\end{aligned} 
\end{align}
where $\tau_{\ell}^{(-h)} = \prod_{t=1,t\neq h}^\ell \delta_t$ for $h=1,\dots,p$.

\subsection{Incorporating nonparametric mean $\mu(x)$}

If one wishes to incorporate a latent factor regression model such as in Eq.~\eqref{eq:latreg} to induce a predictor-dependent mean $\mu(x)$, the MCMC sampling is modified as follows.  Steps 1, 3, 4, 5, and 6 are exactly as before.  Now, however, the sampling of $\eta_i$ of Step 2 is replaced by a block sampling of $\psi(x_i)$ and $\nu_i$.  Specifically, let $\Omega_i = \Theta\xi(x_i)$.  We can rewrite the observation model as $y_i = \Omega_i\psi(x_i) + \Omega_i\nu_i + \epsilon_i$.  Marginalizing out $\nu_i$, $y_i = \Omega_i\psi(x_i) + \omega_i$ with $\omega_i \sim \mathcal{N}(0,\tilde{\Sigma}_i \triangleq \Omega_i\Omega_i' + \Sigma_0)$. Assuming nonparametric mean vector components $\psi_{\ell}(\cdot) \sim \mbox{GP}(0,c)$, the posterior of $\psi_{\ell}(\cdot)$ follows analogously to that of $\xi(\cdot)$ resulting in
\begin{align}
	\begin{bmatrix} \psi_{\ell}(x_1) \\ \psi_{\ell}(x_2) \\ \vdots \\ \psi_{\ell}(x_n) \end{bmatrix} \mid \{y_i\},\psi(\cdot)^{-\ell},\Theta,\eta,\xi(\cdot),\Sigma_0 &\sim 
	\mathcal{N}_n\left(\tilde{\Sigma}_{\psi}\begin{bmatrix} [\Omega_1]_{\cdot \ell}'\tilde{\Sigma}_1^{-1}\tilde{y}_{1}^{-\ell} \\ \vdots \\ [\Omega_n]_{\cdot \ell}'\tilde{\Sigma}_n^{-1}\tilde{y}_{n}^{-\ell} \end{bmatrix},\tilde{\Sigma}_{\psi} \right),
\end{align} 
where $\tilde{y}_{i}^{-\ell} = y_{i} - \sum_{(r \neq \ell)} [\Omega_i]_{\cdot r} \psi_r(x_i)$.  Once again taking $K$ to be the Gaussian process covariance matrix,
\begin{align}
	\tilde{\Sigma}_{\xi}^{-1} &= K^{-1} + \mbox{diag}\left([\Omega_1]_{\cdot \ell}'\tilde{\Sigma}_1^{-1}[\Omega_1]_{\cdot \ell},\dots,[\Omega_n]_{\cdot \ell}'\tilde{\Sigma}_n^{-1}[\Omega_n]_{\cdot \ell}\right).	
\end{align}

Conditioned on $\psi(x_i)$, we consider $\tilde{y}_i = y_i - \Omega_i\psi(x_i) = \Omega_i\nu_i + \epsilon_i$.  Then, using the fact that $\nu_i\sim \mathcal{N}(0,I_{k*})$,
\begin{multline}
	\nu_i \mid \tilde{y}_i,\psi(x_i),\Theta,\xi(x_i),\Sigma_0\\ \sim \mathcal{N}_{k^*}\left(\left(I + \xi(x_i)'\Theta'\Sigma_0^{-1}\Theta\xi(x_i) \right)^{-1}\xi(x_i)'\Theta'\Sigma_0^{-1}\tilde{y}_i,\left(I + \xi(x_i)'\Theta'\Sigma_0^{-1}\Theta\xi(x_i) \right)^{-1} \right).
\end{multline}

\subsection{Hyperparameter Sampling}
\label{sec:hyp}

One can also consider sampling the Gaussian process length-scale hyperparameter $\kappa$.  Due to the linear-Gaussianity of the proposed covariance regression model, we can analytically marginalize the latent Gaussian process random functions in considering the posterior of $\kappa$.  Once again taking $\mu(x)=0$ for simplicity, our posterior is based on marginalizing the Gaussian process random vectors $\xi_{\ell m} = [\xi_{\ell m}(x_1) \,\, \dots \,\, \xi_{\ell m}(x_n)]'$. Noting that 
\begin{align}
	\begin{bmatrix} y_1' & y_2' & \dots & y_n' \end{bmatrix}' = 
	\sum_{\ell m} \left[ \mbox{diag}(\eta_{\cdot m}) \otimes \theta_{\cdot \ell} \right] \xi_{\ell m} + \begin{bmatrix} \epsilon_1' & \epsilon_2' & \dots & \epsilon_n' \end{bmatrix}',
\end{align} 
and letting $K_{\kappa}$ denote the Gaussian process covariance matrix based on a length-scale $\kappa$,
\begin{align}
	\begin{bmatrix} y_1 \\ y_2 \\ \vdots \\ y_n \end{bmatrix} \mid \kappa,\Theta,\eta,\Sigma_0 \sim \mathcal{N}_{np}\left(\sum_{\ell,m} \left[\mbox{diag}(\eta_{\cdot m}) \otimes \theta_{\cdot \ell}\right] K_{\kappa} \left[ \mbox{diag}(\eta_{\cdot m}) \otimes \theta_{\cdot \ell} \right]' + \begin{bmatrix} \Sigma_0 & & & \\ & \Sigma_0 & & \\ & & \ddots & \\ & & & \Sigma_0 \end{bmatrix} \right). \label{eqn:Kc_like}
\end{align} 
We can then Gibbs sample $\kappa$ based on a fixed grid and prior $p(\kappa)$ on this grid. Note, however, that computation of the likelihood specified in Eq.~\eqref{eqn:Kc_like} requires evaluation of an $np$-dimensional Gaussian for each value $\kappa$ specified in the grid.  For large $p$ scenarios, or when there are many observations $y_i$, this may be computationally infeasible.  In such cases, a naive alternative is to iterate between sampling $\xi(\cdot)$ given $K_{\kappa}$ and $K_{\kappa}$ given $\xi(\cdot)$.  However, this can lead to extremely slow mixing.  Alternatively, one can consider employing the recent Gaussian process hyperparameter slice sampler of~\cite{AdamsMurray:10}.

In general, because of the quadratic mixing over Gaussian process dictionary elements, our model is relatively robust to the choice of the length-scale parameter and the computational burden imposed by sampling $\kappa$ is typically unwarranted.  Instead, one can pre-select a value for $\kappa$ using a data-driven heuristic, which leads to a quasi-empirical Bayes approach.  Recalling Equation~\eqref{eqn:ACF}, we have
\begin{align}
	-\log(ACF(x)) = \kappa ||x||_2^2.
	\label{eqn:logACF}
\end{align}
Thus, if one can devise a procedure for estimating the autocorrelation function from the data, one can set $\kappa$ accordingly.  We propose the following.
\begin{itemize}
\item[1.]  For a set of evenly spaced knots $x_k \in \mathcal{X}$, compute the sample covariance $\hat{\Sigma}(x_k)$ from a local bin of data $y_{k-k_0:k+k_0}$ with $k_0 > p/2$.
\item[2.]  Compute the Cholesky decomposition $C(x_k) = chol(\hat{\Sigma}(x_k))$.
\item[3.]  Fit a spline through the elements of the computed $C(x_k)$.  Denote the spline fit of the Cholesky by $\tilde{C}(x)$ for each $x\in\mathcal{X}$
\item[4.]  For $i=1,\dots,n$, compute a point-by-point estimate of $\Sigma(x_i)$ from the splines: $\Sigma(x_i) = \tilde{C}(x_i)\tilde{C}(x_i)'$.
\item[5.]  Compute the autocorrelation function of each element $\Sigma_{ij}(x)$ of this kernel-estimated $\Sigma(x)$.
\item[6.]  According to Equation~\eqref{eqn:logACF}, choose $\kappa$ to best fit the most correlated $\Sigma_{ij}(x)$ (since less correlated components can be captured via weightings of dictionary elements with stronger correlation.)
\end{itemize}

\subsection{Computational Considerations}
\label{sec:compIssues}

In choosing a truncation level $L^*$ and latent factor dimension $k^*$, there are a number of computational considerations.  The Gibbs sampler outlined in~Section~\ref{sec:Gibbs} involves a large number of simulations from Gaussian distributions, each of which requires the inversion of an $m$-dimensional covariance matrix, with $m$ the dimension of the Gaussian.  For large $m$, this represents a large computational burden as the operation is, in general, $O(m^3)$.  The computations required at each stage of the Gibbs sampler are summarized in Table~\ref{table:computations}.  From this table we see that depending on the number of observations $n$ and the dimension of these observations $p$, various combinations of $L^*$ and $k^*$ lead to more or less efficient computations.

\begin{table}
\caption{\label{table:computations}Computations required at each Gibbs sampling step.}
\centering
\fbox{%
\begin{tabular}{c|c}
Gibbs Update & Computation\\
\hline
Step 1 & $L^* \times k^*$ draws from an $n$ dimensional Gaussian\\
Step 2 & $n$ draws from a $k^*$ dimensional Gaussian\\
Step 3 & $p$ draws from a gamma distribution\\
Step 4 & $p$ draws from an $L^*$ dimensional Gaussian\\
Step 5 & $p \times L^*$ draws from a gamma distribution\\
Step 6 & $L^*$ draws from a gamma distribution\\
\end{tabular}}
\end{table}

In~\citet{Bhattacharya:10}, a method for adaptively choosing the number of factors in a non-predictor dependent latent factor model was proposed.  One could directly apply such a methodology for adaptively selecting $L^*$. To handle the choice of $k^*$, one could consider an augmented formulation in which 
\begin{align}
	\Lambda(x) = \Theta\xi(x)\Gamma,
\end{align}
where $\Gamma = \mbox{diag}(\gamma_1,\dots,\gamma_k)$ is a diagonal matrix of parameters that shrink the columns of $\xi(x)$ towards zero.  One can take these shrinkage parameters to be distributed as
\begin{align}
	\begin{aligned}
	\gamma_i \sim \mathcal{N}(0,\omega_i^{-1}), &\quad \omega_i = \prod_{h=1}^i\zeta_h\\
	\zeta_1 \sim \mbox{Ga}(a_3,1), &\quad \zeta_h \sim \mbox{Ga}(a_4,1) \quad h=2,\dots,k.
	\end{aligned}
\end{align}
For $a_4>1$, such a model shrinks the $\gamma_i$ values towards zero for large indices $i$ just as in the shrinkage prior on $\Theta$. The $\gamma_i$ close to zero provide insight into redundant latent factor dimensions.  Computations in this augmented model are a straightforward extension of the Gibbs sampler presented in Section~\ref{sec:Gibbs}.  Based on the inferred values of the latent $\gamma_i$ parameters, one can design an adaptive strategy similar to that for $L^*$. 

Note that in Step 1, the $n$-dimensional inverse covariance matrix $\tilde{\Sigma}_{\xi}^{-1}$ which needs to be inverted in order to sample $\xi_{\ell m}$ is a composition of a diagonal matrix and an inverse covariance matrix $K^{-1}$ that has entries that tend towards zero as $||x_i-x_j||_2$ becomes large (i.e., for distant pairs of predictors.)  That is, $K^{-1}$ (and thus $\tilde{\Sigma}_{\xi}^{-1}$) is nearly band-limited, with a bandwidth dependent upon the Gaussian process parameter $\kappa$.  Inverting a given $n \times n$ band-limited matrix with bandwidth $d << n$ can be efficiently computed in $O(m^2d)$~\citep{KavcicMoura:00} (versus the naive $O(m^3)$).  Issues related to tapering the elements of $K^{-1}$ to zero while maintaining positive-semidefiniteness are discussed in~\cite{ZhangDu:08}. 

\section{Simulation Example}
\label{sec:sim}
In the following simulation examples, we aim to analyze the performance of the proposed Bayesian nonparametric covariance regression model relative to competing alternatives in terms of both covariance estimation and predictive performance.  We initially consider the case in which $\Sigma(x)$ is generated from the assumed nonparametric Bayes model in Section~\ref{sec:qualSim} and~\ref{sec:pred}, while in Section~\ref{sec:modelMismatch} we simulate from a parametric model and compare to a Wishart matrix discounting method~\citep{PradoWest} over a set of replicates. 
%
%
\subsection{Estimation Performance}
\label{sec:qualSim}
We simulated a dataset from the model as follows.  The set of predictors is a discrete set $\mathcal{X}=\{1,\dots,100\}$, with a 10-dimensional observation $y_i$ generated for each $x_i \in \mathcal{X}$.  The generating mechanism was based on weightings of a latent $5 \times 4$ dimensional matrix $\xi(\cdot)$ of Gaussian process dictionary functions (i.e, $L=5$, $k=4$), with length-scale $\kappa=10$ and an additional nugget effect adding $1e^{-5}I_n$ to $K$.  Here, we first scale the predictor space to $(0,1]$.  The additional latent mean dictionary elements $\psi(\cdot)$ were similarly distributed.  The weights $\Theta$ were simulated as specified in Eq.~\eqref{eqn:shrinkage} choosing $a_1=a_2=10$.  The precision parameters $\sigma_j^{-2}$ were each drawn independently from a $\mbox{Ga}(1,0.1)$ distribution with mean $10$.  Figure~\ref{fig:truecov} displays the resulting values of the elements of $\mu(x)$ and $\Sigma(x)$. 
\begin{figure}[t!] \centering 
	\begin{tabular}{cc}
		\includegraphics[width = 1.75in]{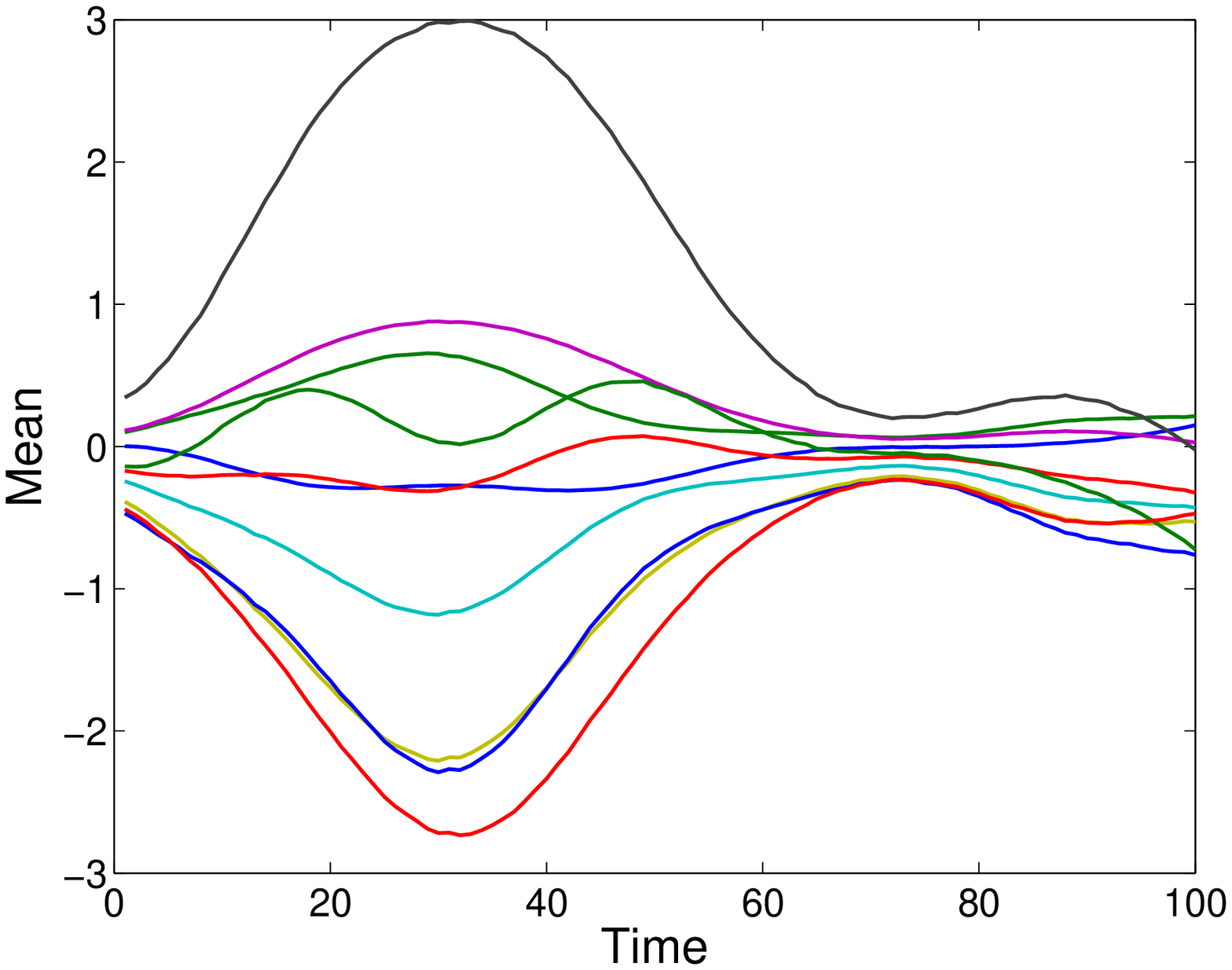} & \includegraphics[width = 1.75in]{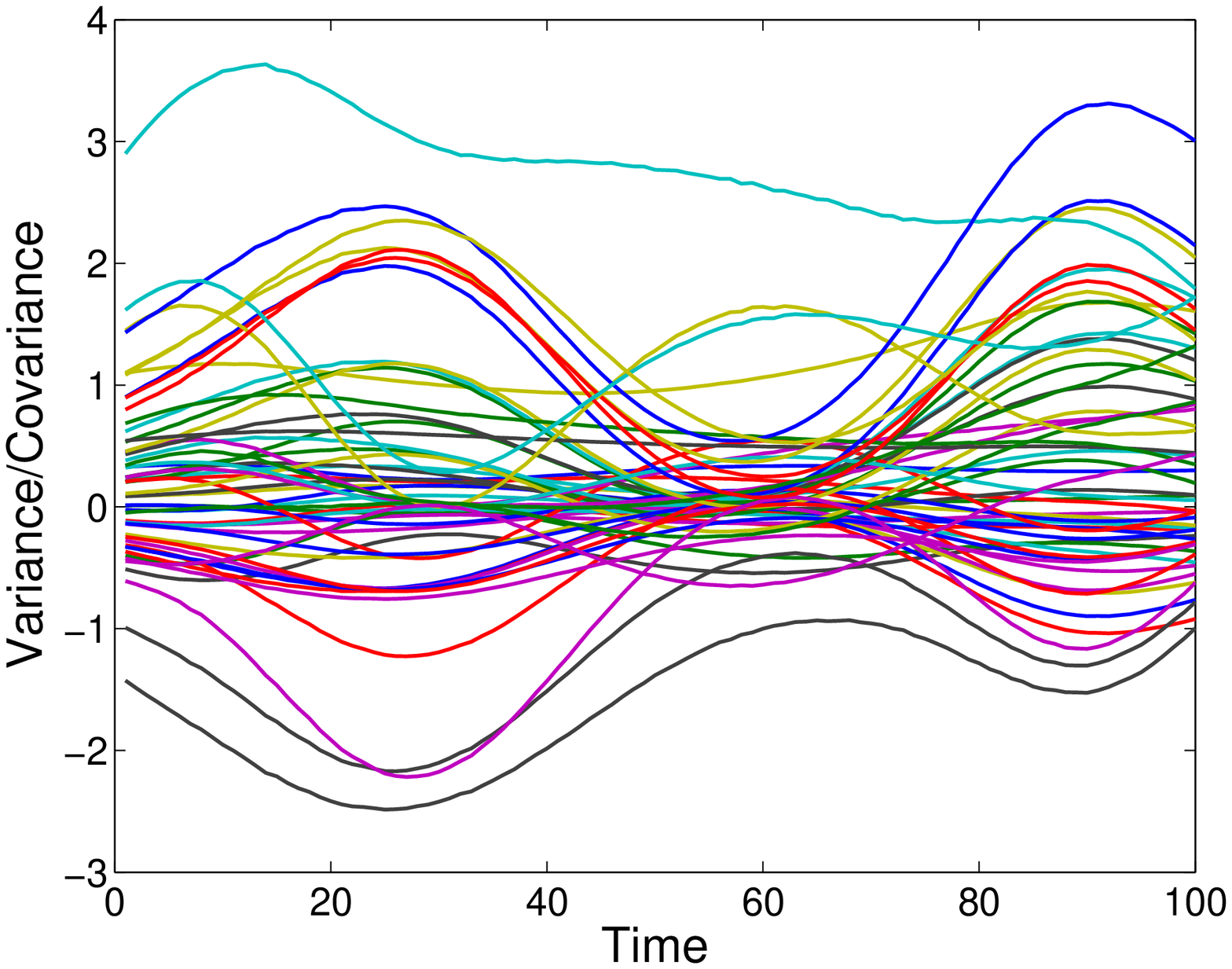}\\
		(a) & (b)
	\end{tabular}
	\caption{Plot of each component of the (a) true mean vector $\mu(x)$ and (b) true covariance matrix $\Sigma(x)$ over the predictor space $\mathcal{X}=\{1,\ldots,100\}$, taken here to represent a time index.} \label{fig:truecov} \postcap \vspace{0.1in}
\end{figure}
\begin{figure}[t!]
	\centering
	\begin{tabular}{ccc}
		\hspace{-0.2in}
		\includegraphics[width = 1.75in]{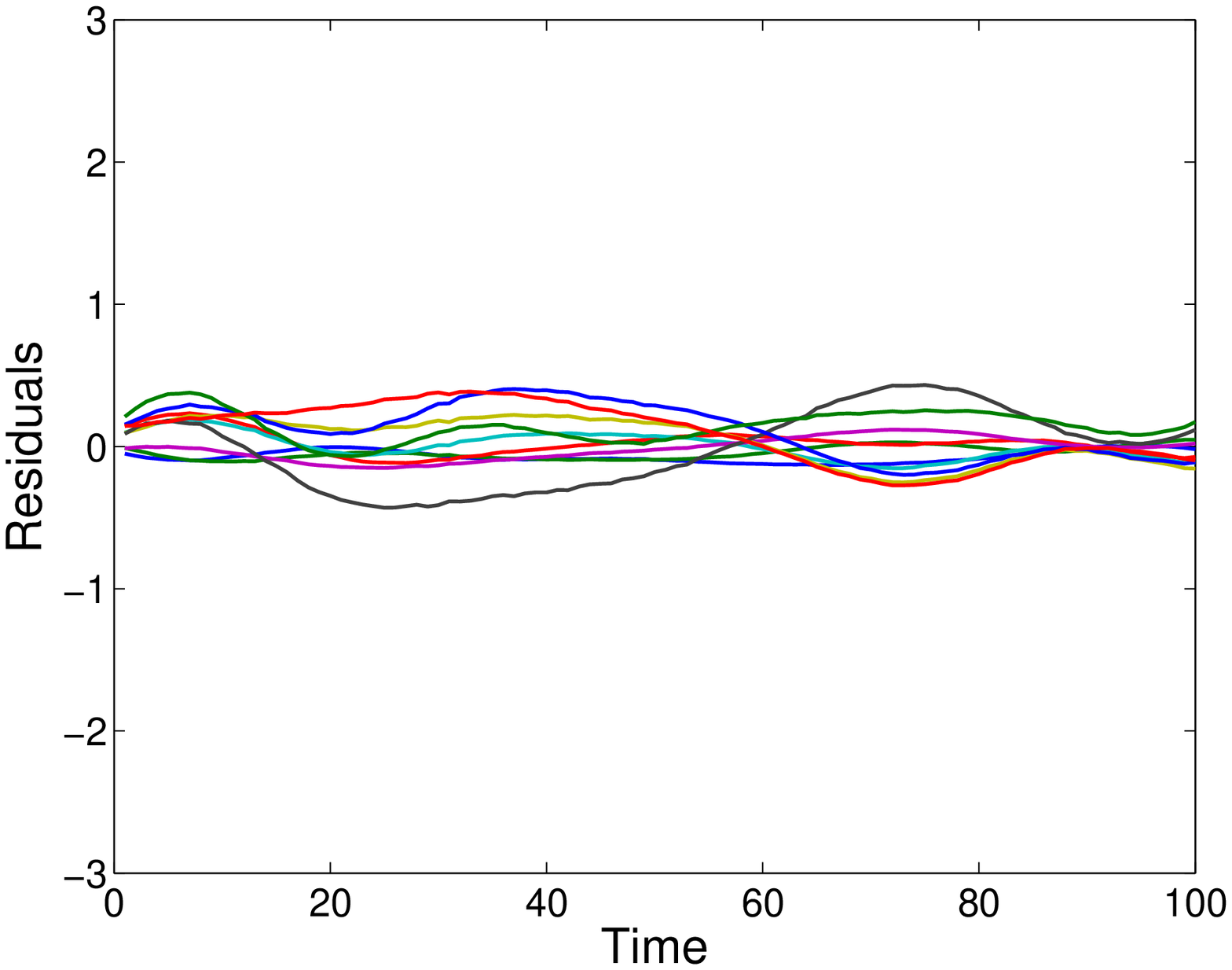} & \hspace{-0.2in}
		\includegraphics[width = 1.75in]{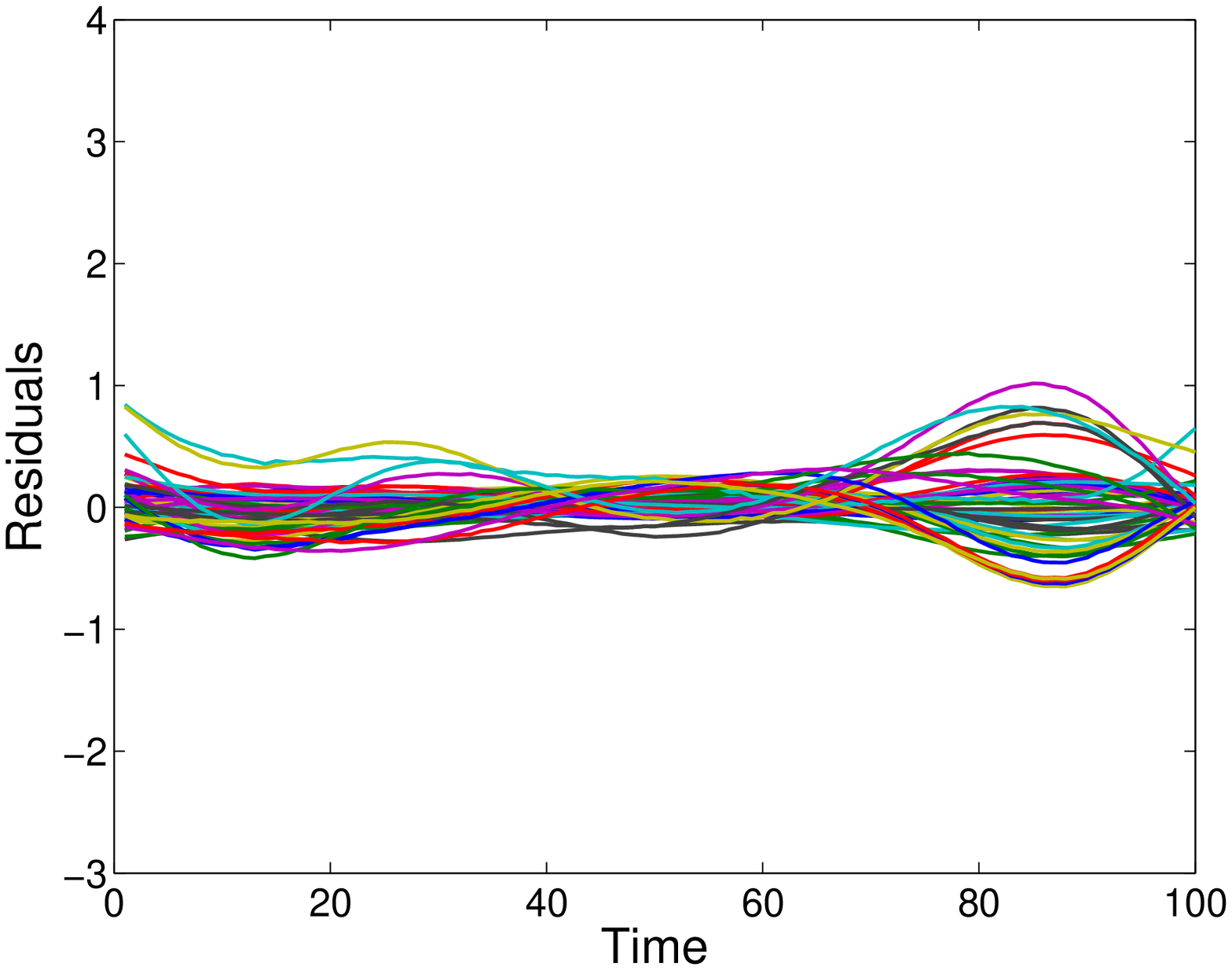} & \hspace{-0.2in}
		\includegraphics[width = 1.8in]{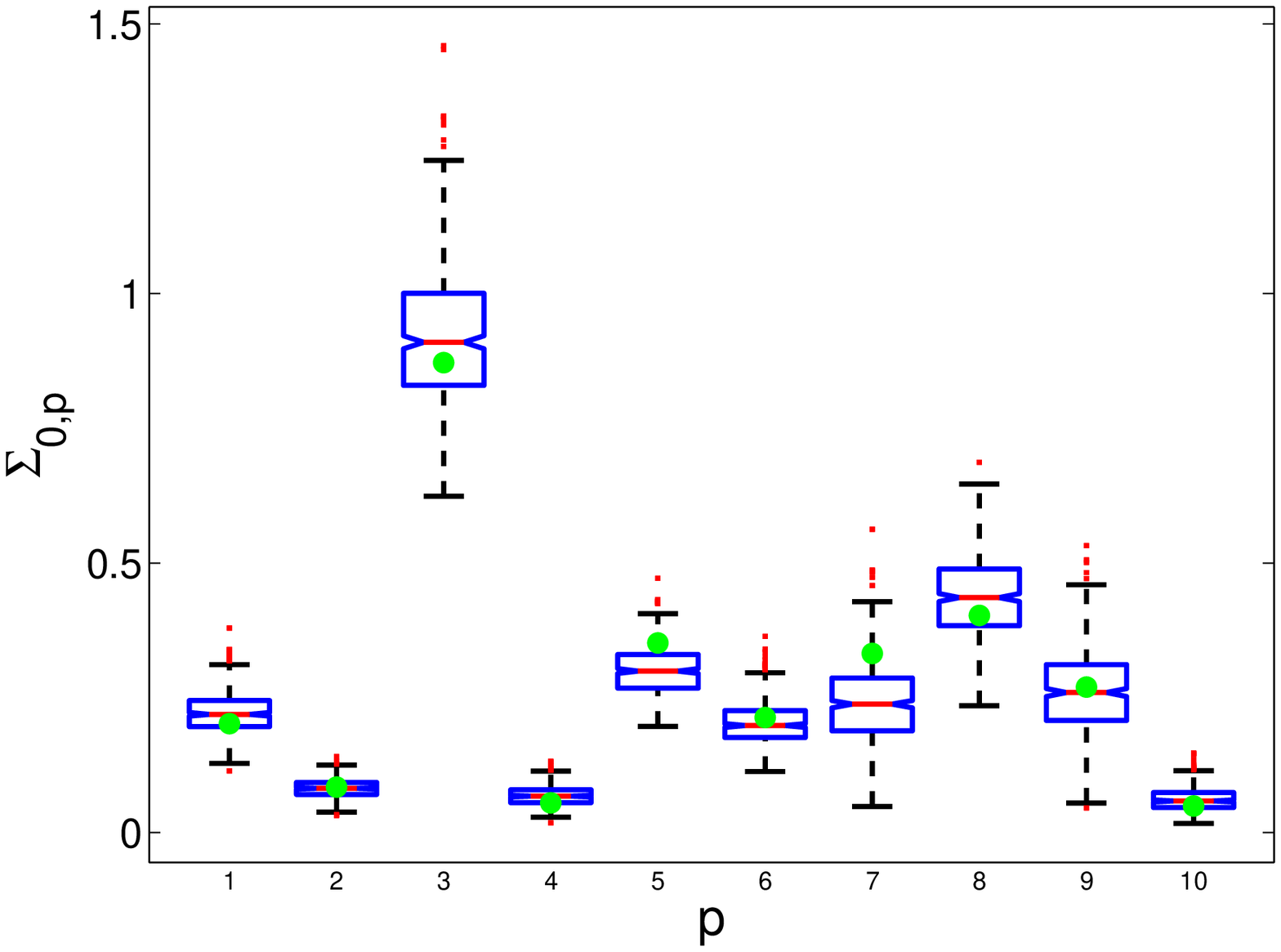} \\
		\hspace{-0.2in} (a) & \hspace{-0.2in}(b) & \hspace{-0.2in}(c)\\
	\end{tabular}
	\caption{Residuals between each component of the true and posterior mean of (a) the mean $\mu(x)$, and (b) covariance $\Sigma(x)$. The scaling of the axes matches that of Figure~\ref{fig:truecov}.  (c) Box plot of posterior samples of the noise covariance terms $\sigma_j^2$ for $j=1,\dots,p$ compared to the true value (green).}\label{fig:simStudy} \postcap \vspace{0.1in}
\end{figure}

For inference, we set the hyperparameters as follows.  We use truncation levels $k^*=L^*=10$, which we found to be sufficiently large from the fact that the last few columns of the posterior samples of $\Theta$ were consistently shrunk close to $0$.  We set $a_1=a_2=2$ and placed a $\mbox{Ga}(1,0.1)$ prior on the precision parameters $\sigma_j^{-2}$. The length-scale parameter $\kappa$ was set from the data according to the heuristic described in Section~\ref{sec:hyp} using 20 knots evenly spaced in $\mathcal{X}=\{1,\dots,100\}$, and was determined to be $10$ (after rounding).  

Although experimentally we found that our sampler was insensitive to initialization in lower-dimensional examples such as the one analyzed here, we use the following more intricate initialization for consistency with later experiments on larger datasets in which mixing becomes more problematic.  The predictor-independent parameters $\Theta$ and $\Sigma_0$ are sampled from their respective priors (first sampling the shrinkage parameters $\phi_{j\ell}$ and $\delta_h$ from their priors).  The variables $\eta_i$ and $\xi(x_i)$ are set via a data-driven initialization scheme in which an estimate of $\Sigma(x_i)$ for $i=1,\dots,n$ is formed using Steps 1-4 of Section~\ref{sec:hyp}. Then, $\Theta\xi(x_i)$ is taken to be a $k^*$-dimensional low-rank approximation to the Cholesky of the estimates of $\Sigma(x_i)$.  The latent factors $\eta_i$ are sampled from the posterior given in Equation~\eqref{eqn:etaPost} using this data-driven estimate of $\Theta\xi(x_i)$.  Similarly, the $\xi(x_i)$ are initially taken to be spline fits of the pseudo-inverse of the low-rank Cholesky at the knot locations and the sampled $\Theta$.  We then iterate a couple of times between sampling: (i) $\xi(\cdot)$ given $\{y_i\}$, $\Theta$, $\Sigma_0$, and the data-driven estimates of $\eta$, $\xi(\cdot)$; (ii) $\Theta$ given $\{y_i\}$, $\Sigma_0$, $\eta$, and the sampled $\xi(\cdot)$; (iii) $\Sigma_0$ given $\{y_i\}$, $\Theta$, $\eta$, and $\xi(\cdot)$; and (iv) determining a new data-driven approximation to $\xi(\cdot)$ based on the newly sampled $\Theta$.  Results indistinguishable from those presented here were achieved (after a short burn-in period) by simply initializing each of $\Theta$, $\xi(\cdot)$, $\Sigma_0$, $\eta_i$, and the shrinkage parameters $\phi_{j\ell}$ and $\delta_h$ from their respective priors.    

We ran 10,000 Gibbs iterations and discarded the first 5,000 iterations.  We then thinned the chain every 10 samples.  The residuals between the true and posterior mean over all components are displayed in Figure~\ref{fig:simStudy}(a) and (b).  Figure~\ref{fig:simStudy}(c) compares the posterior samples of the elements $\sigma_j^2$ of the noise covariance $\Sigma_0$ to the true values.  Finally, in Figure~\ref{fig:samplePaths} we display a select set of plots of the true and posterior mean of components of $\mu(x)$ and $\Sigma(x)$, along with the 95\% highest posterior density intervals computed at each predictor value $x=1,\dots,100$.  

\begin{figure}[t!]
	\centering
	\begin{tabular}{ccc}
		\hspace{-0.2in}
		\includegraphics[width = 1.75in]{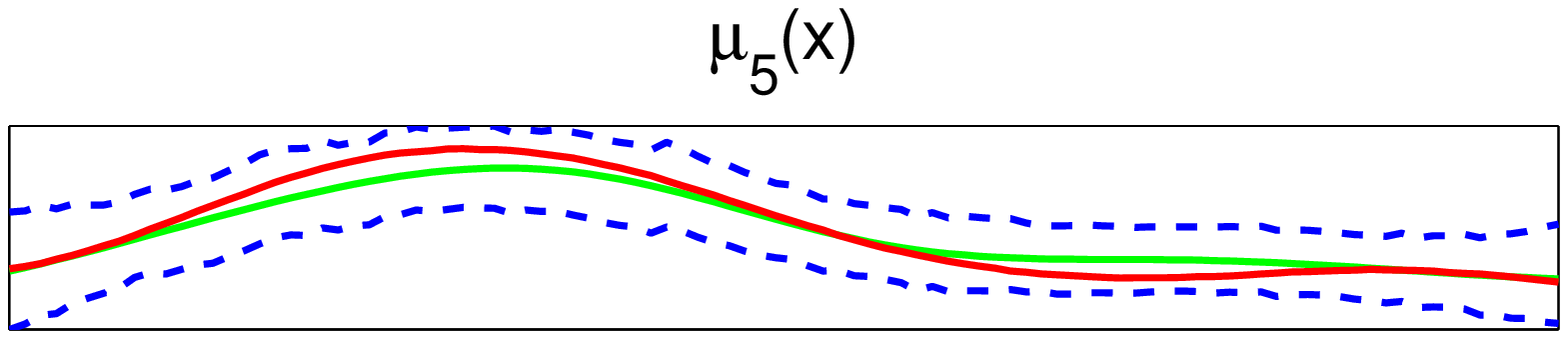} & \hspace{-0.2in}
		\includegraphics[width = 1.75in]{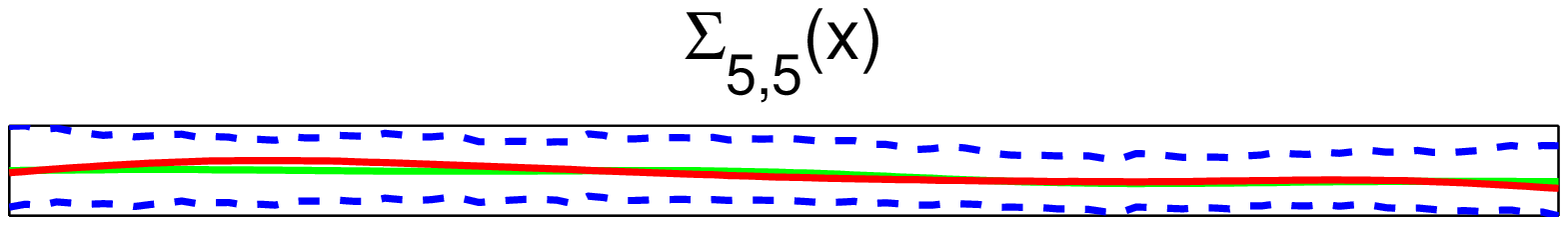} & \hspace{-0.2in}
		\includegraphics[width = 1.75in]{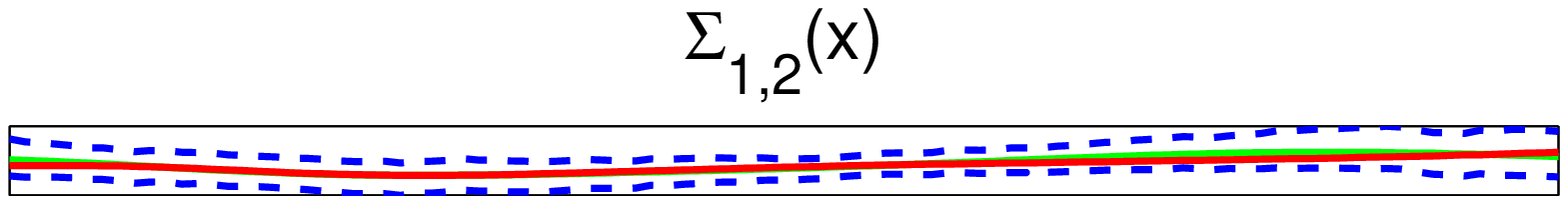} \\
		\hspace{-0.2in}
		\includegraphics[width = 1.75in]{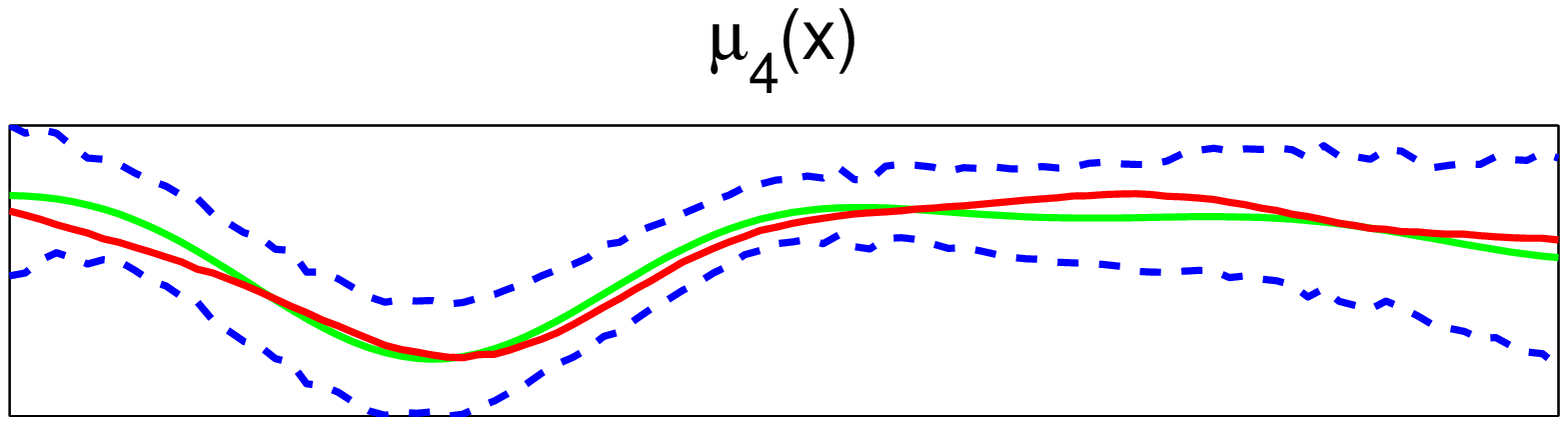} & \hspace{-0.2in}
		\includegraphics[width = 1.75in]{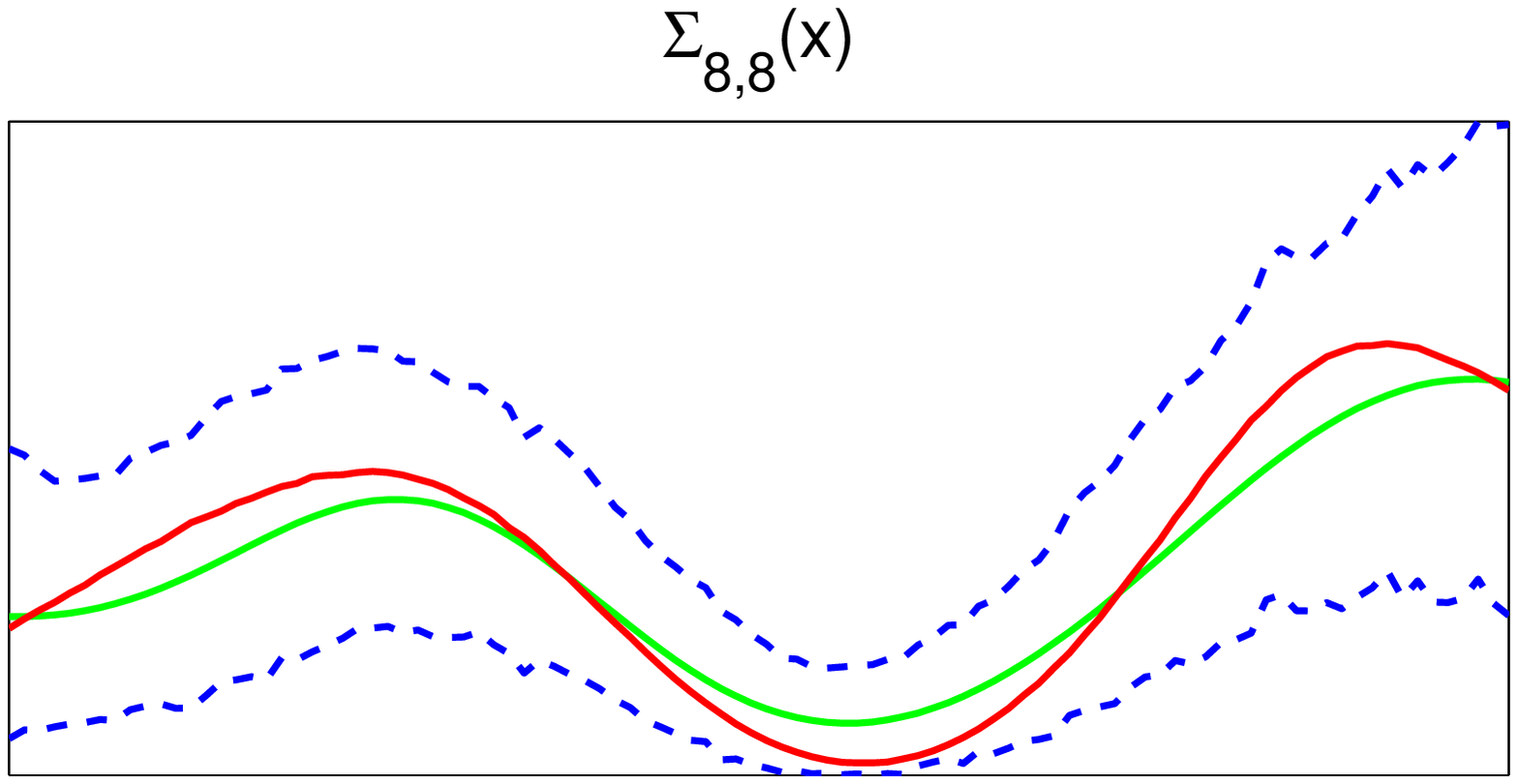} & \hspace{-0.2in}
		\includegraphics[width = 1.75in]{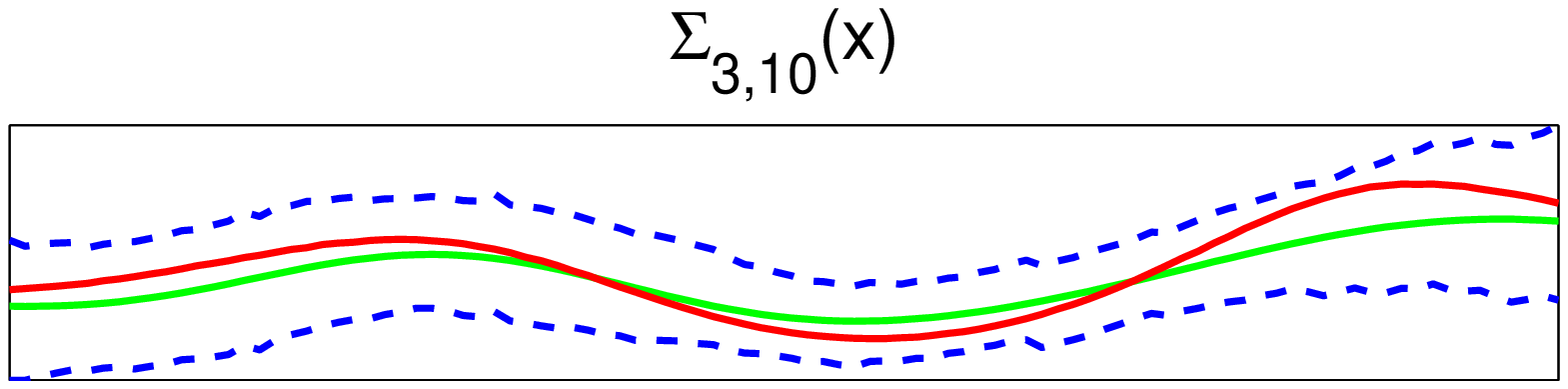} \\
		\hspace{-0.2in}
		\includegraphics[width = 1.75in]{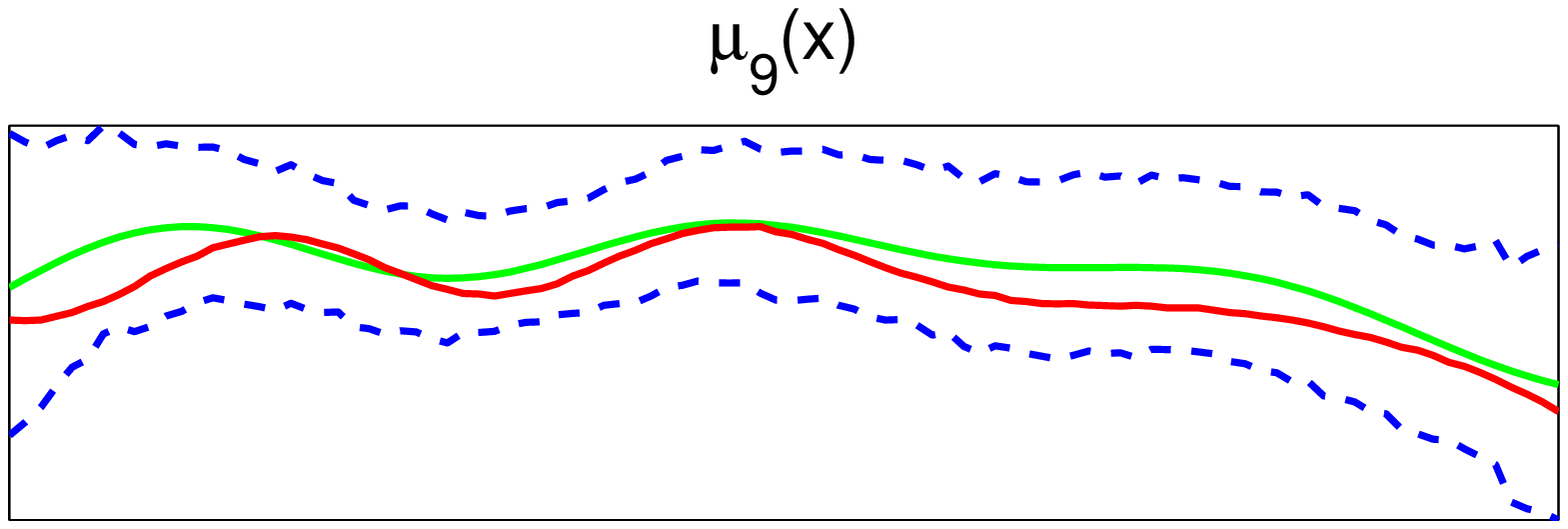} & \hspace{-0.2in}
		\includegraphics[width = 1.75in]{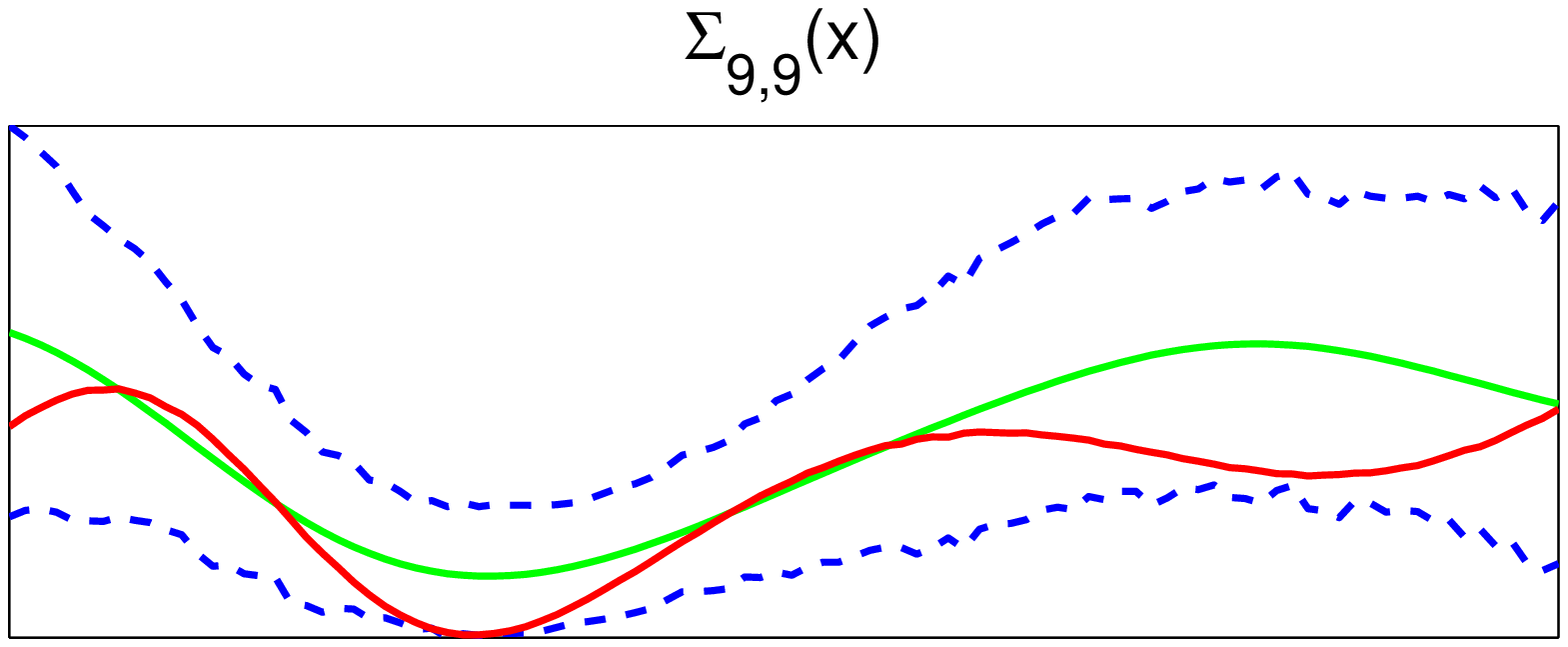} & \hspace{-0.2in}
		\includegraphics[width = 1.75in]{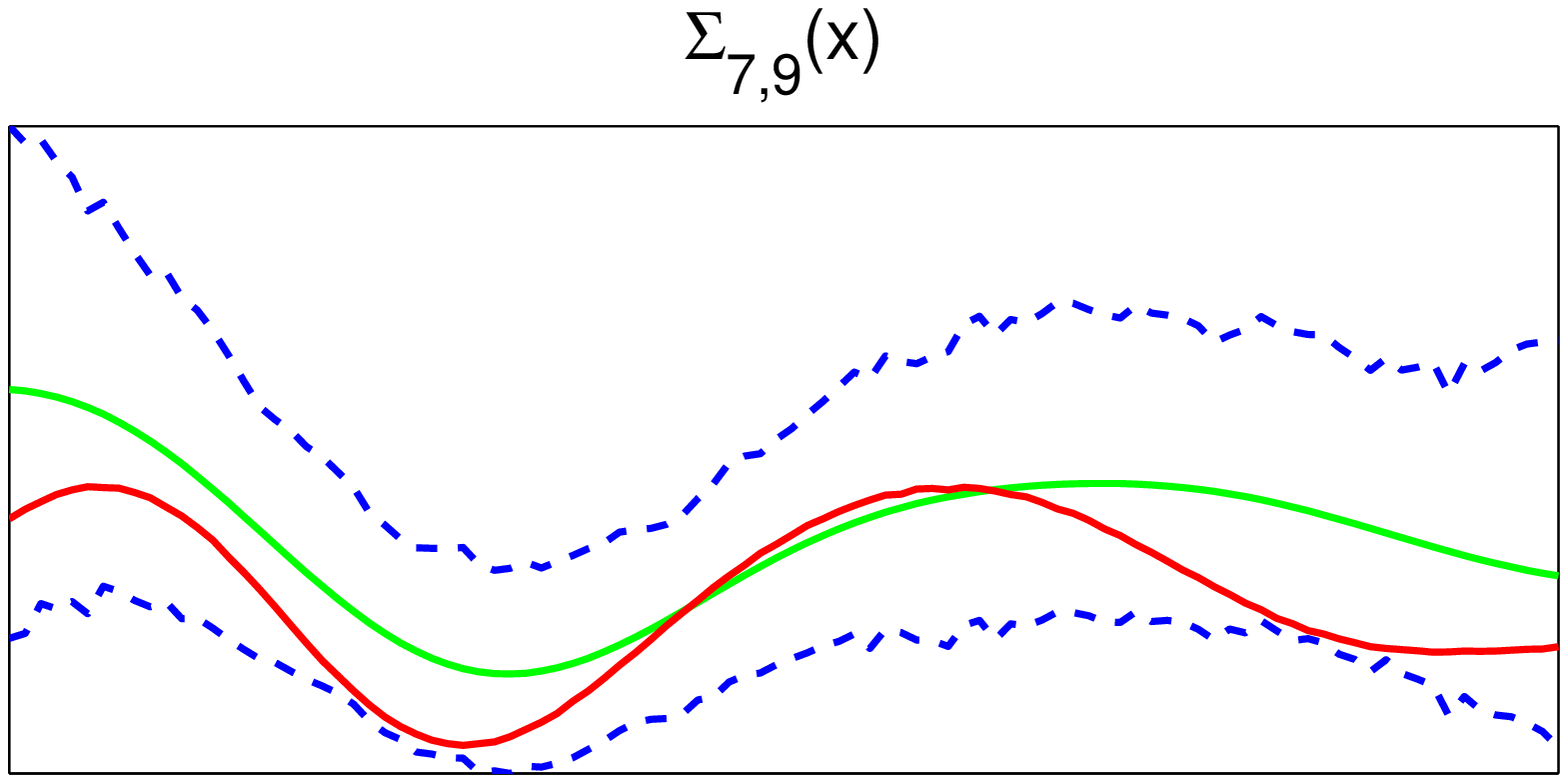} \\
	\end{tabular}
	\caption{Plots of truth (red) and posterior mean (green) for select components of the mean $\mu_p(x)$ (\emph{left}), variances $\Sigma_{pp}(x)$ (\emph{middle}), and covariances $\Sigma_{pq}(x)$ (\emph{right}).  The point-wise 95\% highest posterior density intervals are shown in blue.  The top row represents the component with the lowest $L2$ error between the truth and posterior mean. Likewise, the middle row represents median $L2$ error and the bottom row the worst $L2$ error. The size of the box indicates the relative magnitudes of each component.} \label{fig:samplePaths} \postcap \vspace{0.1in}
\end{figure}

From the plots of Figures~\ref{fig:simStudy} and~\ref{fig:samplePaths}, we see that we are clearly able to capture heteroscedasticity in combination with a nonparametric mean regression. The true values of the mean and covariance components are (even in the worst case) contained within the 95\% highest posterior density intervals, with these intervals typically small such that the overall interval bands are representative of the shape of the given component being modeled.

\subsection{Predictive Performance}
\label{sec:pred}
Capturing heteroscedasticity can significantly improve estimates of the predictive distribution of new observations or missing data.  To explore these gains within our proposed Bayesian nonparametric covariance regression framework, we compare against two possible homoscedastic formulations that each assume $y \sim \mathcal{N}(\mu(x),\Sigma)$.  The first is a standard Gaussian process mean regression model with each element of $\mu(x)$ an independent draw from $\mbox{GP}(0,c)$.  The second builds on our proposed regularized latent factor regression model and takes $\mu(x)=\Theta\xi(x)\psi(x)$, with $\{\Theta,\xi(x),\psi(x)\}$ as defined in the heteroscedastic case.  However, instead of having a predictor-dependent covariance $\Sigma(x) = \Theta\xi(x)\xi(x)'\Theta' + \Sigma_0$, the homoscedastic model assumes that $\Sigma$ is an arbitrary covariance matrix constant over predictors.  By comparing to this latter homoscedastic model, we can directly analyze the benefits of our heteroscedastic model since both share exactly the same mean regression formulation.  For each of the homoscedastic models, we place an inverse Wishart prior on the covariance $\Sigma$.

We analyze the same simulation dataset as described in Section~\ref{sec:qualSim}, but randomly remove approximately 5\% of the observations.  Specifically, independently for each element $y_{ij}$ (i.e., the $j$th response component at predictor $x_i$) we decide whether to remove the observation based on a $\mbox{Bernoulli}(p_i)$ draw.  We chose $p_i$ to be a function of the matrix norm of the true covariance at $x_i$ to slightly bias towards removing values from predictor regions with a tighter distribution.  This procedure resulted in removing 48 of the 1000 response elements.  

Table~\ref{table:KLdiv} compares the average Kullback-Leibler divergence $D_{KL}(P_{i,m}||Q_i)$, $i=1,\dots,100$, for the following definitions of $P_{i,m}$ and $Q_i$.  The distribution $Q_i$ is the predictive distribution of all missing elements $y_{ij}$ given the observed elements of $y_i$ under the true parameters $\mu(x_i)$ and $\Sigma(x_i)$.  Likewise, $P_{i,m}$ is taken to be the predictive distribution based on the $m$th posterior sample of $\mu(x_i)$ and $\Sigma(x_i)$.  In this scenario, the missing observations $y_{ij}$ are imputed as an additional step in the MCMC computations\footnote{Note that it is not necessary to impute the missing $y_{ij}$ within our proposed Bayesian covariance regression model because of the conditional independencies at each Gibbs step.  In Section~\ref{sec:app}, we simply sample based only on actual observations.  Here, however, we impute in order to directly compare our performance to the homoscedastic models.}.  The results, once again based on 10,000 Gibbs iterations and discarding the first 5,000 samples, clearly indicate that our Bayesian nonparametric covariance regression model provides more accurate predictive distributions.  We additionally note that using a regularized latent factor approach to mean regression improves on the naive homoscedastic model in high dimensional datasets in the presence of limited data.  Not depicted in this paper due to space constraints is the fact that the proposed covariance regression model also leads to improved estimates of the mean $\mu(x)$ in addition to capturing heteroscedasticity.
\begin{table}
\caption{\label{table:KLdiv}Average Kullback-Leibler divergence $D_{KL}(P_{i,m}||Q_i)$, $i=1,\dots,100$, where $P_{i,m}$ and $Q_i$ are the predictive distributions of all missing elements $y_{ij}$ given the observed elements of $y_i$ based on the $m$th posterior sample of and true parameters $\mu(x_i)$ and $\Sigma(x_i)$, respectively. We compare the predictive performance for two homoscedastic models to our covariance regression framework.}
\centering
\fbox{%
\begin{tabular}{*{2}{|c|}}
\em Model & \em Average Posterior Predictive KL Divergence\\
\hline
Homoscedastic Mean Regression & 0.3409\\
Homoscedastic Latent Factor Mean Regression & 0.2909\\
Heteroscedastic Mean Regression & \textbf{0.1216}\\
\end{tabular}}
\end{table}
\subsection{Model Mismatch}
\label{sec:modelMismatch}
We now examine our performance over a set of replicates from a 30-dimensional \emph{parametric} heteroscedastic model.  To generate the covariance $\Sigma(x)$ over $\mathcal{X} = \{1,\dots,500\}$, we chose a set of 5 evenly spaced knots $x_k = 1, 125, 250, 375, 500$ and generated
\begin{align}
	S(x_k) \sim \mathcal{N}(0,\Sigma_s)
\end{align}
with $\Sigma_s = \sum_{j=1}^{30}s_j s_j'$ and $s_j \sim \mathcal{N}([-29 \,\, -28 \,\, \dots \,\, 28 \,\, 29]',I_{30})$.  This construction implies that $S(x_k)$ and $S(x_k')$ are correlated.  We then fit a spline $\tilde{S}_{ij}(\cdot)$ independently through each element $S_{ij}(x_k)$ and evaluate this spline fit at $x=1,\dots,500$. The covariance is constructed as
\begin{align}
	\Sigma(x) = \alpha\tilde{S}(x)\tilde{S}(x)' + \Sigma_0,
\end{align}
where $\Sigma_0$ is a diagonal matrix with a truncated-normal prior, $\mathcal{TN}(0,1)$, on its diagonal elements. The constant $\alpha$ is chosen to scale the maximum value of $\alpha\tilde{S}(x)\tilde{S}(x)'$ to 1.  The resulting covariance is shown in Figure~\ref{fig:splineCov}(a).
\begin{figure}[t!] \centering 
	\begin{tabular}{ccc}
		\includegraphics[width = 1.75in]{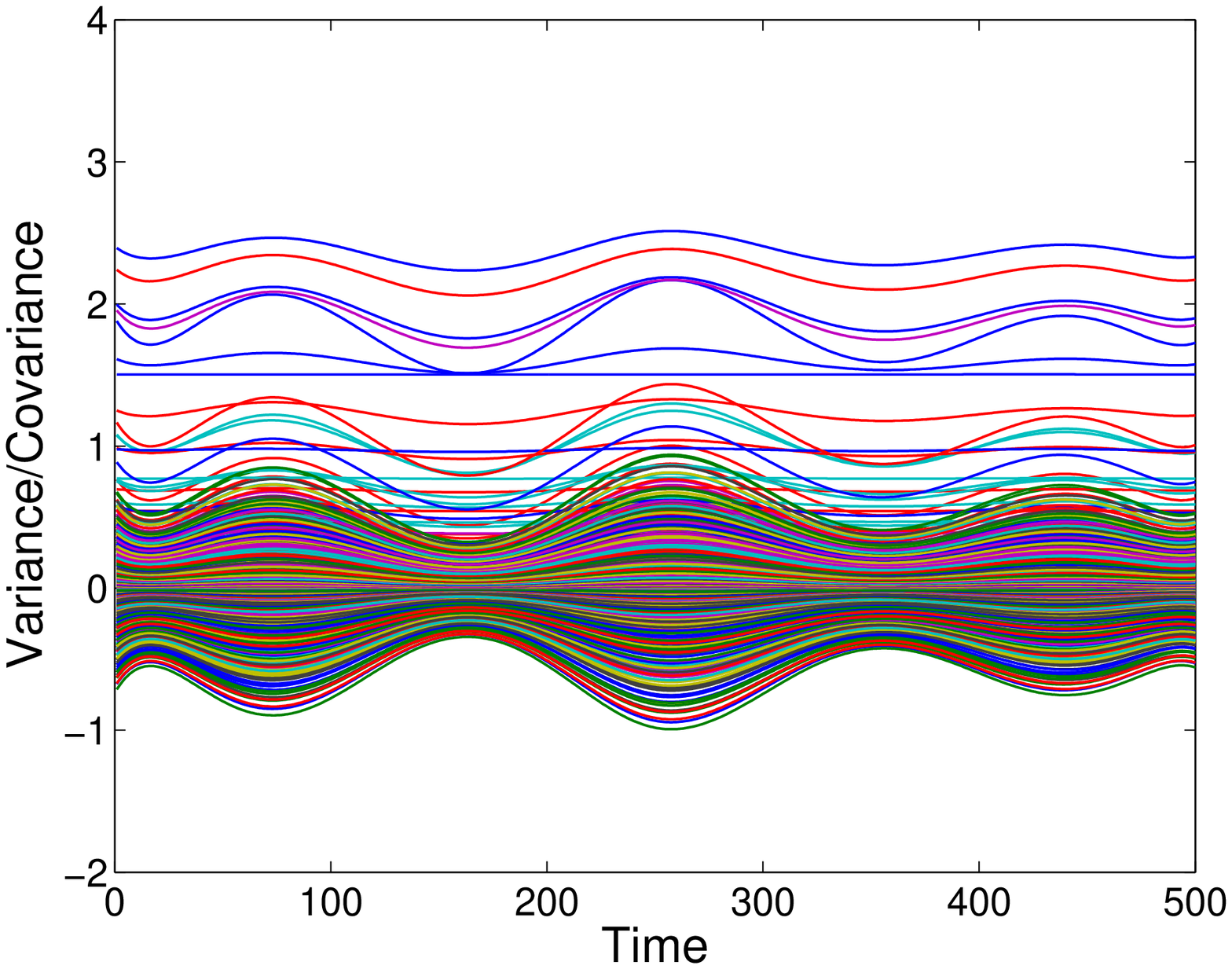} & \includegraphics[width = 1.75in]{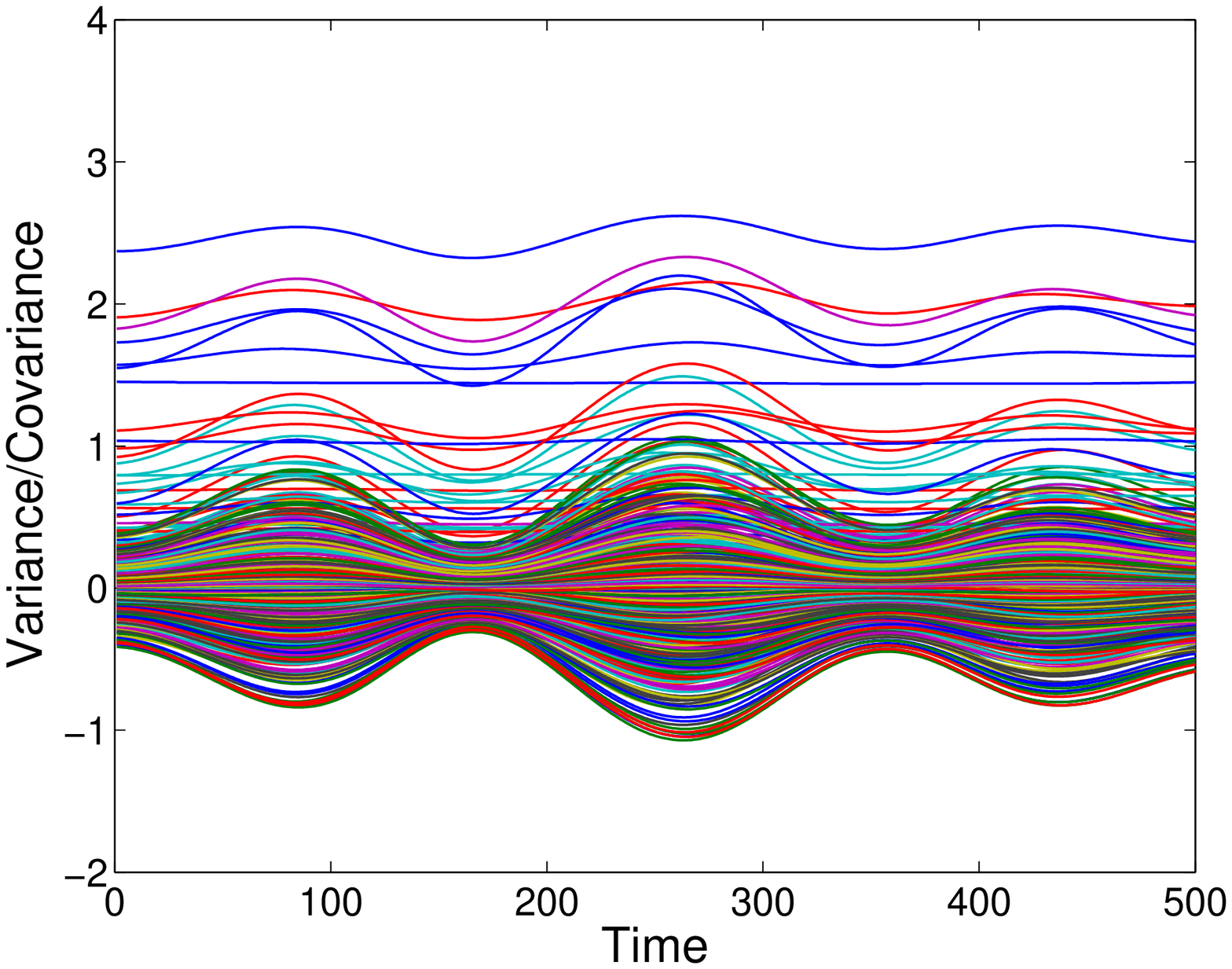} & \includegraphics[width = 1.75in]{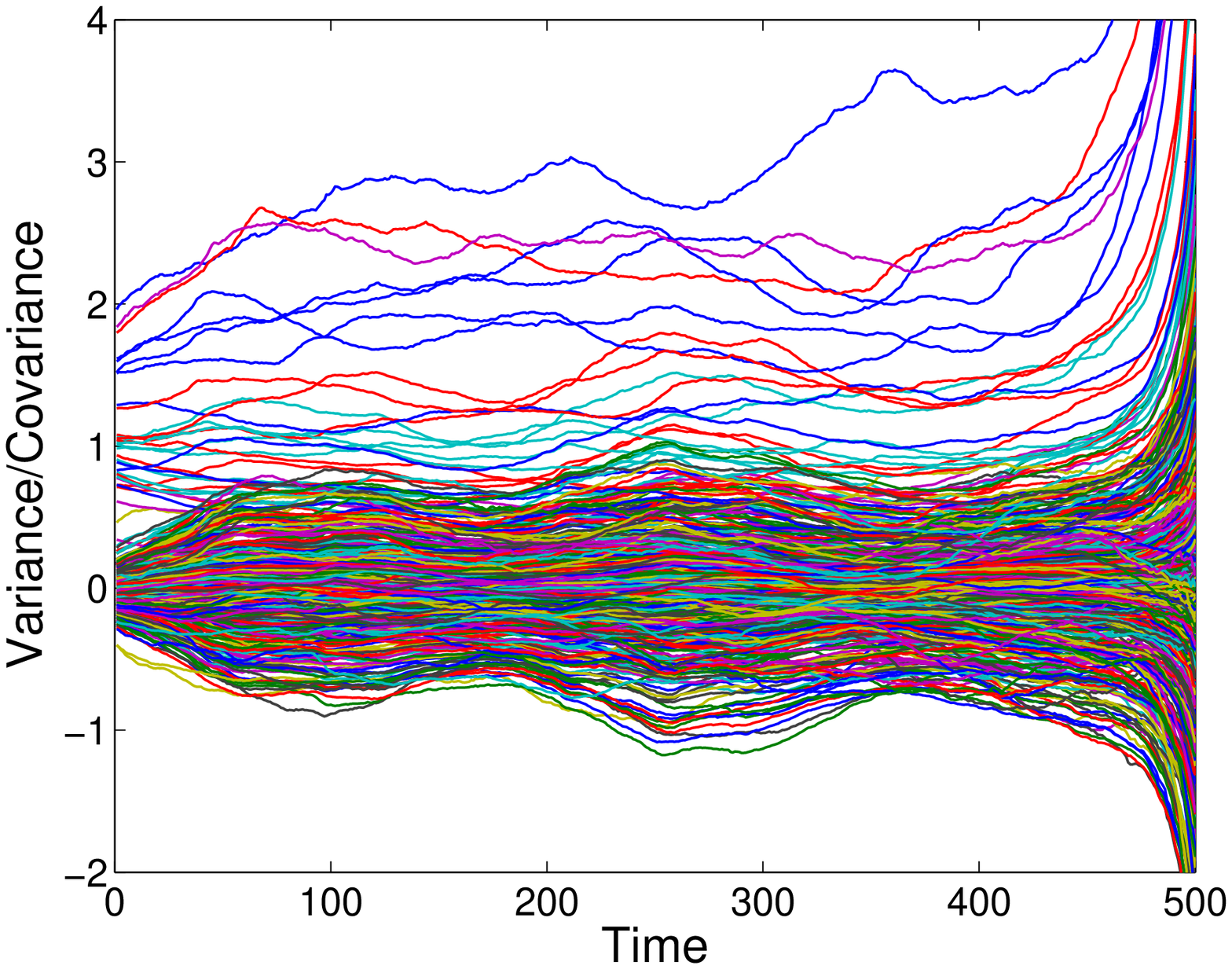}\\
		(a) & (b) & (c)
	\end{tabular}
	\caption{(a) Plot of each component of the true covariance matrix $\Sigma(x)$ over the predictor space $\mathcal{X}=\{1,\ldots,500\}$, taken to represent a time index. Analogous plot for the mean estimate of $\Sigma(x)$ are shown for (b) our proposed Bayesian nonparametric covariance regression model based on Gibbs iterations 5000 to 10000, and (c) a Wishart matrix discounting model over 100 independent FFBS samples.  Both mean estimates of $\Sigma(x)$ are for a single replicate $\{y^{(1)}_i\}$.  Note that the scaling of (c) crops the large estimates of $\Sigma(x)$ for $x$ near 500.} \label{fig:splineCov} \postcap \vspace{0.1in}
\end{figure}

Each replicate $m=1,\dots,30$ of this parametric heteroscedastic model is generated as
\begin{align}
	y_i^{(m)} \sim \mathcal{N}(0,\Sigma(x_i)).
\end{align}

Our hyperparameters and initialization scheme are exactly as in Section~\ref{sec:sim}.  The only difference is that we use truncation levels $k^*=L^* = 5$ based on an initial analysis with $k^*=L^*=17$.  For each replicate, we once again run 10,000 Gibbs iterations and thin the chain by examining every $10$th sample.  A mean estimate of $\Sigma(x)$ is displayed in Figure~\ref{fig:splineCov}(b). In Figure~\ref{fig:splineNorm}, we plot the mean and 95\% highest posterior density intervals of the Frobenius norm $||\Sigma^{(\tau,m)}(x)-\Sigma(x)||_2$ aggregated over iterations $\tau=9,000, \dots, 10,000$ and replicates $m=1,\dots,30$.  The average norm error over $\mathcal{X}$ is around 3, which is equivalent to each element of the inferred $\Sigma^{(\tau,m)}(x)$ deviating from the true $\Sigma(x)$ by 0.1.  Since the covariance elements are approximately in the range of $[-1,1]$ and the variances in $[0,3]$, these norm error values indicate very good estimation performance.

We compare our performance to that of the Wishart matrix discounting model (see Section 10.4.2 of~\cite{PradoWest}), which is commonly used in stochastic volatility modeling of financial time series.  Let $\Phi_t = \Sigma_t^{-1}$.  The Wishart matrix discounting model is a discrete-time covariance evolution model that accounts for the slowly changing covariance by discounting the cumulated information.  Specifically, assume $\Phi_{t-1}\mid y_{1:t-1},\beta \sim W(h_{t-1},D_{t-1}^{-1})$, with $D_t = \beta D_{t-1} + y_ty_t'$ and $h_t = \beta h_{t-1} + 1$.  The discounting model then specifies
\begin{align}
	\Phi_t \mid y_{1:t-1},\beta \sim W(\beta h_{t-1}, (\beta D_{t-1})^{-1})
\end{align}
such that $E[\Phi_t\mid y_{1:t-1}] = E[\Phi_{t-1}\mid y_{1:t-1}] = h_{t-1}D_{t-1}^{-1}$, but with certainty discounted by a factor determined by $\beta$.  The update with observation $y_t$ is conjugate, maintaining a Wishart posterior on $\Phi_t$. A limitation of this construction is that it constrains $h_t > p-1$ (or $h_t$ integral) implying that $\beta > (p-2)/(p-1)$.  We set $h_0 = 40$ and $\beta = 1-1/h_0$ such that $h_t = 40$ for all $t$ and ran the forward filtering backward sampling (FFBS) algorithm outlined in~\cite{PradoWest}, generating 100 independent samples.  A mean estimate of $\Sigma(x)$ is displayed in Figure~\ref{fig:splineCov}(c) and the Frobenius norm error results are depicted in Figure~\ref{fig:splineNorm}.  Within the region $x=1,\dots,400$, we see that the error of the Wishart matrix discounting method is approximately twice that of our proposed methodology.  Furthermore, towards the end of the time series (interpreting $\mathcal{X}$ as representing a batch of time), the estimation error is especially poor due to errors accumulated in forward filtering.  Increasing $h_t$ mitigates this problem, but shrinks the model towards homoscedasticity.  In general, the formulation is sensitive to the choice of $h_t$, and in high-dimensional problems this degree of freedom is forced to take large (or integral) values.
\begin{figure}[t!] \centering 
	\begin{tabular}{cc}
		\includegraphics[width = 1.75in]{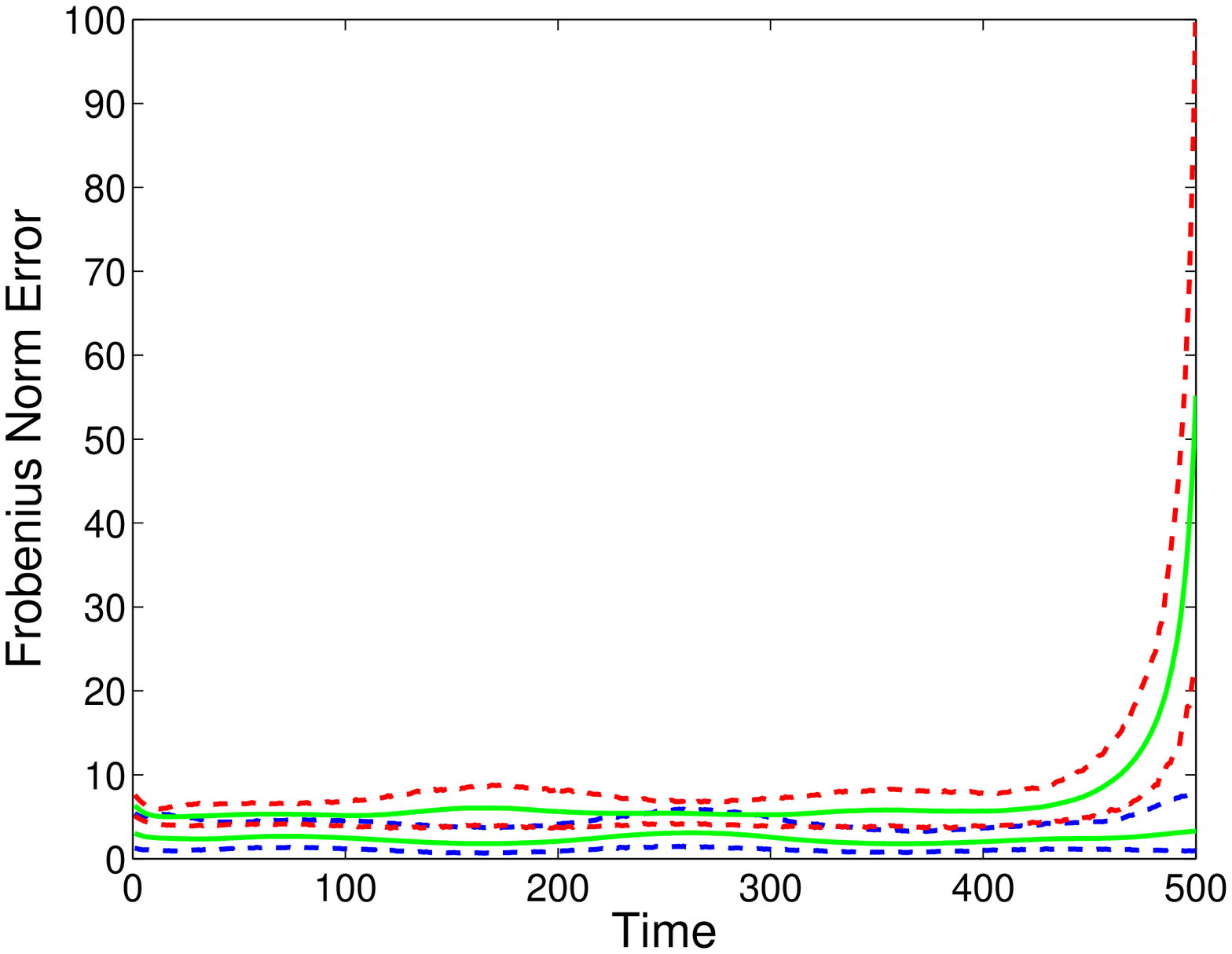} & \includegraphics[width = 1.75in]{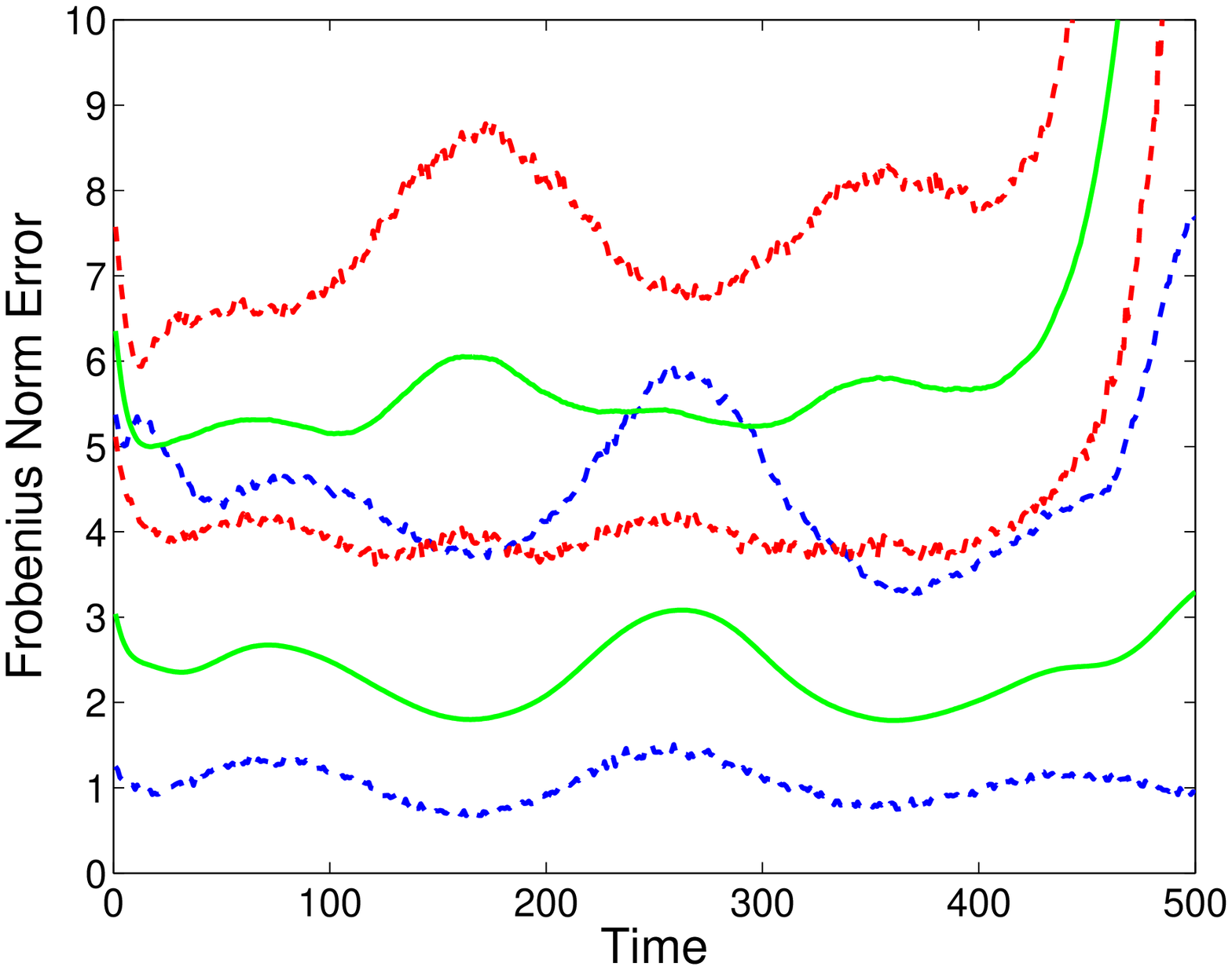}\\
		(a) & (b)
	\end{tabular}
	\caption{(a) Plot of the mean and 95\% highest posterior density intervals of the Frobenius norm $||\Sigma^{(\tau,m)}(x)-\Sigma(x)||_2$ for the proposed Bayesian nonparametric covariance regression model (blue and green) and the Wishart matrix discounting model (red and green).  The results are aggregated over 100 posterior samples and replicates $m=1,\dots,30$. For the Bayesian nonparametric covariance regression model, these samples are taken at iterations $\tau=[9000:10:10000]$. (b) Analogous plot, but zoomed in to more clearly see the differences over the range of $x=1,\dots,400$.} \label{fig:splineNorm} \postcap \vspace{0.1in}
\end{figure}
\section{Applications}
\label{sec:app}
We applied our Bayesian nonparametric covariance regression model to the problem of capturing spatio-temporal structure in influenza rates in the United States (US).  Surveillance of influenza has been of growing interest following a series of pandemic 
scares (e.g., SARS and avian flu) and the 2009 H1N1 pandemic, previously known as ``swine flu''.  Although influenza pandemics have a long history, such as the 1918-1919 ``Spanish flu'', 1957-1958 ``Asian flu'', and 1968-1969 ``Hong Kong flu'', a convergence of factors are increasing the current public interest in influenza surveillance.  These include both practical reasons such as the rapid rate by which geographically distant cases of influenza can spread worldwide, along with other driving factors such as an increased media coverage.
\subsection{CDC Influenza Monitoring}
The surveillance of influenza within the US is coordinated by the Centers for Disease Control and Prevention (CDC), which collects data from a large network of diagnostic laboratories, hospitals, clinics, individual healthcare providers, and state health departments (see http://www.cdc.gov/flu/weekly/).  The approximately 3,000 participating outpatient sites, collectively referred to as the US Outpatient Influenza-Like Illness Surveillance Network (ILINet), provide the CDC with key information about rates of influenza-like illness (ILI)\footnote{An influenza-like illness (ILI) is defined as any case of a person having over 100 degrees Fahrenheit fever along with a cough and/or sore throat in absence of any other known cause.}.  The CDC consolidates the ILINet observed cases and produces reports for 10 geographic regions in addition to a US aggregate rate based on a population-based weighted average of state-level rates.  The CDC weekly flu reports are typically released after a 1-2 week delay and are subject to retroactive adjustments based on corrected ILINet reports.  

A plot of the number of isolates tested positive by the WHO and NREVSS from 2003-2010 is shown in Figure~\ref{fig:flu_traces}(a).  From these data and the CDC weekly flu reports, we defined a set of six events (Events A-F) corresponding to the 2003-2004, 2004-2005, 2005-2006, 2006-2007, 2007-2008, and 2009-2010 flu seasons, respectively.  The 2003-2004 flu season began earlier than normal, and coincided with a flu vaccination shortage in many states.  For the vaccination that was available, the CDC found that it was ``not effective or had very low effectiveness'' (http://www.cdc.gov/media/pressrel/fs040115.htm).  The 2004-2005 and 2007-2008 flu seasons were more severe than the 2005-2006 and 2006-2007 seasons.  However, the 2005-2006 season coincided with an avian flu (H5N1) scare in which Dr. David Narbarro, Senior United Nations System Coordinator for Avian and Human Influenza, was famously quoted as predicting that an avian flu pandemic would lead to 5 million to 150 million deaths.  Finally, the 2009-2010 flu season coincides with the emergence of the 2009 H1N1 (``swine flu'') subtype\footnote{According to the CDC, ``Antigenic characterization of 2009 influenza A (H1N1) viruses indicates that these viruses are only distantly related antigenically and genetically to seasonal influenza A (H1N1) viruses''.  See http://www.cdc.gov/flu/weekly/weeklyarchives2009-2010/weekly20.htm.} in the United States.
\begin{figure}[t!]
	\centering
	\begin{tabular}{ccc}
		\hspace{-0.2in}
		\includegraphics[width = 1.75in]{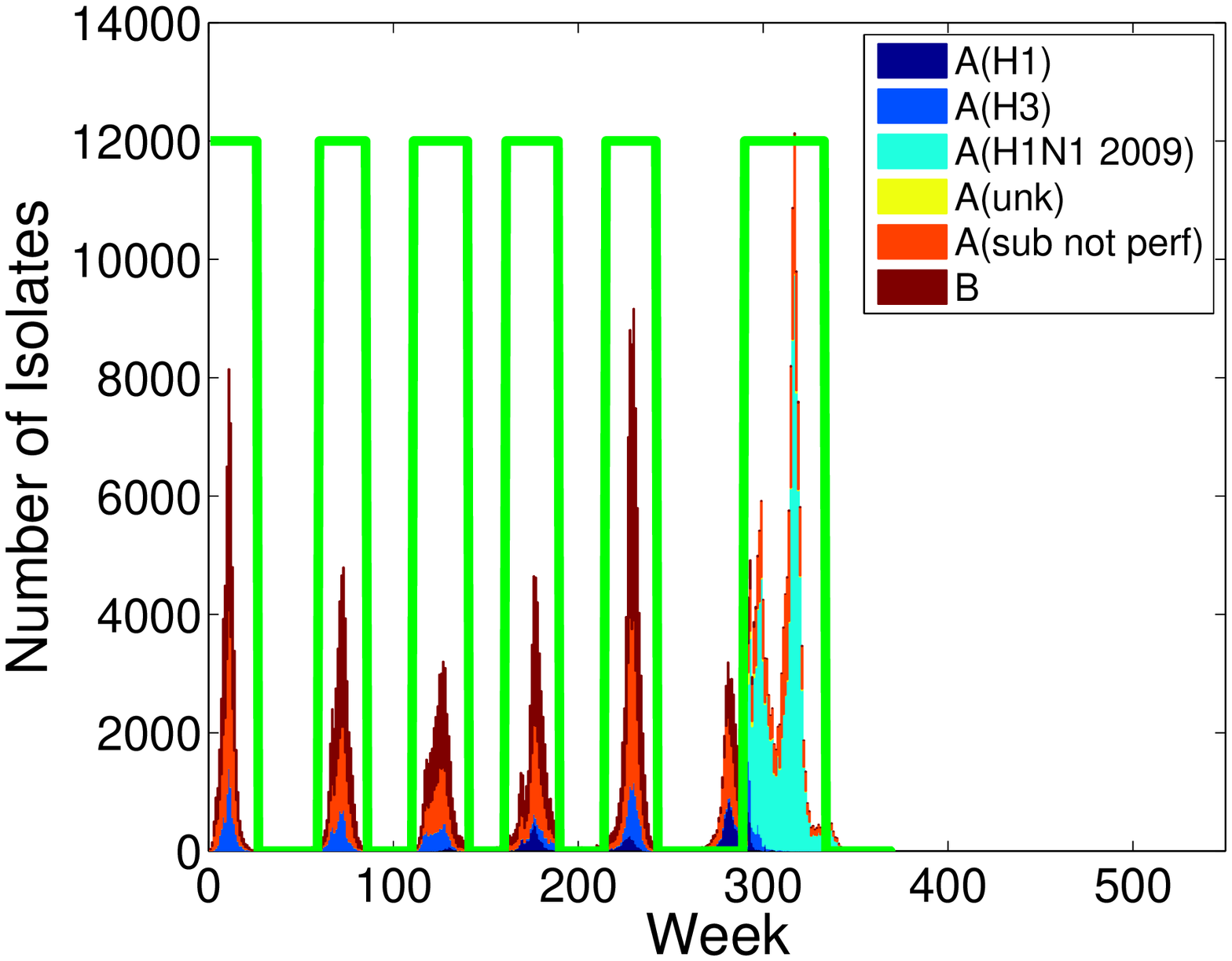} & \hspace{-0.2in}
		\includegraphics[width = 1.75in]{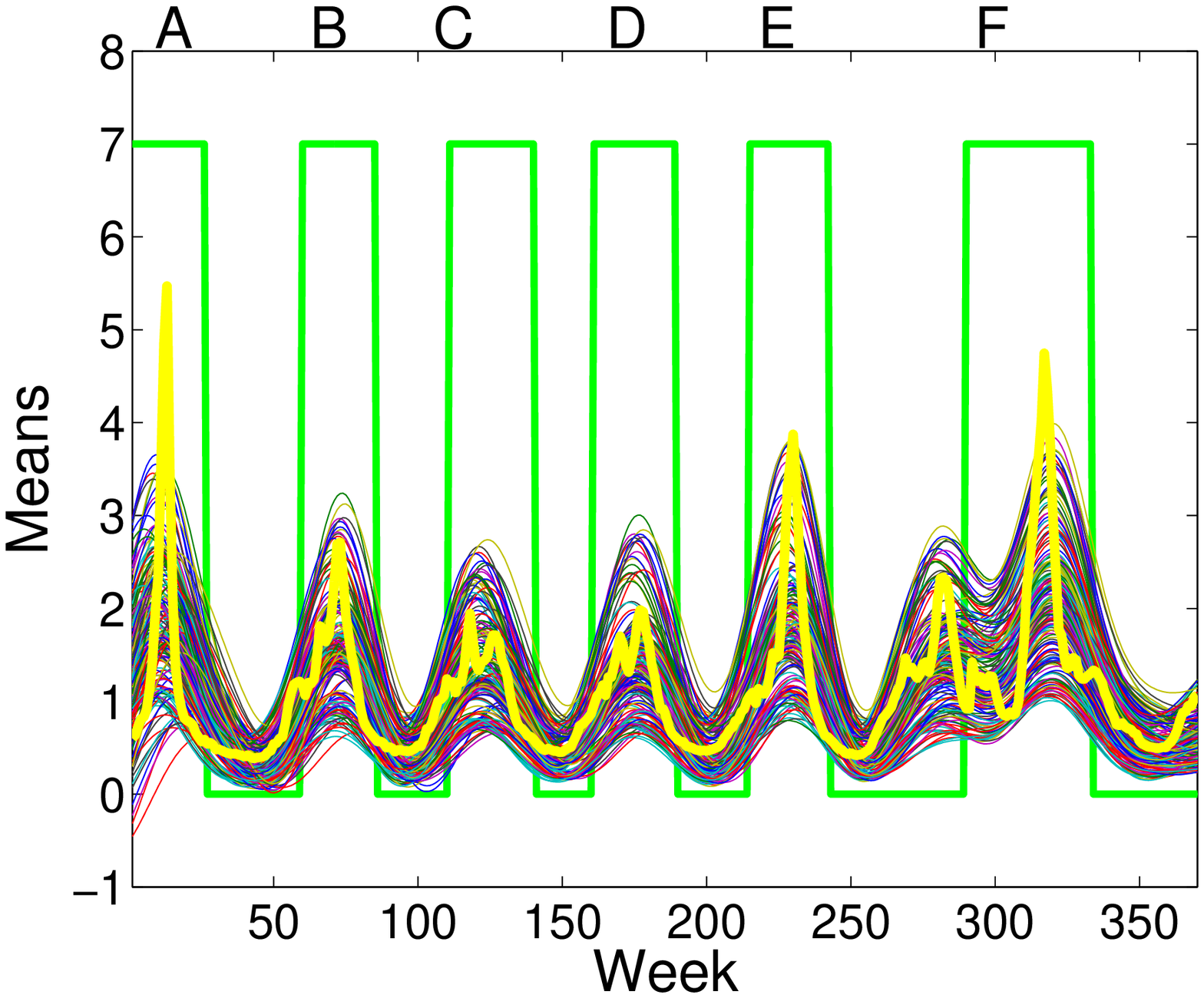} & \hspace{-0.2in}
		\includegraphics[width = 1.75in]{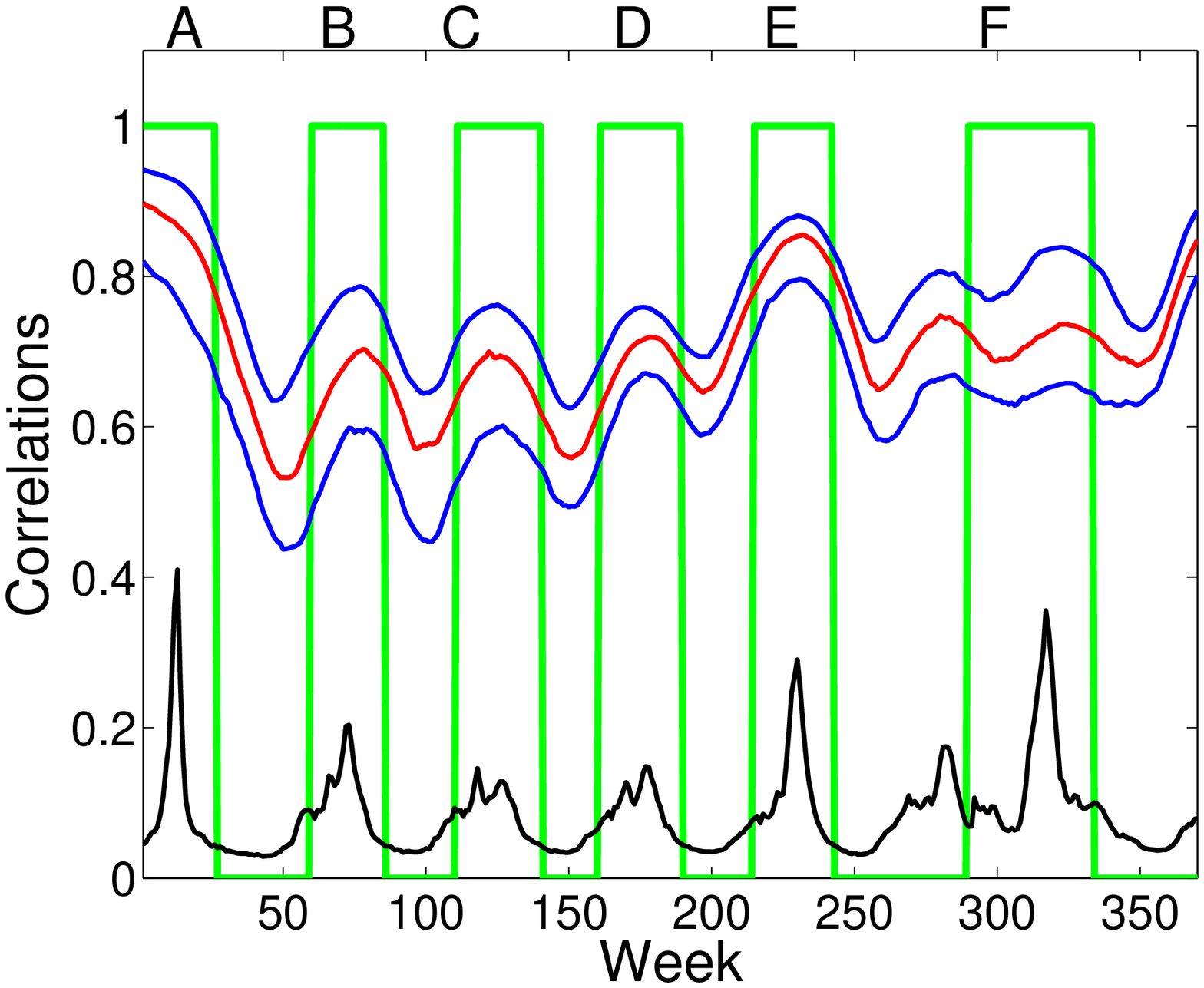} \\
		\hspace{-0.2in} (a) & \hspace{-0.2in}(b) & \hspace{-0.2in}(c)\\
	\end{tabular}
	\caption{(a) Plot of the number of isolates tested positive by the WHO and NREVSS over the period of September 29, 2003 to May 23, 2010. The isolates are divided into virus subtypes, specifically influenza A (H3N2, H1 = \{H1N2 and H1N1\}, 2009 H1N1) and influenza B. The isolates where subtyping was not performed or was not possible are also indicated. (b) Plot of posterior means of the nonparametric mean function $\mu_j(x)$ for each of the 183 states and regions in the Google Trends Flu dataset.  The thick yellow line indicates the Google Flu Trends estimate of the United States influenza rates.  (c) For New York, the 25th, 50th, and 75th quantiles of correlation with the 182 other states and regions based on the posterior mean $\hat{\Sigma}(x)$ of the covariance function.  The black line is a scaled version of the United States influenza rates, as in (b), shown for easy comparison.  The green line shown in plots (a)-(c) indicates the time periods determined to flu events.  Specifically, Event A corresponds to the 2003-2004 flu season (flu shot shortage), Event B the 2004-2005 season, Event C the 2005-2006 season (avian flu scare), Event D the 2006-2007 season, Event E the 2007-2008 season (severe), and Event F the 2009-2010 season (2009 H1N1 or ``swine flu'').}\label{fig:flu_traces} \postcap \vspace{0.1in}
\end{figure}
\subsection{Google Flu Trends Dataset}
To aid in a more rapid response to influenza activity, a team of researchers at Google devised a model based on Google user search queries that is predictive of CDC ILI rates~\citep{Ginsberg:08}---that is, the probability that a random physician visit is related to an influenza-like illness.  The \emph{Google Flu Trends} methodology was devised as follows.  From the hundreds of billions of individual searches from 2003-2008, time series of state-based weekly query rates were created for the 50 million most common search terms.  The predictive performance of a regression on the logit-transformed query rates was examined for each of the 50 million candidates and a ranked list was produced that rewarded terms predictive of rates exhibiting similar regional variations to that of the CDC data.  A massive variable selection procedure was then performed to find the optimal combination of query words (based on best fit against out-of-sample ILI data), resulting in a final set of 45 \emph{ILI-related queries}.  Using the 45 ILI-related queries as the explanatory variable, a region-independent univariate linear model was fit to the weekly CDC ILI rates from 2003-2007.  This model is used for making estimates in any region based on the ILI-related query rates from that region.  The results were validated against the CDC data both on training and test data, with the Google reported US and regional rates closely tracking the actual reported rates.  

A key advantage of the Google data (available at http://www.google.org/flutrends/) is that the ILI rate predictions are available 1 to 2 weeks before the CDC weekly reports are published.  Additionally, a user's IP address is typically connected with a specific geographic area and can thus provide information at a finer scale than the 10-regional and US aggregate reporting provided by the CDC.  Finally, the Google reports are not subject to revisions.  One important note is that the Google Flu Trends methodology aims to hone in on searches and rates of such searches that are indicative of influenza activity.  A methodology based directly on raw search queries might instead track general interest in influenza, waxing and waning quickly with various media events.  

We analyze the Google Flu Trends data from the week of September 28, 2003 through the week of October 24, 2010, providing 370 observation vectors $y_i$.  Each observation vector is 183-dimensional with elements consisting of Google estimated ILI rates at the US national level, the 50 states, 10 U.S. Department of Health \& Human Services surveillance regions, and 122 cities.  It is important to note, however, that there is substantial missing data with entire blocks of observations unavailable (as opposed to certain weeks sporadically being omitted).  At the beginning of the examined time frame only 114 of the 183 regions were reporting.  By the end of Year 1, there were 130 regions.  These numbers increased to 173, 178, 180, and 183 by the end of Years 2, 3, 4, and 5, respectively.  The high-dimensionality and missing data structure make the Google Flu Trends dataset challenging to analyze in its entirety with existing heteroscedastic models.  As part of an exploratory data analysis, in Figure~\ref{fig:flu_mapsEst} we plot sample estimates of the geographic correlation structure between the states during an event period for four representative states.  Specifically, we first subtract a moving average estimate of the mean (window size 10) and then aggregate the data over Events B-F, omitting Event A due to the quantity of missing data.  Because of the dimensionality of the data (183 dimensions) and the fact that there are only 157 event observations, we simply consider the state-level observations (plus District of Columbia), reducing the dimensionality to 51.  The limited data also impedes our ability to perform time-specific sample estimates of geographic correlations.
\begin{figure}[t!]
	\centering
		\hspace{-0.2in}
	 \begin{tabular}{cc}
		\includegraphics[width = 1.75in]{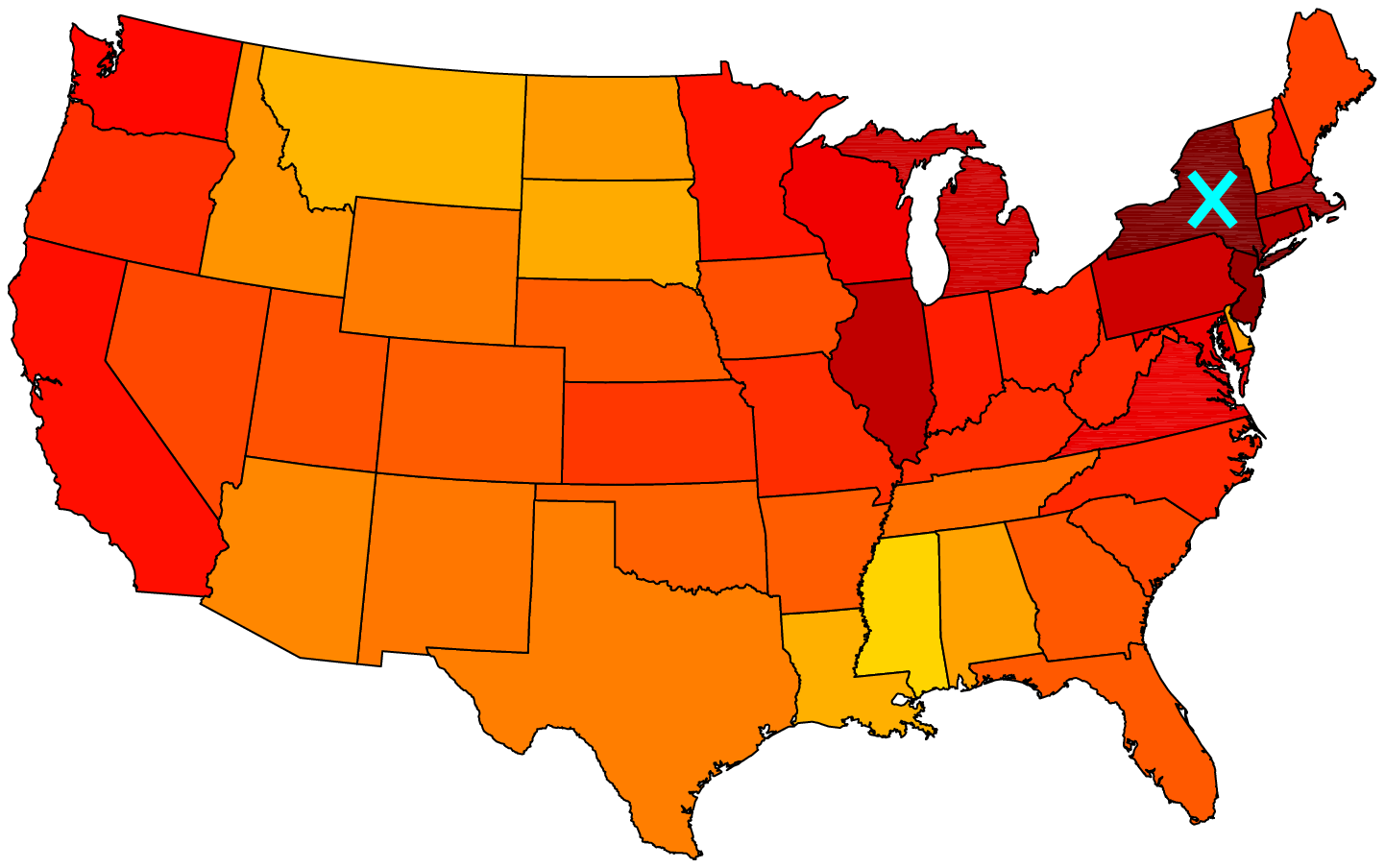} & \hspace{-0.2in}
		\includegraphics[width = 1.75in]{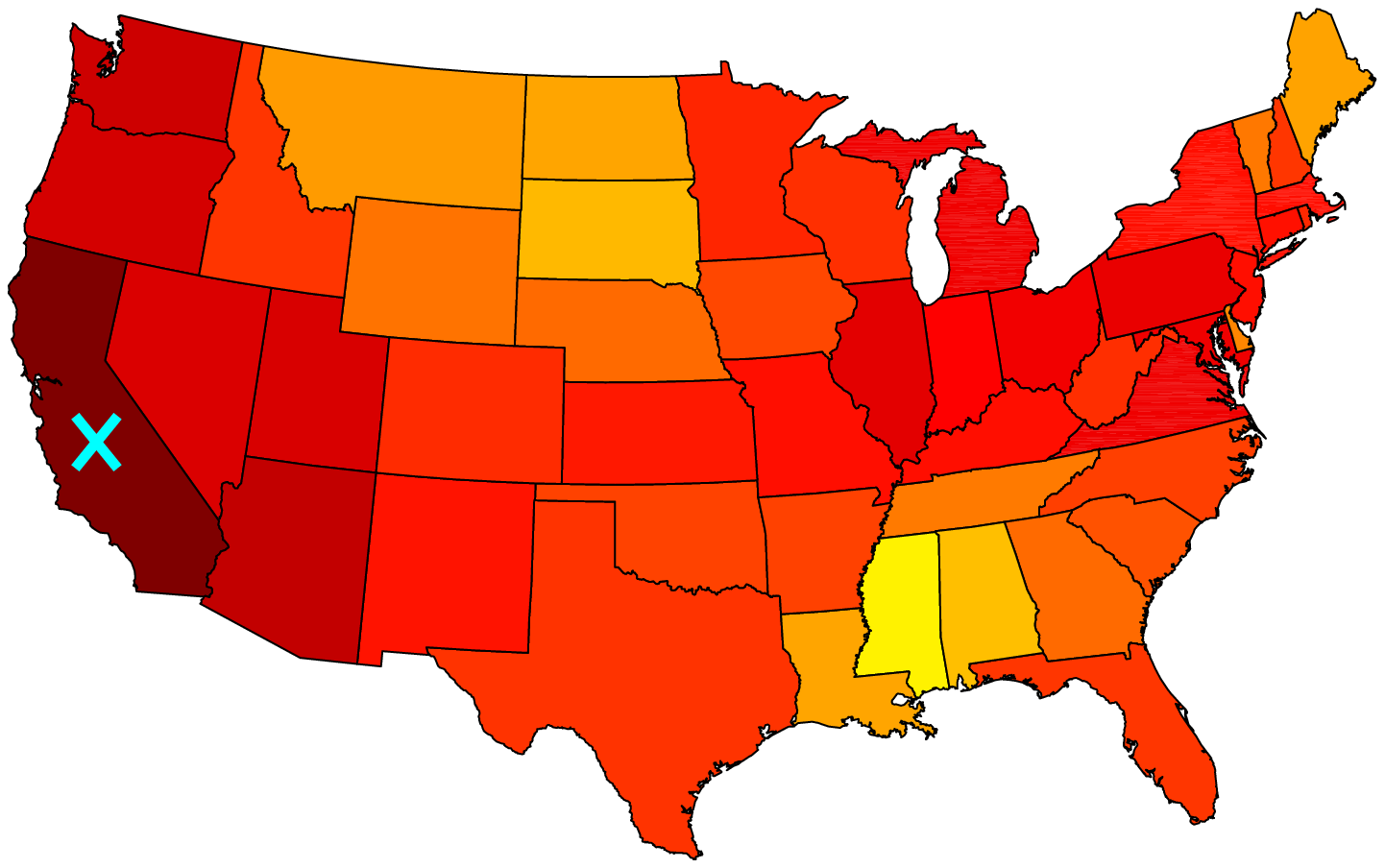}\\
		\includegraphics[width = 1.75in]{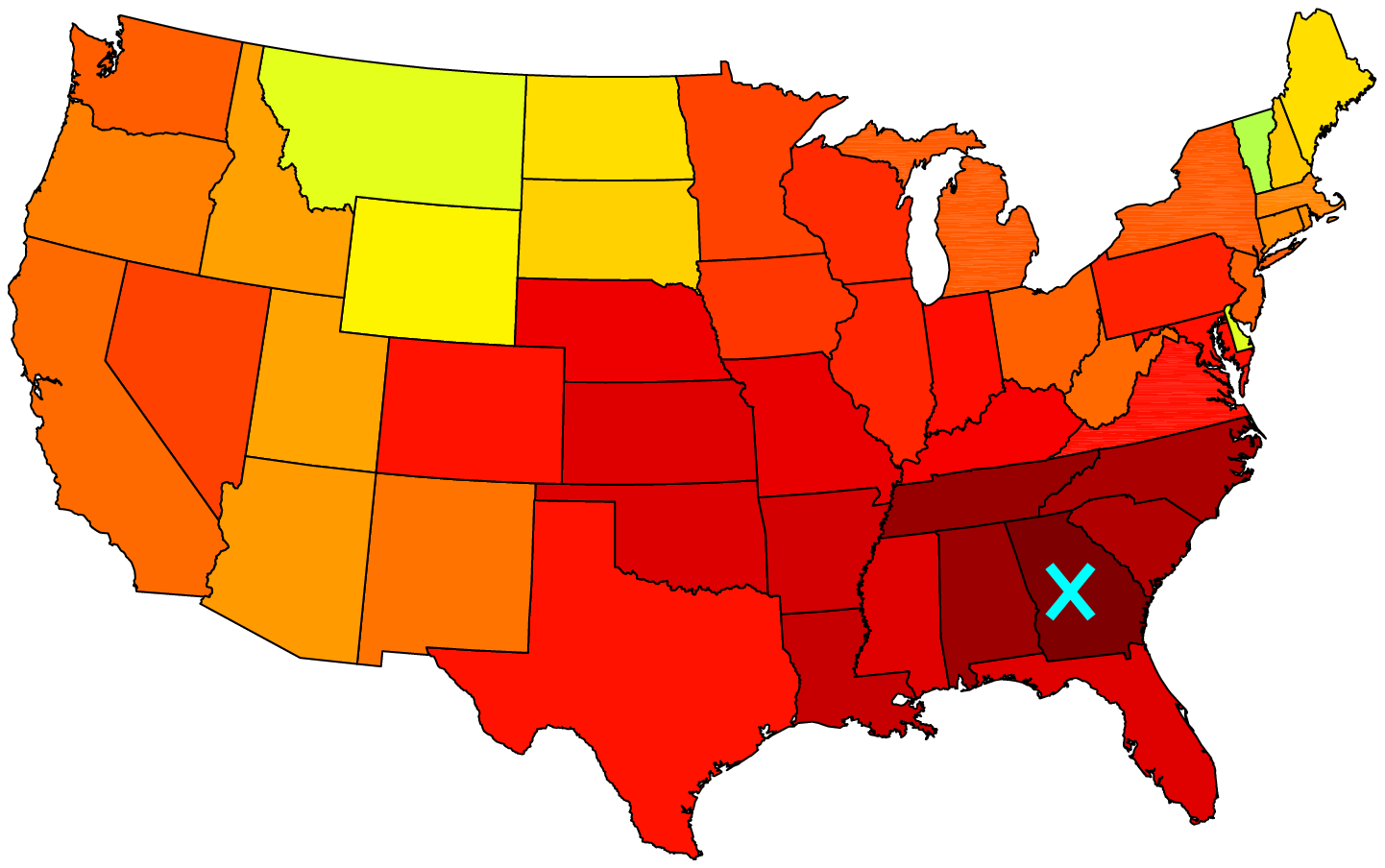}  & \hspace{-0.2in}
	    \includegraphics[width = 1.75in]{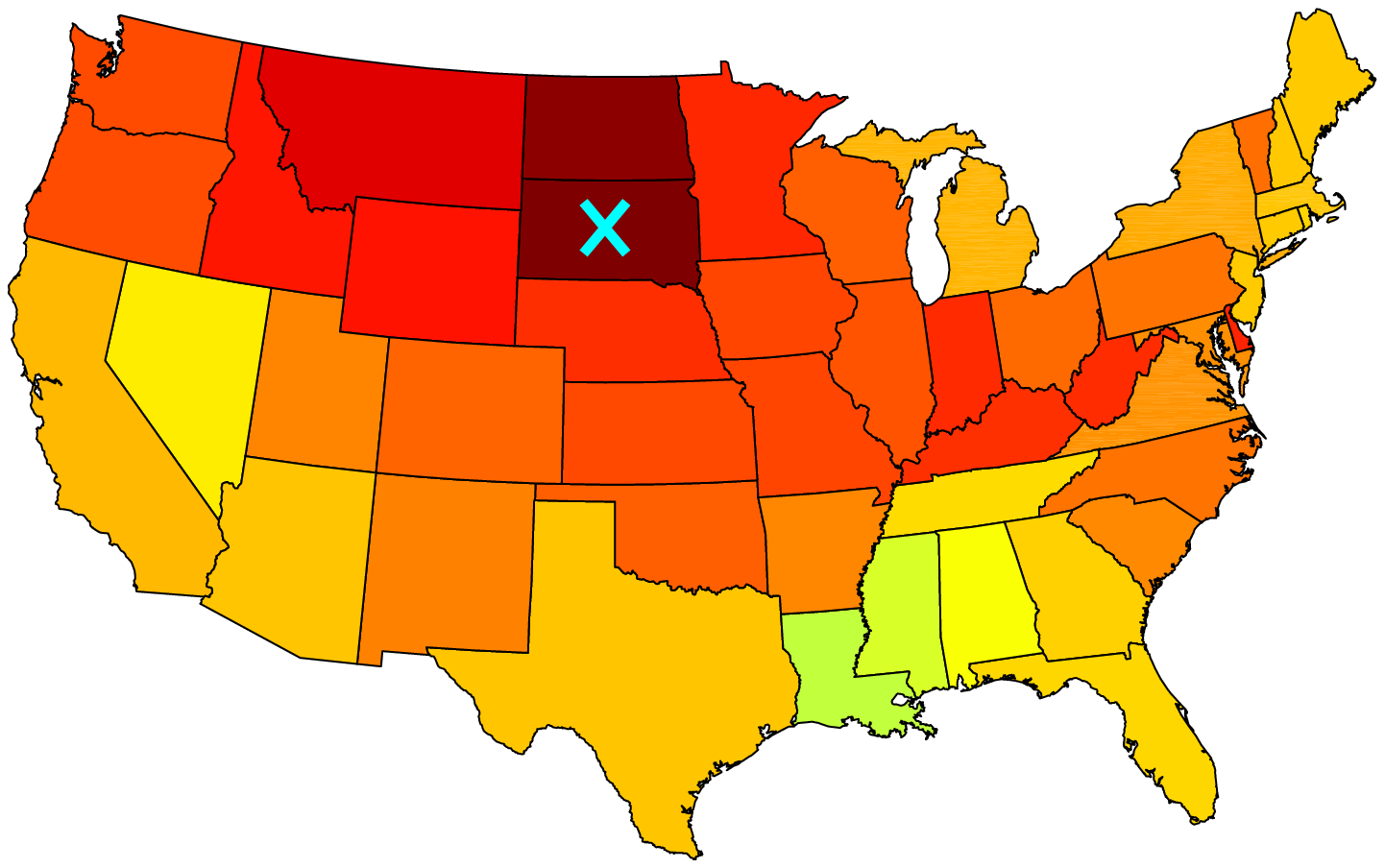}
	\end{tabular}
	\caption{For each of four geographically distinct states (New York, California, Georgia, and South Dakota), plots of correlations between the state and all other states based on the sample covariance estimate from aggregating the state-level data during the event periods B-F after subtracting a moving average estimate of the mean.  Event A was omitted due to an insufficient number of states reporting.  Note that South Dakota is missing 58 of the 157 Event B-F observations.}\label{fig:flu_mapsEst} \postcap \vspace{0.1in}
\end{figure}

\subsection{Heteroscedastic Modeling of Google Flu Trends}
Our proposed heteroscedastic model allows one to capture both spatial and temporal changes in correlation structure, providing an important additional tool in predicting influenza rates.  We specifically consider $y_i \sim \mathcal{N}(\mu(x_i),\Sigma(x_i))$ with the nonparametric function $\mu(x_i) = \Theta \xi(x_i) \psi(x_i)$ defining the mean of the ILI rates in each of the 183 regions.  For a given week $x_i$, the spatial correlation structure is captured by the covariance $\Sigma(x_i) = \Theta \xi(x_i)\xi(x_i)'\Theta' + \Sigma_0$.  Temporal changes are implicitly modeled through the proposed covariance regression framework that allows for continuous variations in $\Sigma(x_i)$.  \cite{Dukic:09} also examine portions of the Google Flu Trends data, but with the goal of on-line tracking of influenza rates on either a national, state, or regional level.  Specifically, they employ a state-space model with particle learning.  Our goal differs considerably.  We aim to jointly analyze the full 183-dimensional data, as opposed to univariate modeling.  Through such joint modeling, we can uncover important spatial dependencies lost when analyzing components of the data individually.  Such spatial information can be key in predicting influenza rates based on partial observations from select regions or in retrospectively imputing missing data.

There are a few crucial points to note.  The first is that no geographic information is provided to our model.  Instead, the spatial structure is uncovered simply from analyzing the raw 183-dimensional time series and patterns therein.  Second, because of the substantial quantity of missing data, imputing the missing values as in Section~\ref{sec:pred} is less ideal than simply updating our posterior based solely on the data that is available.  The latter is how we chose to analyze the Google Flu Trends data---our ability to do so without introducing any approximations is a key advantage of our proposed methodology.  
\begin{figure}[t!]
	\centering
	\begin{tabular}{cc}
		\hspace{-0.25in}\begin{tabular}{c}
			\rotatebox{90}{\textbf{New York}} \vspace{0.4in}\\
			\rotatebox{90}{\textbf{California}}\vspace{0.5in}\\
			\rotatebox{90}{\textbf{Georgia}} \vspace{0.4in}\\
			\rotatebox{90}{\textbf{South Dakota}} 
		\end{tabular}
	\begin{tabular}{ccc}
		\hspace{-0.2in}
		\includegraphics[width = 1.75in]{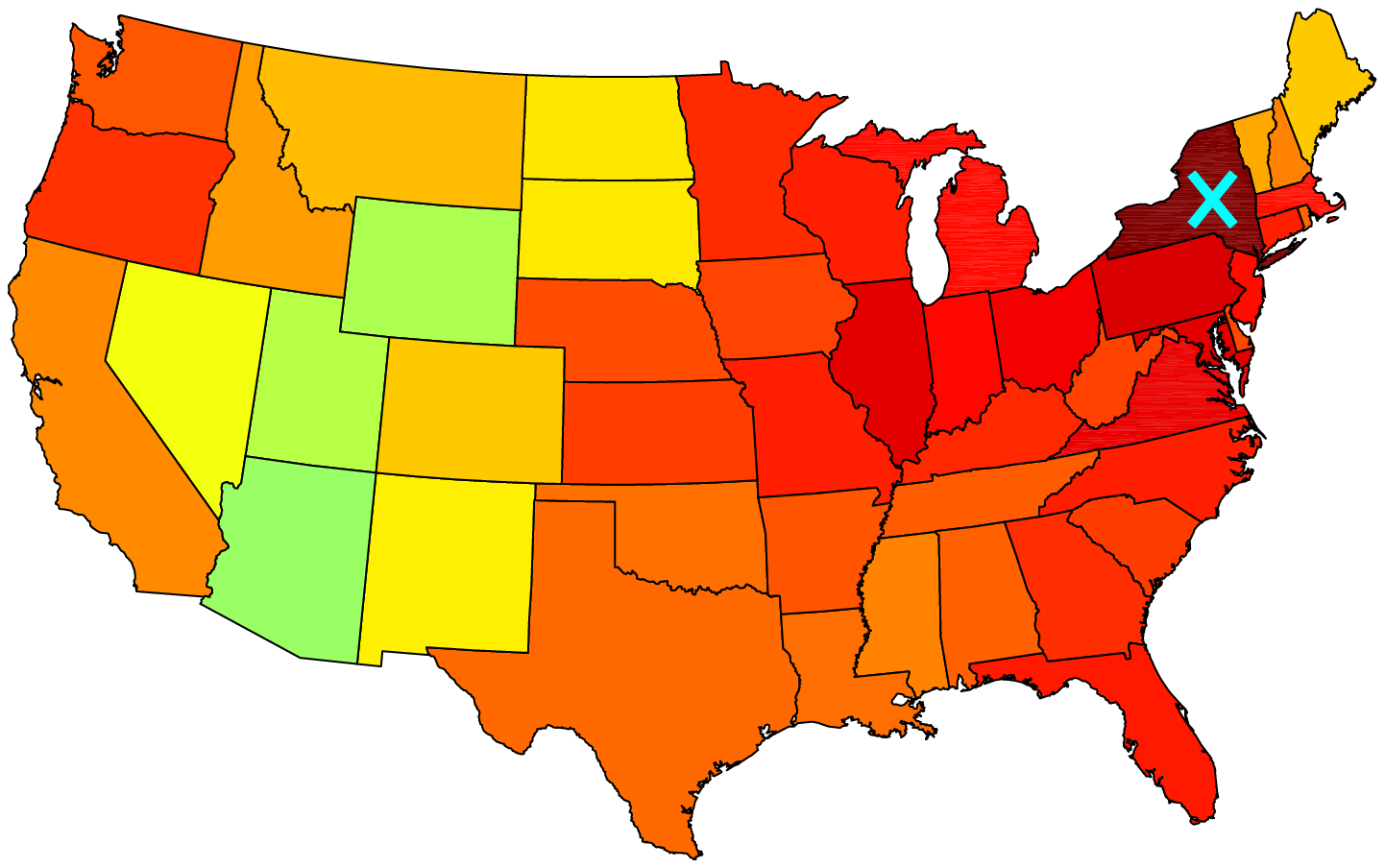} & \hspace{-0.2in}
		\includegraphics[width = 1.75in]{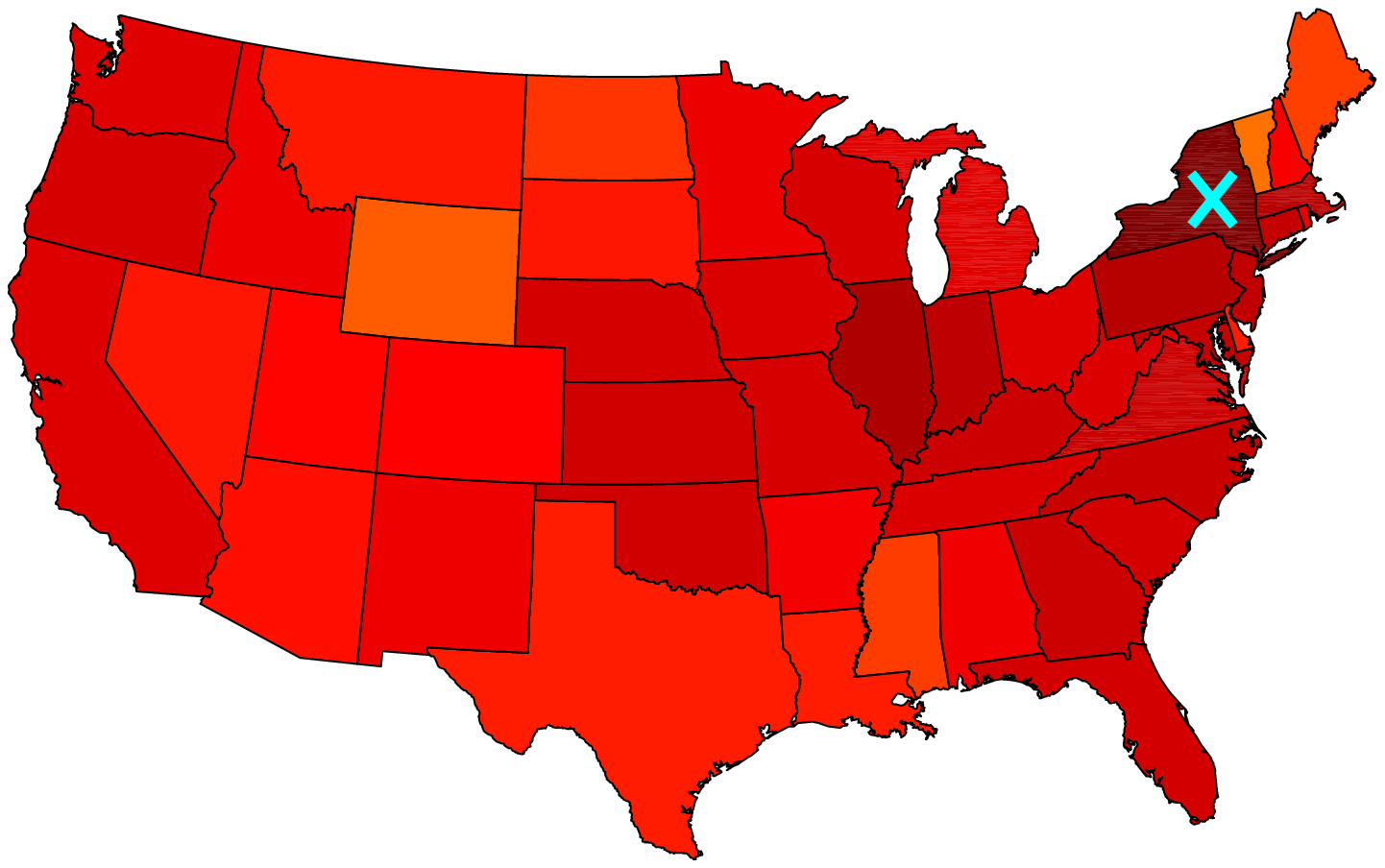} & \hspace{-0.2in}
		\includegraphics[width = 1.75in]{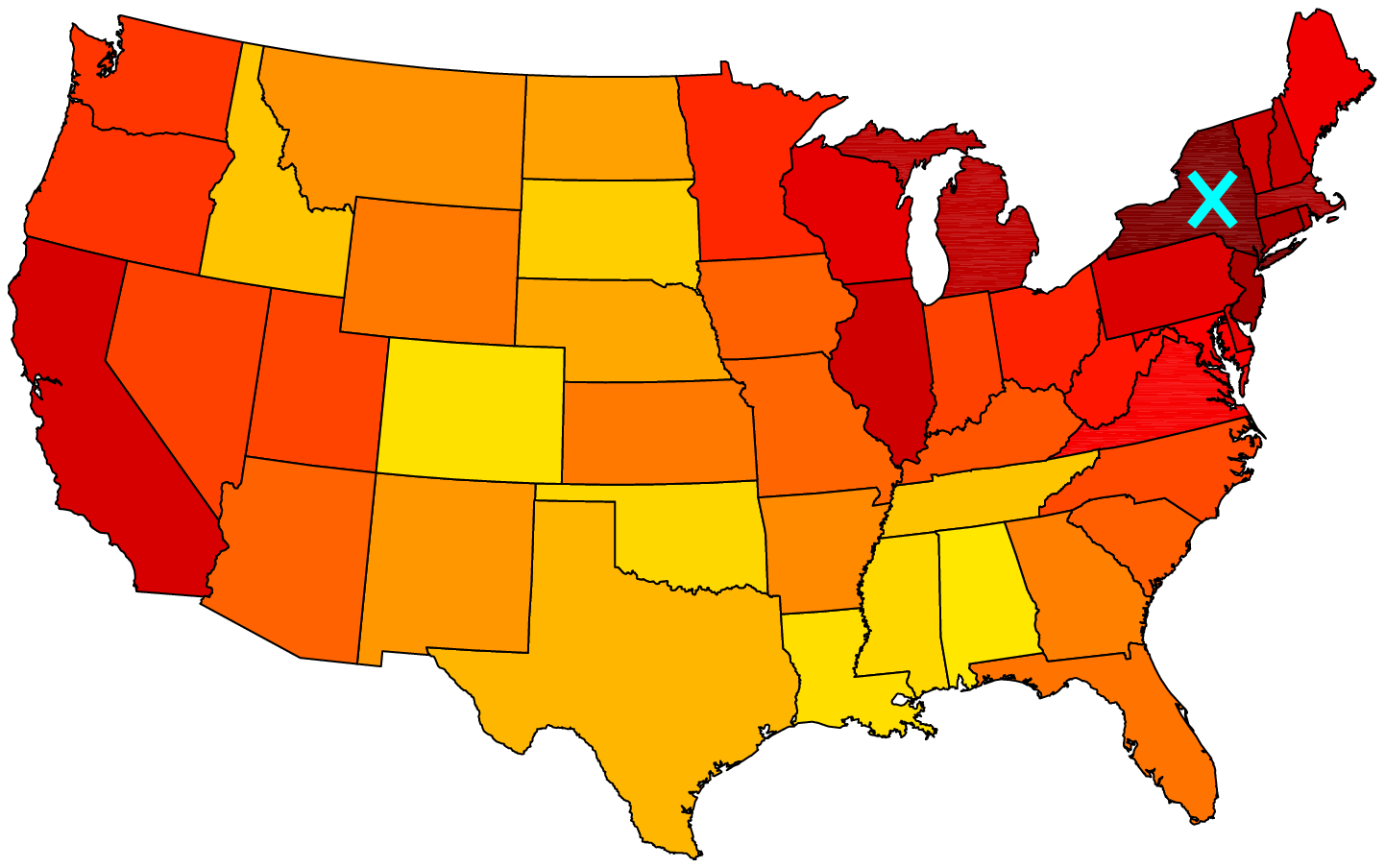} \\
		\includegraphics[width = 1.75in]{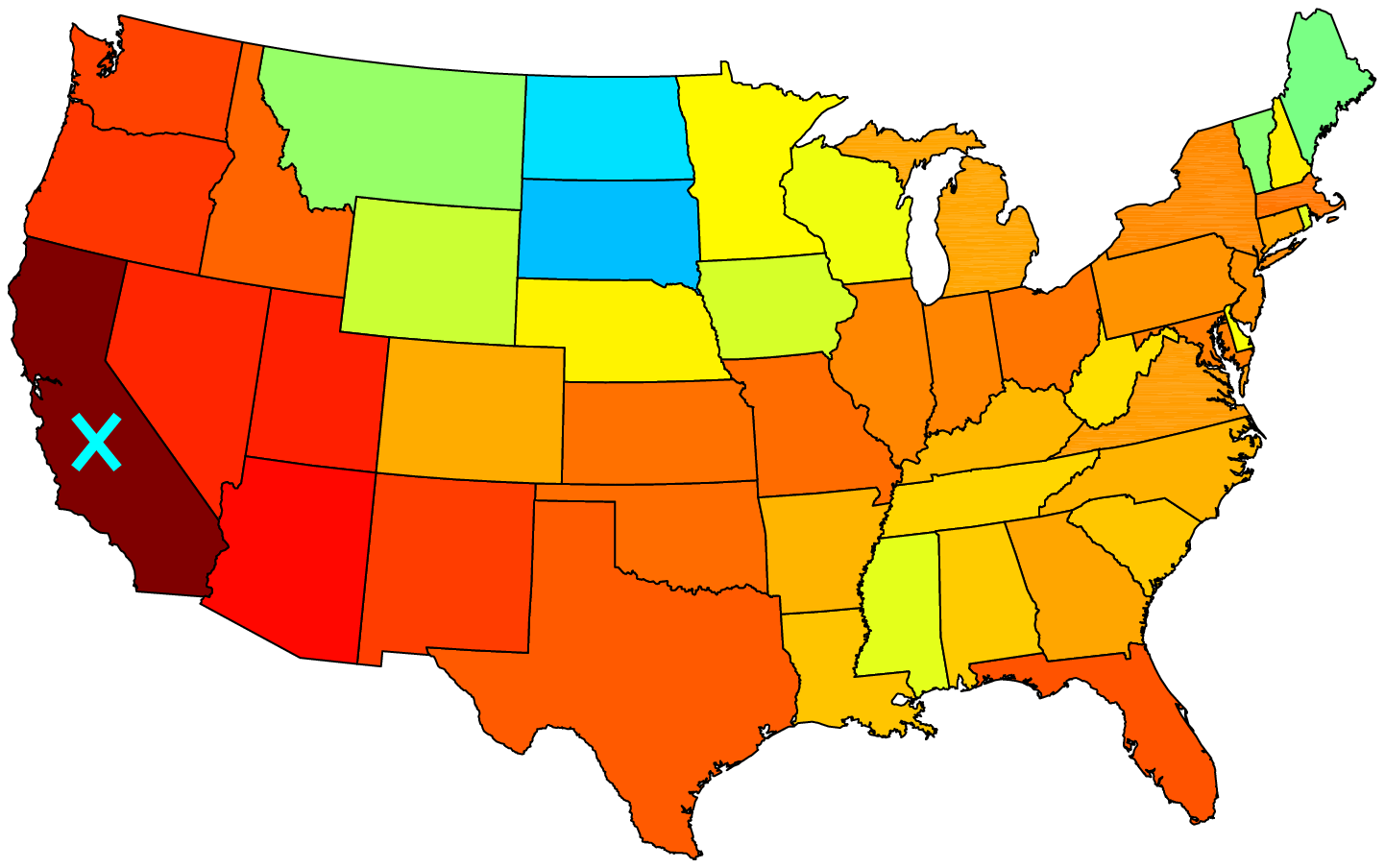} & \hspace{-0.2in}
		\includegraphics[width = 1.75in]{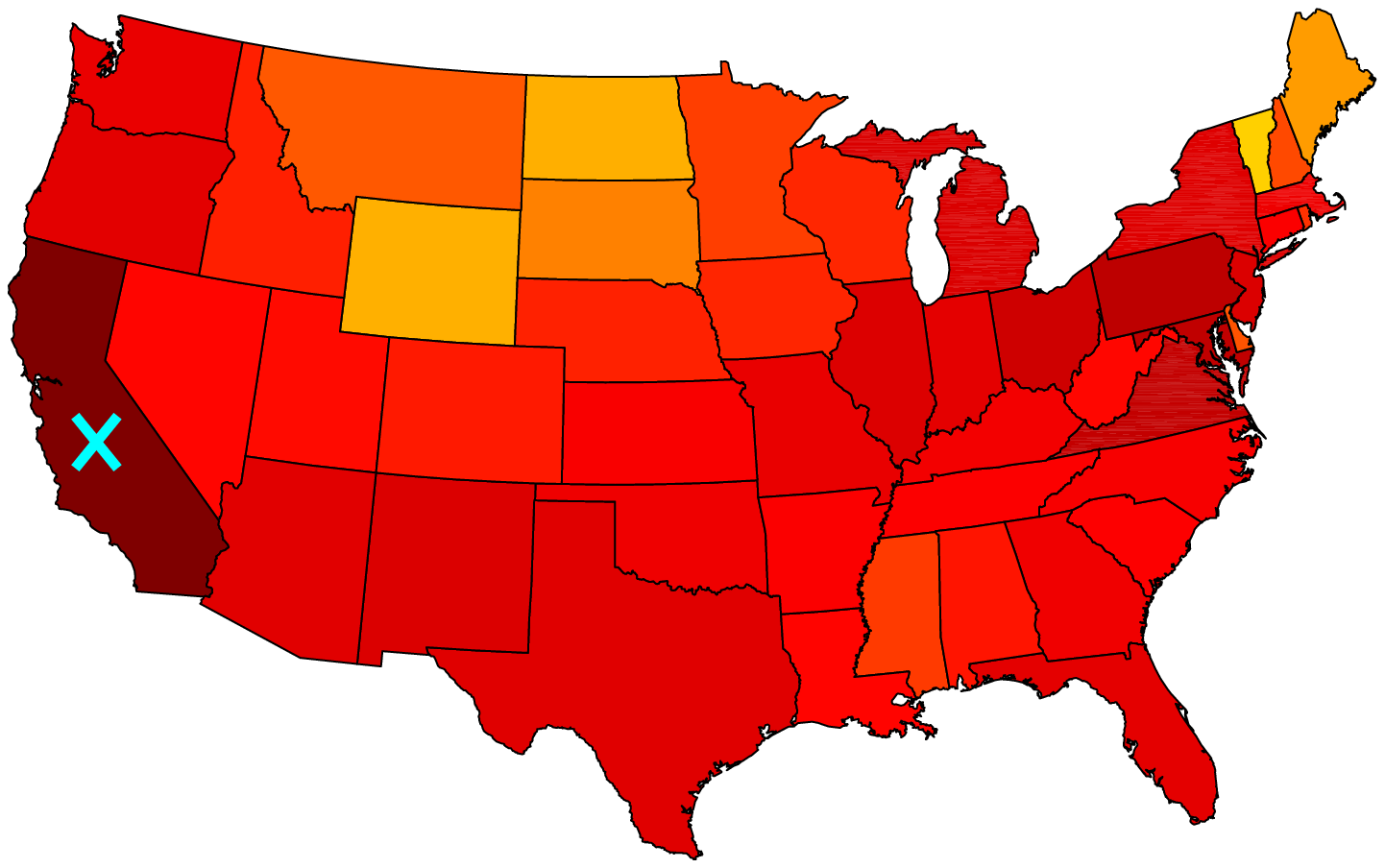} & \hspace{-0.2in}
		\includegraphics[width = 1.75in]{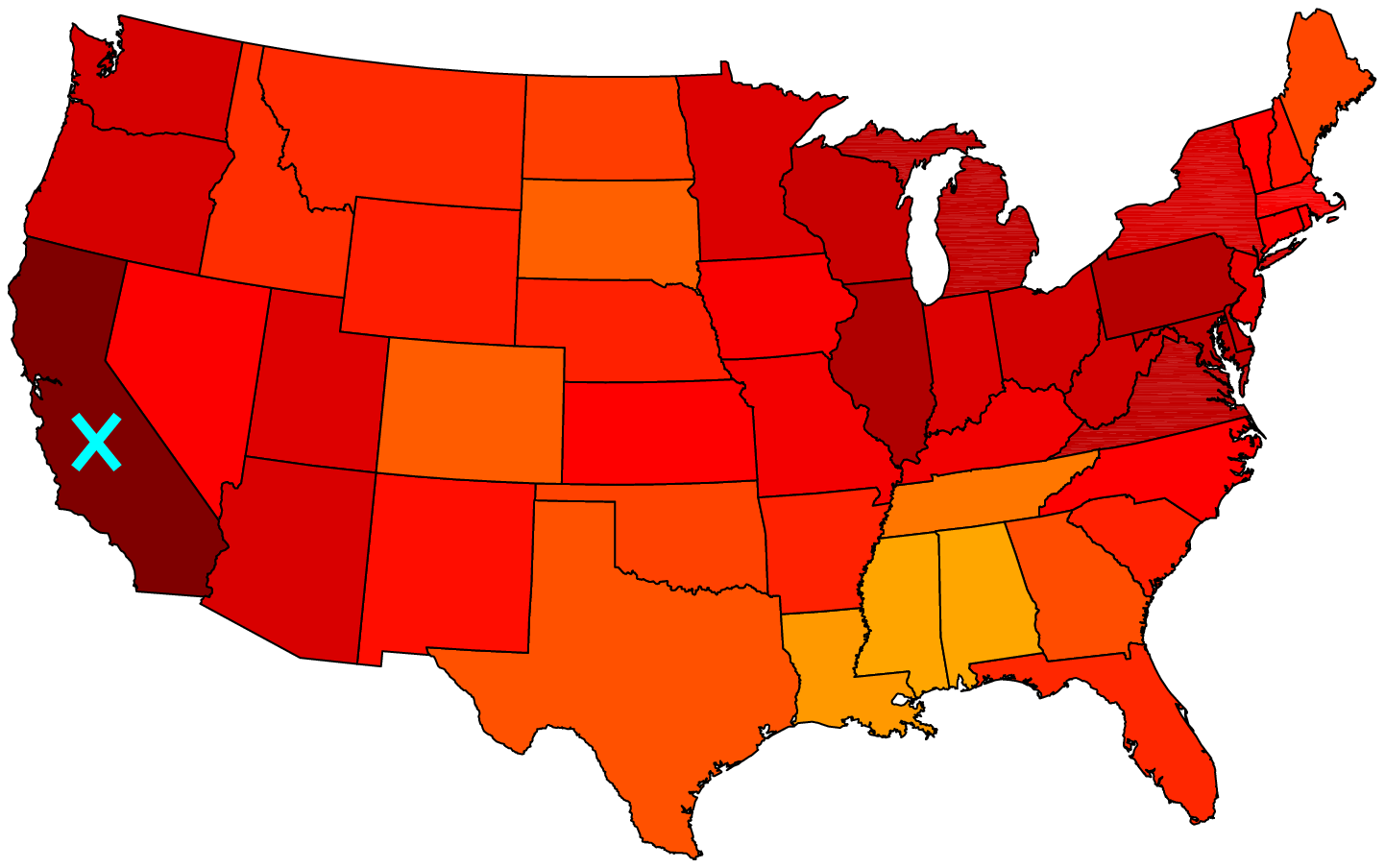} \\
		\includegraphics[width = 1.75in]{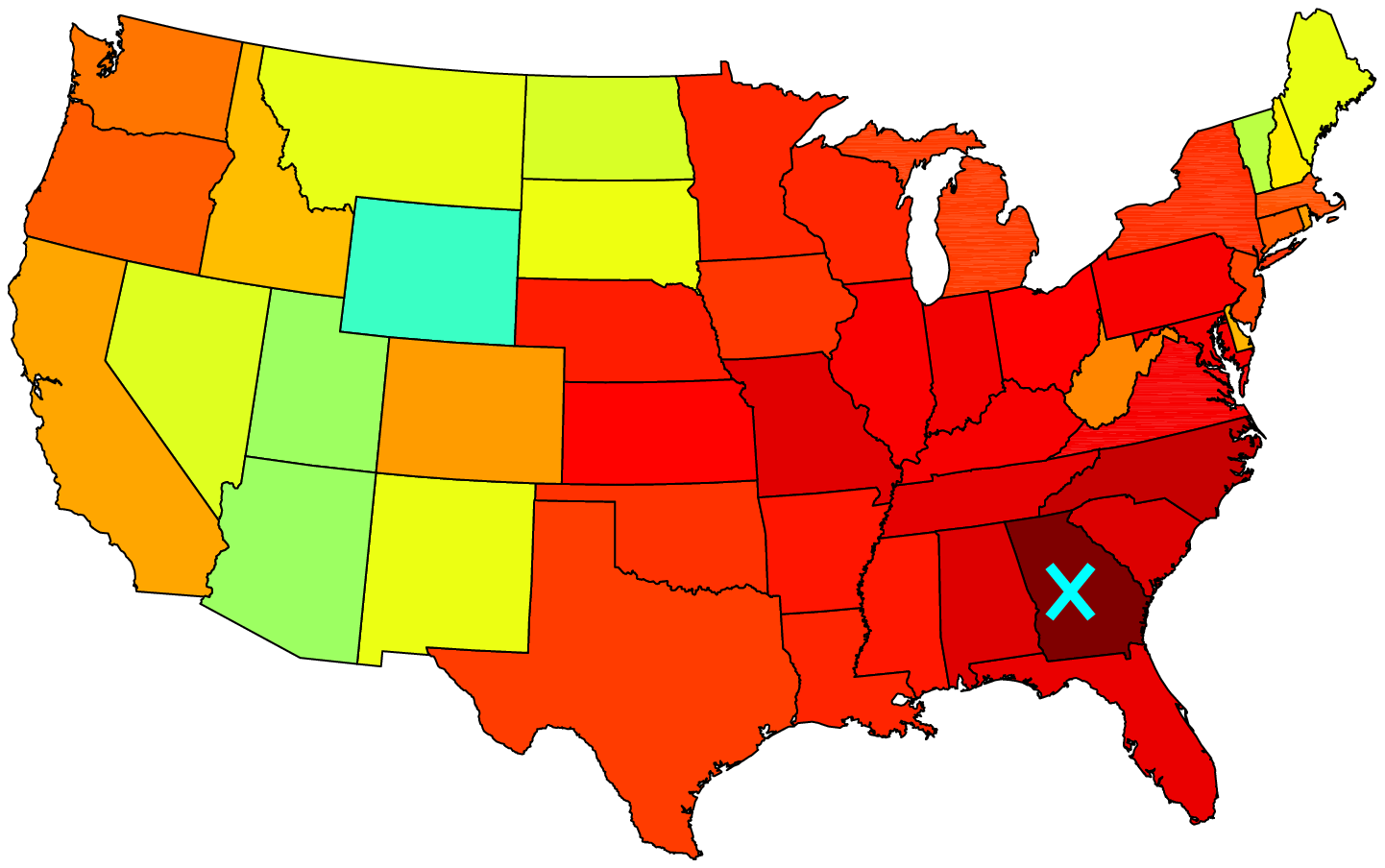} & \hspace{-0.2in}
		\includegraphics[width = 1.75in]{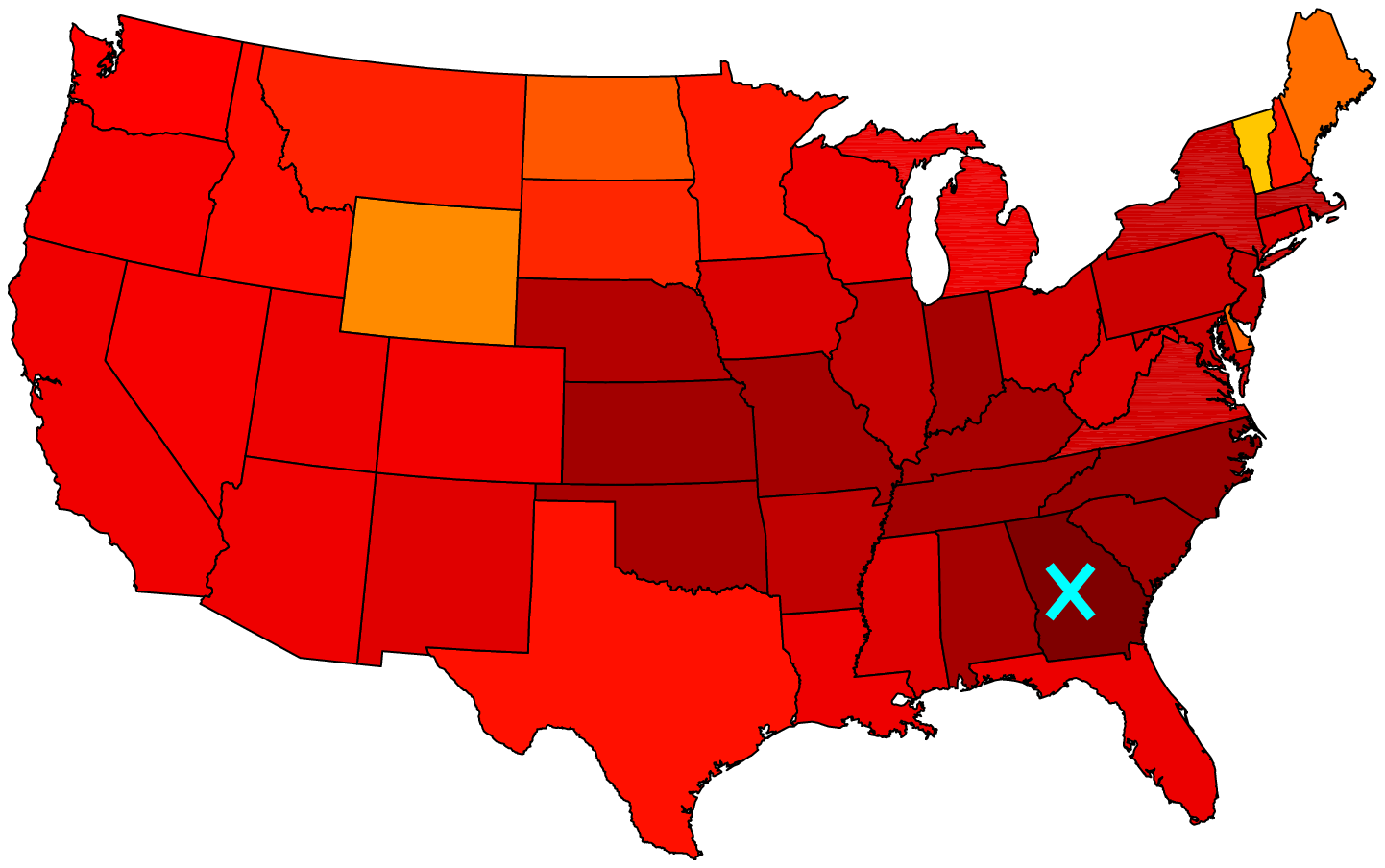} & \hspace{-0.2in}
		\includegraphics[width = 1.75in]{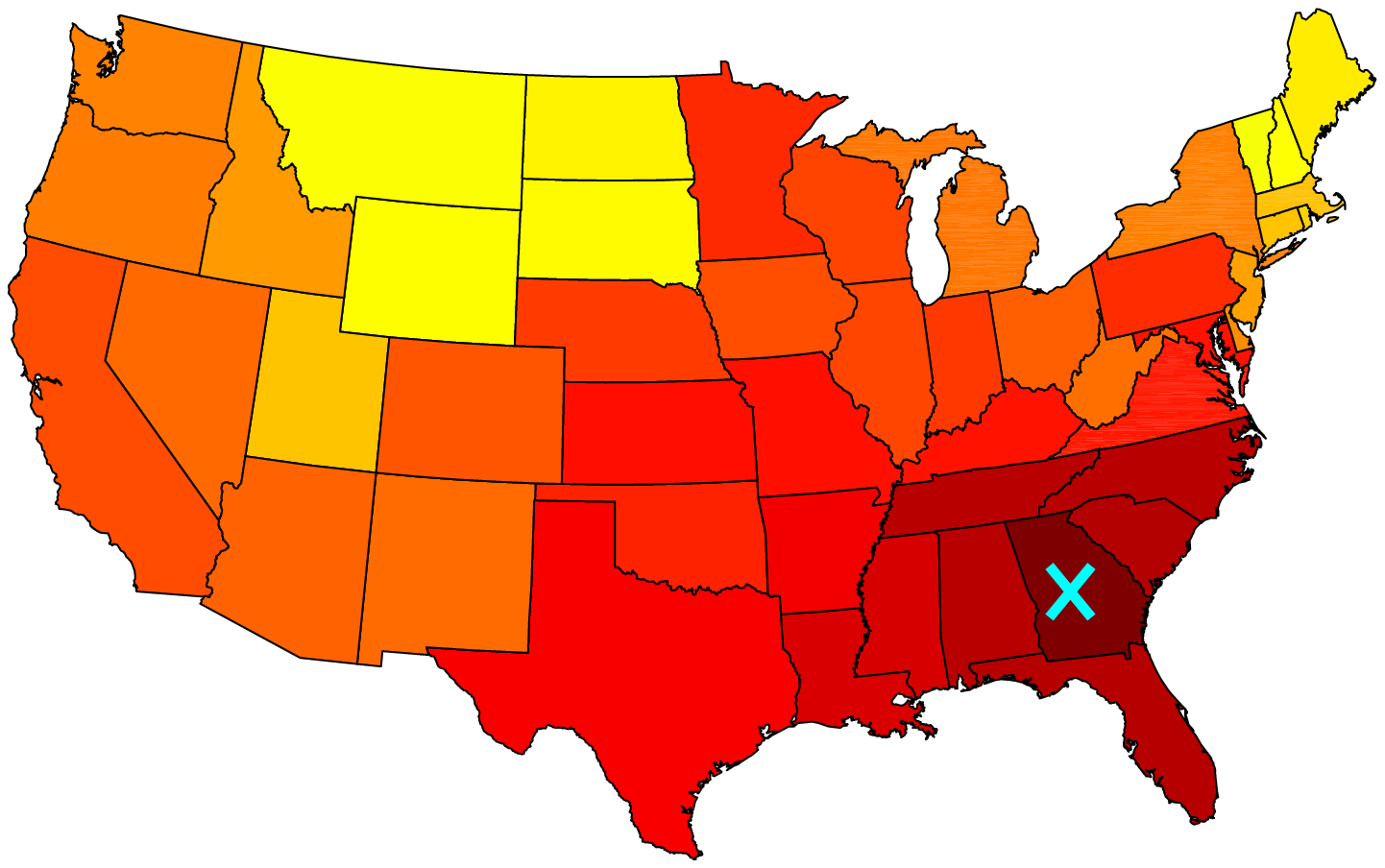} \\
		\includegraphics[width = 1.75in]{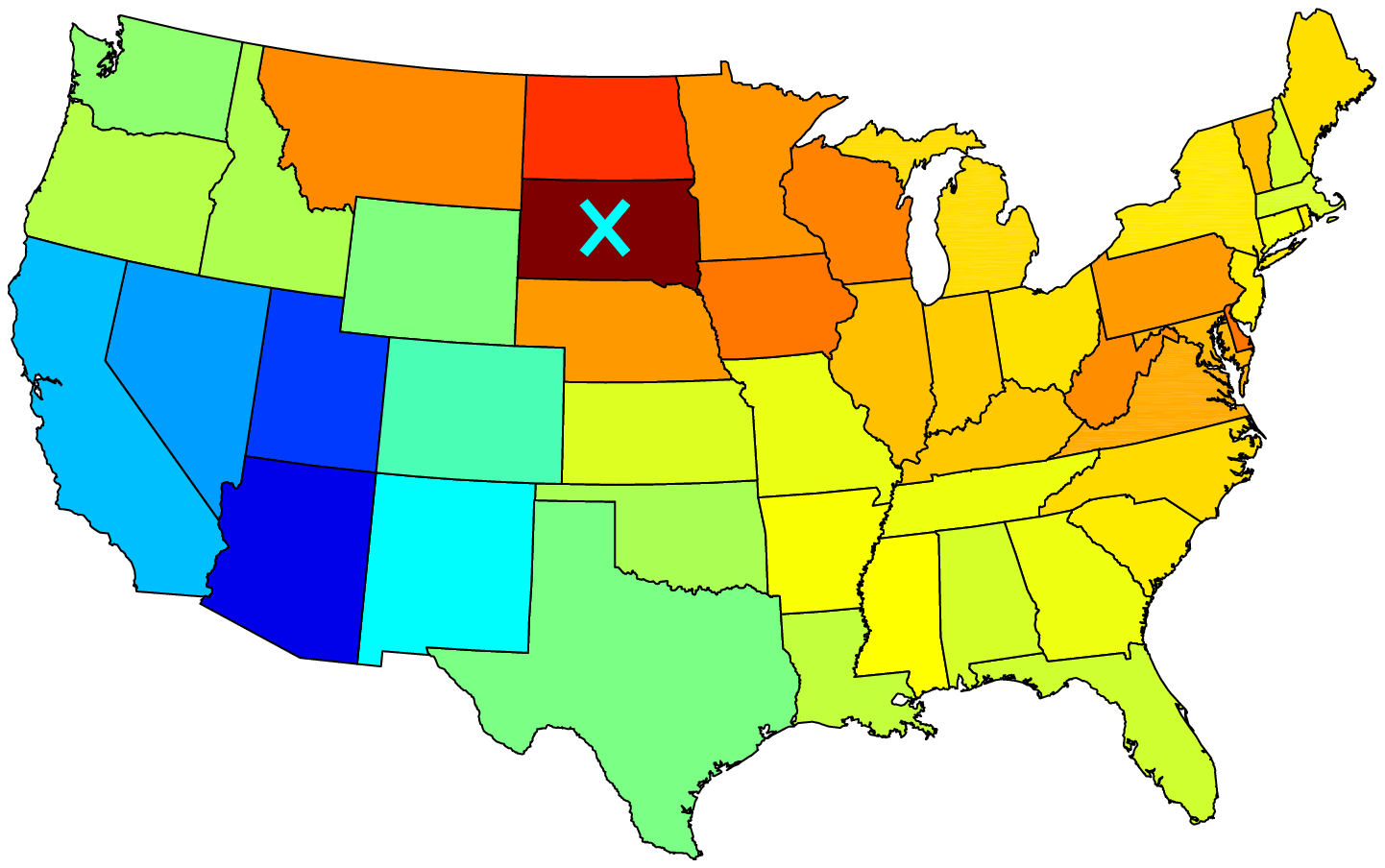} & \hspace{-0.2in}
		\includegraphics[width = 1.75in]{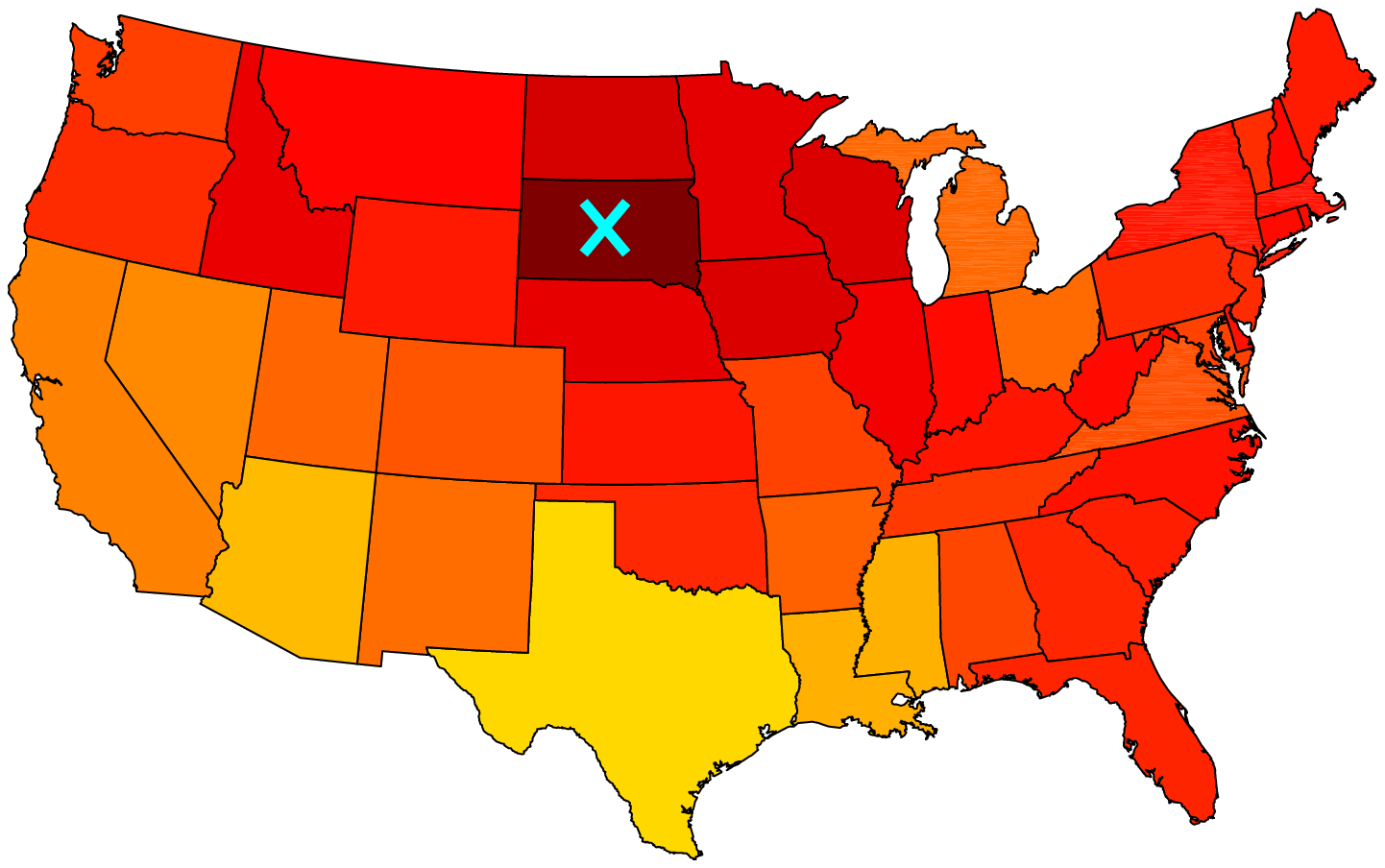} & \hspace{-0.2in}
		\includegraphics[width = 1.75in]{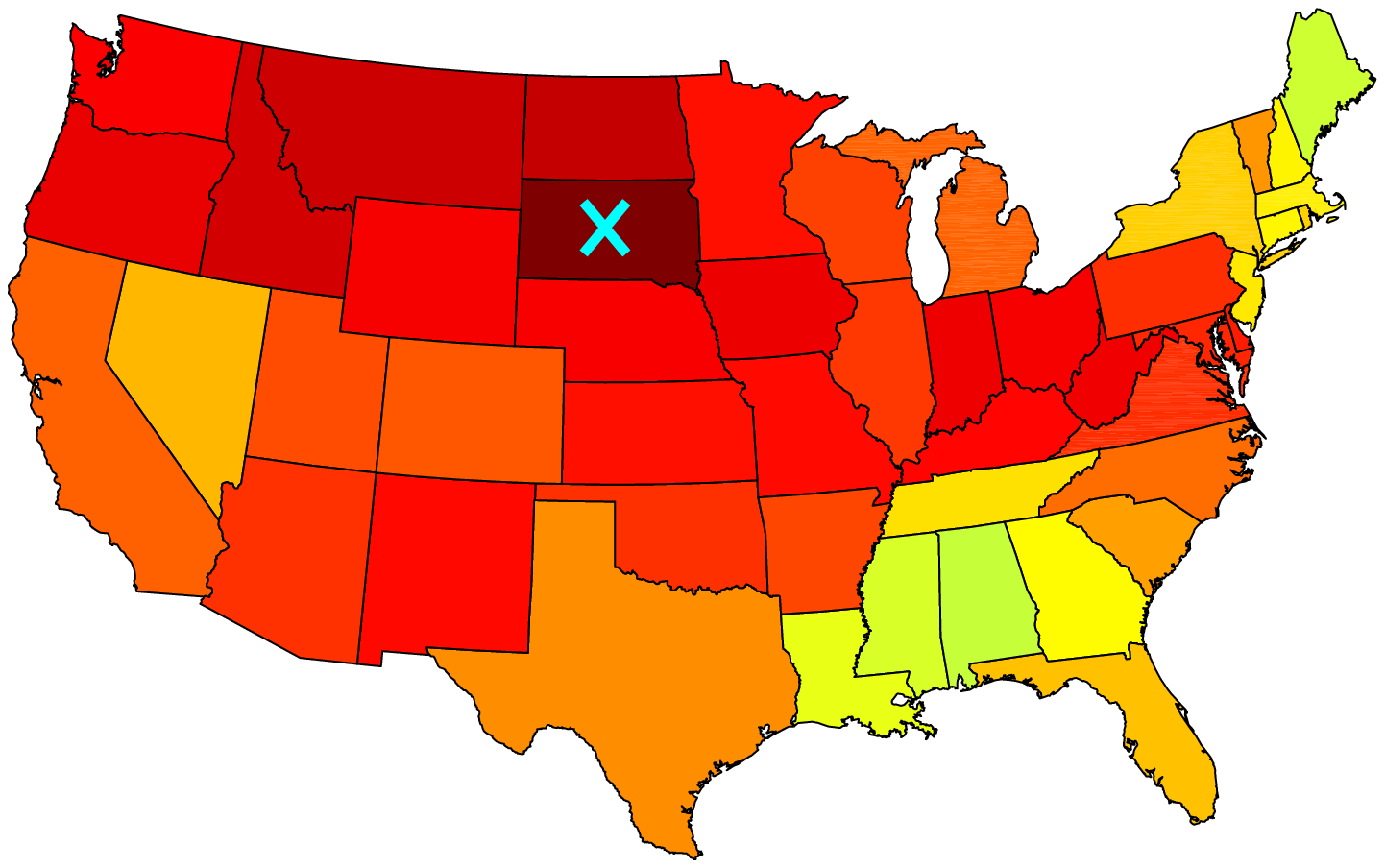} \\
		\hspace{-0.2in} \textbf{February 2006} & \hspace{-0.2in}\textbf{February 2008} & \hspace{-0.2in}\textbf{November 2009}\\
	\end{tabular}
\end{tabular}
	\caption{For each of four geographically distinct states (New York, California, Georgia, and South Dakota) and each of three key dates (February 2006 of Event C, February 2008 of Event E, and November 2009 of Event F), plots of correlations between the state and all other states based on the posterior mean $\hat{\Sigma}(x)$ of the covariance function.  The plots clearly indicate spatial structure captured by $\Sigma(x)$, and that these spatial dependencies change over time.  Note that no geographic information was included in our model.  Compare to the maps of Figure~\ref{fig:flu_mapsEst}.}\label{fig:flu_maps} \postcap \vspace{0.1in}
\end{figure}

The results presented in Figures~\ref{fig:flu_traces} and~\ref{fig:flu_maps} clearly demonstrate that we are able to capture both spatial and temporal changes in correlations in the Google Flu Trends data, even in the presence of substantial missing information.  We preprocessed the data by scaling the entire dataset by one over the largest variance of any of the 183 time series.  The hyperparameters were set as in the simulation study of Section~\ref{sec:sim}, except with larger truncation levels $L^*=10$ and $k^*=20$ and with the Gaussian process length-scale hyperparameter set to $\kappa=100$ to account for the time scale (weeks) and the rate at which ILI incidences change.  Once again, by examining posterior samples of $\Theta$, we found that the chosen truncation level was sufficiently large.  We ran 10 chains each for 10,000 MCMC iterations, discarded the first 5,000 for burn-in, and then thinned the chains by examining every 10 samples.  Each chain was initialized with parameters sampled from the prior.  To assess convergence, we performed the modified Gelman-Rubin diagnostic of~\cite{GelmanRubin} on the MCMC samples of the variance terms $\Sigma_{jj}(x_i)$ for $j$ corresponding to the state indices of New York, California, Georgia, and South Dakota, and $x_i$ corresponding to the midpoint of each of the 12 event and non-event time windows\footnote{Our tests used the code provided at http://www.lce.hut.fi/research/mm/mcmcdiag/.}. These elements are spatially and geographically disparate, with South Dakota corresponding to an element with substantial missing data.  Of the 48 resulting variables examined (4 states and 12 time points), 40 had potential scale reduction factors $R^{1/2}<1.2$ and most $R^{1/2}<1.1$.  The variables with larger $R^{1/2}$ (all less than 1.4) corresponded to two non-event time periods.  We also performed hyperparameter sensitivity, doubling the length-scale parameter to $\kappa=200$ (implying less temporal correlation) and using a larger truncation level of $L^*=k^*=20$ with less stringent shrinkage hyperparameters $a_1=a_2=2$ (instead of $a_1=a_2=10$).  The results were very similar to those presented in this section, with all conclusions remaining the same.

In Figure~\ref{fig:flu_traces}(b), we plot the posterior mean of the 183 components of $\mu(x)$, showing trends that follow the Google estimated US national ILI rate.  For New York, in Figure~\ref{fig:flu_traces}(c) we plot the 25th, 50th, and 75th quantiles of correlation with the 182 other states and regions based on the posterior mean $\hat{\Sigma}(x)$ of the covariance function.  From this plot, we immediately notice that regions become more correlated during flu seasons, as we would expect.  The specific geographic structure of these correlations is displayed in Figure~\ref{fig:flu_maps}, where we see key changes with the specific flu event.  In the more mild 2005-2006 season, we see much more local correlation structure than the more severe 2007-2008 season (which still maintains stronger regional than distant correlations.)  The November 2009 H1N1 event displays overall regional correlation structure and values similar to the 2007-2008 season, but with key geographic areas that are less correlated.  Note that some geographically distant states, such as New York and California, are often highly correlated as we might expect based on their demographic similarities and high rates of travel between them.  Interestingly, the strong local spatial correlation structure for South Dakota in February 2006 has been inferred before any data are available for that state.  Actually, no data are available for South Dakota from September 2003 to November 2006.  Despite this missing data, the inferred correlation structures over these years are fairly consistent with those of neighboring states and change in manners similar to the flu-to-non-flu changes inferred after data for South Dakota are available.  

Comparing the maps of Figure~\ref{fig:flu_maps} to those of the sample-based estimates in Figure~\ref{fig:flu_mapsEst}, we see much of the same correlation structure, which at a high level validates our findings.  Since the sample-based estimates aggregate data over Events B-F (containing those displayed in Figure~\ref{fig:flu_maps}), they tend to represent a time-average of the event-specific correlation structure we uncovered.  Note that due to the dimensionality of the dataset, the sample-based estimates are based solely on state-level measurements and thus are unable to harness the richness (and crucial redundancy) provided by the other regional reporting agencies.  Furthermore, since there are a limited number of per-event observations, the naive approach of forming sample covariances based on local bins of data is infeasible.  The high-dimensionality and missing data structure of this dataset also limit our ability to compare to alternative methods such as those cited in Section~\ref{sec:intro}---none yield results directly comparable to the full analysis we have provided here.  Instead, they are either limited to examination of the small subset of data for which all observations are present and/or a lower-dimensional selection (or projection) of observations.  On the other hand, our proposed algorithm can readily utilize all information available to model the heteroscedasticity present here.  (Compare to the common GARCH models, which cannot handle missing data and are limited to typically no more than 5 dimensions.)  In terms of computational complexity, each of our chains of 10,000 Gibbs iterations based on a naive implementation in MATLAB (R2010b) took approximately 12 hours on a machine with four Intel Xeon X5550 Quad-Core 2.67GHz processors and 48 GB of RAM.  

\section{Discussion}
\label{sec:discussion}

In this paper, we have presented a Bayesian nonparametric approach to covariance regression which allows an unknown $p \times p$ dimensional covariance matrix $\Sigma(x)$ to vary flexibly over $x \in \mathcal{X}$, where $\mathcal{X}$ is some arbitrary (potentially multivariate) predictor space.  Foundational to our formulation is quadratic mixing over a collection of dictionary elements, assumed herein to be Gaussian process random functions, defined over $\mathcal{X}$.  By inducing a prior on $\Sigma_{\mathcal{X}} = \{\Sigma(x), x \in \mathcal{X}\}$ through a shrinkage prior on a predictor-dependent latent factor model, we are able to scale to the large $p$ domain.  Our proposed methodology also yields computationally tractable algorithms for posterior inference via fully conjugate Gibbs updates---this is crucial in our being able to analyze high-dimensional multivariate datasets.  We demonstrated the utility of our Bayesian covariance regression framework through both simulated studies and analysis of the Google Trends flu dataset, the latter having nearly 200 dimensions.

There are many possible interesting and relatively direct extensions of the proposed covariance regression framework.  The most immediate are those that fall into the categories of (i) addressing the limitations of our current assumption of Gaussian marginals, and (ii) scaling to datasets with large numbers of observations.  

In longitudinal studies or spatio-temporal datasets, one is faced with repeated collections of observations over the predictor space.  These collections are clearly not independent.  To cope with such data, one could envision embedding the current framework within a hierarchical model (e.g., to model random effects on a mean), or otherwise use the proposed methodology as a building block in more complicated models.  Additionally, it would be trivial to extend our framework to accommodate multivariate categorical responses, or joint categorical and continuous responses, by employing the latent variable probit model of~\cite{AlbertChib:93}.  This would lead, for example, to a more flexible class of multivariate probit models in which the correlation between variables can change with time and other factors.  For computation, the use of parameter expansion allows us to simply modify our current MCMC algorithm to include a data augmentation step for imputing the underlying continuous variables.  Imposing the constraints on the covariance could be deferred to a post-processing stage.  Another interesting direction for future research is to consider embedding our covariance regression model in a Gaussian copula model.  One possible means of accomplishing this is through a variant of the approach proposed by~\cite{Hoff:07}, which avoids having to completely specify the marginal distributions.  

As discussed in Section~\ref{sec:compIssues}, our sampler relies on $L* \times k*$ draws from an $n$-dimensional Gaussian (i.e., posterior draws of our Gaussian process random dictionary functions).  For very large $n$, this becomes infeasible in practice since computations are, in general, $O(n^3)$.  Standard tools for scaling up Gaussian process computation to large datasets, such as covariance tapering~\citep{Kaufman:08,Du:09} and the predictive process~\citep{Banerjee:08}, can be applied directly in our context.  Additionally, one might consider using the integrated nested Laplace approximations of~\cite{Rue:09} for computations.  The size of the dataset (both in terms of $n$ and $p$) also dramatically affects our ability to sample the Gaussian process length-scale hyperparameter $\kappa$ since our proposed method relies on samples from an $np$-dimensional Gaussian.  See Section~\ref{sec:hyp} for details and possible methods of addressing this issue.  If it is feasible to perform inference over the length-scale parameter, one can consider implicitly including a test for homoscedasticity by considering $\kappa$ taking values in the extended reals (i.e., $\Re \bigcup \{\infty\}$) and thus allowing our formulation to collapse on the simpler model if $\kappa=\infty$.

Finally, we note that there are scenarios in which the functional data itself is covariance-valued, such as in diffusion tensor imaging.  In this case, each voxel in an image consists of a $3 \times 3$ covariance matrix that has potential spatio-temporal dependencies.  Specifically, take $\Sigma_{ij}(t)$ to represent the observed covariance matrix for subject $i$ at pixel $j$ and time $t$.  Here, one could imagine replacing the Gaussian process dictionary elements with splines and embedding this model within a hierarchical structure to allow variability among subjects while borrowing information.

As we readily see, the presented Bayesian nonparametric covariance regression framework easily lends itself to many interesting directions for future research with the potential for dramatic impact in many applied domains.

\section*{Acknowledgements}
The authors would like to thank Surya Tokdar for helpful discussions on the proof of prior support for the proposed covariance regression formulation.

\appendix

\section{Proofs of Theorems and Lemmas}


\begin{proof}[Proof of Theorem~\ref{thm:priorsupport}]
Since $\mathcal{X}$ is compact, for every $\epsilon_0>0$ there exists an open covering of $\epsilon_0$-balls $B_{\epsilon_0}(x_0) = \{x : ||x-x_0||_2 < \epsilon_0\}$ with a finite subcover such that $\bigcup_{x_0 \in \mathcal{X}_0} B_{\epsilon_0}(x_0) \supset \mathcal{X}$, where $|\mathcal{X}_0|=n$.  Then,
\begin{align}
	\PSigma \left( \sup_{x\in\mathcal{X}} ||\Sigma(x)-\Sigma^*(x)||_2 < \epsilon \right) = \PSigma \left( \max_{x_0 \in \mathcal{X}_0}\sup_{x\in B_{\epsilon_0}(x_0)} ||\Sigma(x)-\Sigma^*(x)||_2 < \epsilon \right).
\end{align}
Define $Z(x_0) = \sup_{x\in B_{\epsilon_0}(x_0)} ||\Sigma(x)-\Sigma^*(x)||_2$. Since
\begin{align}
	\PSigma \left( \max_{x_0 \in \mathcal{X_0}} Z(x_0) < \epsilon \right) > 0 \iff \PSigma \left( Z(x_0) < \epsilon \right) > 0, \, \forall x_0 \in \mathcal{X}_0,
\end{align}
we only need to look at each $\epsilon_0$-ball independently, which we do as follows.
\begin{align}
	\PSigma \left(\sup_{x\in B_{\epsilon_0}(x_0)} ||\Sigma(x)-\Sigma^*(x)||_2 < \epsilon \right)\nonumber\\
	&\hspace{-2.25in}\geq
	  \PSigma \left( ||\Sigma(x_0)-\Sigma^*(x_0)||_2 + \hspace{-0.1in}\sup_{x\in B_{\epsilon_0}(x_0)} ||\Sigma^*(x_0)-\Sigma^*(x)||_2 + \hspace{-0.1in}\sup_{x\in B_{\epsilon_0}(x_0)} ||\Sigma(x_0)-\Sigma(x)||_2< \epsilon \right)\nonumber\\
	&\hspace{-2.25in}\geq \PSigma \left( ||\Sigma(x_0)-\Sigma^*(x_0)||_2 < \epsilon/3\right)\nonumber\\
	&\hspace{-2in} \cdot\PSigma\left(\sup_{x\in B_{\epsilon_0}(x_0)} ||\Sigma^*(x_0)-\Sigma^*(x)||_2 < \epsilon/3\right)\cdot\PSigma\left(\sup_{x\in B_{\epsilon_0}(x_0)} ||\Sigma(x_0)-\Sigma(x)||_2< \epsilon/3 \right),
	\label{eqn:3delta}
\end{align}
where the first inequality comes from repeated uses of the triangle inequality, and the second inequality follows from the fact that each of these terms is an independent event.  We evaluate each of these terms in turn.  The first follows directly from the assumed continuity of $\Sigma^*(\cdot)$.  The second will follow from a statement of (almost sure) continuity of $\Sigma(\cdot)$ that arises from the (almost sure) continuity of the $\xi_{\ell k}(\cdot)\sim \mbox{GP}(0,c)$ and the shrinkage prior on $\theta_{\ell k}$ (i.e., $\theta_{\ell k}\rightarrow 0$ almost surely as $\ell \rightarrow \infty$, and does so ``fast enough''.)  Finally, the third will follow from the support of the conditionally Wishart prior on $\Sigma(x_0)$ at every fixed $x_0 \in \mathcal{X}$.

Based on the continuity of $\Sigma^*(\cdot)$, for all $\epsilon/3 > 0$ there exists an $\epsilon_{0,1}>0$ such that
\begin{align}
	||\Sigma^*(x_0)-\Sigma^*(x)||_2 < \epsilon/3, \quad \forall ||x-x_0||_2 < \epsilon_{0,1}.
\end{align}
Therefore, $\PSigma\left(\sup_{x\in B_{\epsilon_{0,1}}(x_0)} ||\Sigma^*(x_0)-\Sigma^*(x)||_2 < \epsilon/3\right)=1$.

Based on Theorem~\ref{thm:continuity}, each element of $\Lambda(\cdot)\triangleq \Theta\xi(\cdot)$ is almost surely continuous on $\mathcal{X}$ assuming $k$ finite.  Letting $g_{jk}(x) = [\Lambda(x)]_{jk}$, 
\begin{align}
[\Lambda(x)\Lambda(x)']_{ij} = \sum_{m=1}^k g_{im}(x)g_{jm}(x), \quad \forall x\in \mathcal{X}.
\label{eqn:GammaGamma}
\end{align}
Eq.~\eqref{eqn:GammaGamma} represents a finite sum over pairwise products of almost surely continuous functions, and thus results in a matrix $\Lambda(x)\Lambda(x)'$ with elements that are almost surely continuous on $\mathcal{X}$. Therefore, $\Sigma(x)=\Lambda(x)\Lambda(x)' + \Sigma_0 = \Theta\xi(x)\xi(x)'\Theta' + \Sigma_0$ is almost surely continuous on $\mathcal{X}$. We can then conclude that for all $\epsilon/3 > 0$ there exists an $\epsilon_{0,2} > 0$ such that
\begin{align}
	\PSigma\left(\sup_{x\in B_{\epsilon_{0,2}}(x_0)} ||\Sigma(x_0)-\Sigma(x)||_2 < \epsilon/3 \right) = 1.
\end{align}

To examine the third term, we first note that
\begin{multline}
	\PSigma \left( ||\Sigma(x_0)-\Sigma^*(x_0)||_2 < \epsilon/3\right)\\ = \PSigma \left( ||\Theta\xi(x_0)\xi(x_0)'\Theta' + \Sigma_0-\Theta^*\xi^*(x_0)\xi^*(x_0)'\Theta^{*'} + \Sigma^*_0||_2 < \epsilon/3\right),
\end{multline}
where $\{\xi^*(x_0),\Theta^*,\Sigma^*_0\}$ is any element of $\mathcal{X}_{\xi} \otimes \mathcal{X}_\Theta \otimes \mathcal{X}_{\Sigma_0}$ such that $\Sigma^*(x_0) = \Theta^*\xi^*(x_0)\xi^*(x_0)'\Theta^{*'} + \Sigma^*_0$.  Such a factorization exists by Lemma~\ref{lemma:factorization}. We can then bound this prior probability by
\begin{align}
	\PSigma \left( ||\Sigma(x_0)-\Sigma^*(x_0)||_2 < \epsilon/3\right) &\nonumber\\
	&\hspace{-2in}\geq \PSigma \left( ||\Theta\xi(x_0)\xi(x_0)'\Theta' -\Theta^*\xi^*(x_0)\xi^*(x_0)'\Theta^{*'}||_2 < \epsilon/6\right)\Pi_{\Sigma_0}\left( ||\Sigma_0-\Sigma^*_0||_2 < \epsilon/6\right)\nonumber\\
	&\hspace{-2in}\geq \PSigma \left( ||\Theta\xi(x_0)\xi(x_0)'\Theta' -\Theta^*\xi^*(x_0)\xi^*(x_0)'\Theta^{*'}||_2 < \epsilon/6\right)\Pi_{\Sigma_0}\left( ||\Sigma_0-\Sigma^*_0||_{\infty} < \epsilon/(6\sqrt{\pi})\right),
\end{align}
where the first inequality follows from the triangle inequality, and the second from the fact that for all $A \in \Re^{p\times p}$, $||A||_2 \leq \sqrt{p}||A||_{\infty}$, with the sup-norm defined as $||A||_{\infty} = \max_{1\leq i\leq p}\sum_{i=1}^p |a_{ij}|$.  Since $\Sigma_0 = diag(\sigma_1^2,\dots,\sigma_p^2)$ with $\sigma_i^2 \stackrel{i.i.d.}{\sim} \mbox{Ga}(a_\sigma,b_\sigma)$, the support of the gamma prior implies that
\begin{align}
	\Pi_{\Sigma_0}\left( ||\Sigma_0-\Sigma^*_0||_{\infty} < \epsilon/(6\sqrt{\pi})\right) = \Pi_{\Sigma_0}\left(\max_{1\leq i \leq p} |\sigma_i^2-\sigma_i^{*2}| < \epsilon/(6\sqrt{\pi}) \right) > 0.
\end{align}
Recalling that $[\xi(x_0)]_{\ell k} = \xi_{\ell k}(x_0)$ with $\xi_{\ell k}(x_0) \stackrel{i.i.d.}{\sim} \mathcal{N}(0,1)$ and taking $\Theta \in \Re^{p\times L}$ with $\mbox{rank}(\Theta)=p$,
%
\begin{align}
	\Theta\xi(x_0)\xi(x_0)'\Theta' \mid \Theta \sim \mbox{W}(k,\Theta\Theta').
\end{align}
When $k \geq p$, $\Theta\xi(x_0)\xi(x_0)'\Theta'$ is invertible (i.e., full rank) with probability 1.

By Assumption~\ref{assumption:rankTheta}, there is positive probability under $\Pi_{\Theta}$ on the set of $\Theta$ such that $\mbox{rank}(\Theta)=p$.  Since $\Theta^*\xi^*(x_0)\xi^*(x_0)'\Theta^{*'}$ is an arbitrary symmetric positive semidefinite matrix in $\Re^{p\times p}$, and based on the support of the Wishart distribution,
\begin{align}
	\PSigma\left(||\Theta\xi(x_0)\xi(x_0)'\Theta' -\Theta^*\xi^*(x_0)\xi^*(x_0)'\Theta^{*'}||_2 < \epsilon/6\right) > 0.
\end{align}
We thus conclude that $\PSigma \left( ||\Sigma(x_0)-\Sigma^*(x_0)||_2 < \epsilon/3\right) > 0$.

For every $\Sigma^*(\cdot)$ and $\epsilon> 0$, let $\epsilon_0 = \min(\epsilon_{0,1}, \epsilon_{0,2})$ with $\epsilon_{0,1}$ and $\epsilon_{0,2}$ defined as above. Then, combining the positivity results of each of the three terms in Eq.~\eqref{eqn:3delta} completes the proof.
\end{proof}

\begin{proof}[Proof of Theorem~\ref{thm:continuity}]
	We can represent each element of $\Lambda(\cdot)$ as follows:
	\begin{align}
		[\Lambda(\cdot)]_{jk} &= \lim_{L \rightarrow \infty}\begin{bmatrix}
			\begin{bmatrix}
				\theta_{11} & \theta_{12} & \dots & \theta_{1L}\\
				\theta_{21} & \theta_{22} & \dots & \theta_{2L}\\
				\vdots & \vdots & \ddots & \vdots \\
				\theta_{p1} & \theta_{p2} & \dots & \theta_{pL}
			\end{bmatrix}
			\begin{bmatrix}
				\xi_{11}(\cdot) & \xi_{12}(\cdot) & \dots & \xi_{1k}(\cdot)\\
				\xi_{21}(\cdot) & \xi_{22}(\cdot) & \dots & \xi_{2k}(\cdot)\\
				\vdots & \vdots & \ddots & \vdots \\
				\xi_{L1}(\cdot) & \xi_{L2}(\cdot) & \dots & \xi_{Lk}(\cdot)
			\end{bmatrix}
		\end{bmatrix}_{jk}
		= \sum_{\ell = 1}^\infty \theta_{j\ell}\xi_{\ell k}(\cdot).
	\end{align}	
	If $\xi_{\ell k}(x)$ is continuous for all $\ell, k$ and $s_n(x) = \sum_{\ell = 1}^\infty \theta_{j\ell}\xi_{\ell k}(x)$ uniformly converges almost surely to some $g_{jk}(x)$, then $g_{jk}(x)$ is almost surely continuous. That is, if for all $\epsilon > 0$ there exists an $N$ such that for all $n \geq N$
	\begin{align}
		\mbox{Pr}\left(\sup_{x \in \mathcal{X}} |g_{jk}(x) - s_n(x)| < \epsilon \right) = 1,
	\end{align}
	then $s_n(x)$ converges uniformly almost surely to $g_{jk}(x)$ and we can conclude that $g_{jk}(x)$ is continuous based on the definition of $s_n(x)$. To show almost sure uniform convergence, it is sufficient to show that there exists an $M_n$ with $\sum_{n=1}^\infty M_n$ almost surely convergent and
	\begin{align}
		\sup_{x\in \mathcal{X}} |\theta_{jn}\xi_{nk}(x)| \leq M_n.
	\end{align}
	Let $c_{nk} = \sup_{x\in \mathcal{X}}|\xi_{nk}(x)|$. Then,
	\begin{align}
		\sup_{x\in \mathcal{X}} |\theta_{jn}\xi_{nk}(x)| \leq |\theta_{jn}|c_{nk}.
	\end{align}
	Since $\xi_{nk}(\cdot) \stackrel{i.i.d.}{\sim} \mbox{GP}(0,c)$ and $\mathcal{X}$ is compact, $c_{nk}<\infty$ and $E[c_{nk}] = \bar{c}$ with $\bar{c}$ finite.  Defining $M_n =  |\theta_{jn}|c_{nk}$,
	\begin{align}
		E_{\Theta,c}\left[\sum_{n=1}^\infty M_n\right] &= E_{\Theta}\left[E_{c\mid \Theta}\left[\sum_{n=1}^\infty |\theta_{jn}|c_{nk} \mid \Theta\right]\right]
		= E_{\Theta}\left[\sum_{n=1}^\infty |\theta_{jn}|\bar{c}\right] = \bar{c}\sum_{n=1}^\infty E_{\Theta}\left[|\theta_{jn}|\right],
	\end{align}
	where the last equality follows from Fubini's theorem.  Based on Assumption~\ref{assumption:absSum}, we conclude that $E[\sum_{n=1}^\infty M_n]<\infty$ which implies that $\sum_{n=1}^\infty M_n$ converges almost surely.
\end{proof}

\begin{proof}[Proof of Lemma~\ref{lemma:absSum}]
	Recall that $\theta_{j\ell} \sim \mathcal{N}(0,\phi_{j\ell}^{-1}\tau_{\ell}^{-1})$ with $\phi_{j\ell} \sim \mbox{Ga}(3/2,3/2)$ and $\tau_{\ell} = \prod_{h=1}^{\ell}\delta_h$ for $\delta_1 \sim \mbox{Ga}(a_1,1), \, \delta_h \sim \mbox{Ga}(a_2,1)$.  Using the fact that if $x \sim \mathcal{N}(0,\sigma)$ then $E[|x|] = \sigma\sqrt{2/\pi}$ and if $y \sim \mbox{Ga}(a,b)$ then $1/y \sim \mbox{Inv-Ga}(a,1/b)$ with $E[1/y] = 1/(b\cdot(a-1))$, we derive that
	\begin{align}
		\sum_{\ell=1}^\infty E_{\theta}[|\theta_{j\ell}|] &= \sum_{\ell=1}^\infty E_{\phi,\tau}[E_{\theta \mid \phi,\tau}[|\theta_{j\ell}|\mid \phi_{j\ell},\tau_{\ell} ]]
		=  \sqrt{\frac{2}{\pi}}\sum_{\ell=1}^\infty E_{\phi,\tau}[\phi_{j\ell}^{-1}\tau_{\ell}^{-1}]\nonumber\\
		&=  \sqrt{\frac{2}{\pi}}\sum_{\ell=1}^\infty E_{\phi}[\phi_{j\ell}^{-1}]E_{\tau}[\tau_{\ell}^{-1}] 
		= \frac{4}{3}\sqrt{\frac{2}{\pi}}\sum_{\ell=1}^\infty E_{\delta}\left[\prod_{h=1}^{\ell} \frac{1}{\delta_h}\right] 
		= \frac{1}{a_1-1}\frac{4}{3}\sqrt{\frac{2}{\pi}}\sum_{\ell=1}^\infty \frac{1}{a_2-1}^{\ell-1}.
	\end{align}
	When $a_2 > 2$, we conclude that $\sum_{\ell} E[|\theta_{j\ell}|] < \infty$.
\end{proof}

\begin{proof}[Proof of Lemma~\ref{lemma:firstMoment}]
Recall that $\Sigma(x) = \Theta\xi(x)\xi(x)'\Theta' + \Sigma_0$ with $\Sigma_0 = \mbox{diag}(\sigma_1^2,\dots,\sigma_p^2)$.  The elements of the respective matrices are independently distributed as $\theta_{i\ell} \sim \mathcal{N}(0,\phi_{i\ell}^{-1}\tau_{\ell}^{-1})$, $\xi_{\ell k}(\cdot) \sim \mbox{GP}(0,c)$, and $\sigma_{i}^{-2} \sim \mbox{Gamma}(a_{\sigma},b_{\sigma})$.  Let $\mu_{\sigma}$ and $\sigma_{\sigma}^2$ represent the mean and variance of the implied inverse gamma prior on $\sigma_i^2$, respectively.  In all of the following, we first condition on $\Theta$ and then use iterated expectations to find the marginal moments.

The expected covariance matrix at any predictor location $x$ is simply derived as
\begin{align*}
	E[\Sigma(x)] &= E[E[\Sigma(x) \mid \Theta]] = E[E[\Theta\xi(x)\xi(x)'\Theta' \mid \Theta]] + \mu_{\sigma}I_p\\
				 &= kE[\Theta\Theta'] + \mu_{\sigma}I_p\\
				 &= \mbox{diag}(k\sum_{\ell}\phi_{1\ell}^{-1}\tau_{\ell}^{-1} + \mu_{\sigma},\dots,k\sum_{\ell}\phi_{p\ell}^{-1}\tau_{\ell}^{-1} + \mu_{\sigma}).
\end{align*}
Here, we have used the fact that conditioned on $\Theta$, $\Theta\xi(x)\xi(x)'\Theta'$ is Wishart distributed with mean $k\Theta\Theta'$ and 
\begin{align*}
	E[\Theta\Theta']_{ij} &= \sum_{\ell}\sum_{\ell'} E[\theta_{i\ell}\theta_{j\ell'}] = \sum_{\ell} E[\theta_{i\ell}^2] \delta_{ij}\\
						  &= \sum_{\ell}\mbox{var}(\theta_{i\ell})\delta_{ij} = \left(\sum_{\ell}\phi_{i\ell}^{-1}\tau_{\ell}^{-1}\right)\delta_{ij}.
\end{align*}
\end{proof}

\begin{proof}[Proof of Lemma~\ref{lemma:secondMoment}]
One can use the conditionally Wishart distribution of $\Theta\xi(x)\xi(x)'\Theta'$ to derive $\mbox{cov}(\Sigma_{ij}(x),\Sigma_{uv}(x))$.  Specifically, let $S = \Theta\xi(x)\xi(x)'\Theta'$.  Then $S = \sum_{n=1}^k z^{(n)}z^{(n)'}$ with $z^{(n)}\mid \Theta \sim \mathcal{N}(0,\Theta\Theta')$ independently for each $n$.  Then, using standard Gaussian second and fourth moment results,
\begin{align*}
	\mbox{cov}(\Sigma_{ij}(x),\Sigma_{uv}(x) \mid \Theta) &= \mbox{cov}(S_{ij},S_{uv} \mid \Theta) + \sigma_{\sigma}^2\delta_{ijuv}\\
				&= \sum_{n=1}^k E[z_i^{(n)}z_j^{(n)}z_u^{(n)}z_v^{(n)} \mid \Theta] - E[z_i^{(n)}z_j^{(n)}\mid \Theta]E[z_u^{(n)}z_v^{(n)}\mid \Theta] + \sigma_{\sigma}^2\delta_{ijuv}\\
				&=k((\Theta\Theta')_{iu}(\Theta\Theta')_{jv}+(\Theta\Theta')_{iv}(\Theta\Theta')_{ju})+\sigma_{\sigma}^2\delta_{ijuv}.
\end{align*}
Here, $\delta_{ijuv}=1$ if $i=j=u=v$ and is $0$ otherwise.  Taking the expectation with respect to $\Theta$ yields $\mbox{cov}(\Sigma_{ij}(x),\Sigma_{uv}(x))$.  However, instead of looking at one slice of the predictor space, what we are really interested in is how the correlation between elements of the covariance matrix changes with predictors.  Thus, we work directly with the latent Gaussian processes to derive $\mbox{cov}(\Sigma_{ij}(x),\Sigma_{uv}(x'))$.  Let
\begin{align}
	g_{in}(x) = \sum_{\ell}\theta_{i\ell}\xi_{\ell n}(x),
	\label{eqn:g}
\end{align}
implying that $g_{in}(x)$ is independent of all $g_{im}(x')$ for any $m\neq n$ and all $x' \in \mathcal{X}$. Since each $\xi_{\ell n}(\cdot)$ is distributed according to a zero mean Gaussian process, $g_{in}(x)$ is zero mean. Using this definition, we condition on $\Theta$ (which is dropped in the derivations for notational simplicity) and write
\begin{align*}
	\mbox{cov}(\Sigma_{ij}(x),\Sigma_{uv}(x') \mid \Theta) &= \sum_{n=1}^k \mbox{cov}(g_{in}(x)g_{jn}(x),g_{un}(x'),g_{vn}(x')) + \sigma_{\sigma}^2\delta_{ijuv}\\
		&\hspace{-1in}= \sum_{n=1}^k E[g_{in}(x)g_{jn}(x)g_{un}(x'),g_{vn}(x')] - E[g_{in}(x)g_{jn}(x)]E[g_{un}(x'),g_{vn}(x')] + \sigma_{\sigma}^2\delta_{ijuv}\\
\end{align*}
We replace each $g_{kn}(x)$ by the form in Equation~\eqref{eqn:g}, summing over different dummy indices for each.  Using the fact that $\xi_{\ell n}(x)$ is independent of $\xi_{\ell'n}(x')$ for any $\ell\neq \ell'$ and that each $\xi_{\ell n}(x)$ is zero mean, all cross terms in the resulting products cancel if a $\xi_{\ell n}(x)$ arising from one $g_{kn}(x)$ does not share an index $\ell$ with at least one other $\xi_{\ell n}(x)$ arising from another $g_{pn}(x)$.  Thus,
\begin{align*}
	\mbox{cov}(\Sigma_{ij}(x),\Sigma_{uv}(x') \mid \Theta) &= \sum_{n=1}^k \sum_{\ell}\theta_{i\ell}\theta_{j\ell}\theta_{u\ell}\theta_{v\ell} E[\xi_{\ell n}^2(x)\xi_{\ell n}^2(x')] \\
	&\hspace{0.25in}+ \sum_{\ell}\theta_{i\ell}\theta_{u\ell}E[\xi_{\ell n}(x)\xi_{\ell n}(x')] \sum_{\ell'\neq \ell}\theta_{j\ell'}\theta_{v\ell'}E[\xi_{\ell' n}(x)\xi_{\ell' n}(x')]\\
	&\hspace{0.25in}+ \sum_{\ell}\theta_{i\ell}\theta_{j\ell}E[\xi_{\ell n}^2(x)] \sum_{\ell'\neq \ell}\theta_{u\ell'}\theta_{v\ell'}E[\xi_{\ell' n}^2(x')]\\
	&\hspace{0.25in}-  \sum_{\ell}\theta_{i\ell}\theta_{j\ell}E[\xi_{\ell n}^2(x)] \sum_{\ell'}\theta_{u\ell'}\theta_{v\ell'}E[\xi_{\ell' n}^2(x')]
	 + \sigma_{\sigma}^2\delta_{ijuv}
\end{align*}
The Gaussian process moments are given by
\begin{align*}
	E[\xi_{\ell n}^2(x)] &= 1\\
	E[\xi_{\ell n}(x)\xi_{\ell n}(x')] &= E[E[\xi_{\ell n}(x)\mid \xi_{\ell n}(x')]\xi_{\ell n}(x')]=c(x,x')E[\xi_{\ell n}^2(x')]=c(x,x')\\
	E[\xi_{\ell n}^2(x)\xi_{\ell n}^2(x')] &=E[E[\xi_{\ell n}^2(x)\mid \xi_{\ell n}(x')]\xi_{\ell n}^2(x')] \\
				&= E[\{(E[\xi_{\ell n}(x)\mid \xi_{\ell n}(x')])^2 + \mbox{var}(\xi_{\ell n}(x)\mid \xi_{\ell n}(x'))\}\xi_{\ell n}^2(x')] \\
				&= c^2(x,x')E[\xi_{\ell n}^4(x')] + (1-c^2(x,x'))E[\xi_{\ell n}^2(x')] = 2c^2(x,x')+1,
\end{align*}
from which we derive that
\begin{align*}
	\mbox{cov}(\Sigma_{ij}(x),\Sigma_{uv}(x') \mid \Theta) &= k\bigg\{ (2c^2(x,x') + 1)\sum_{\ell}\theta_{i\ell}\theta_{j\ell}\theta_{u\ell}\theta_{v\ell}  + c^2(x,x')\sum_{\ell}\theta_{i\ell}\theta_{u\ell}\sum_{\ell'\neq \ell}\theta_{j\ell'}\theta_{v\ell'}\\
	&\hspace{0.35in}+ \sum_{\ell}\theta_{i\ell}\theta_{j\ell}\sum_{\ell'\neq \ell}\theta_{u\ell'}\theta_{v\ell'}
	-  \sum_{\ell}\theta_{i\ell}\theta_{j\ell}\sum_{\ell'}\theta_{u\ell'}\theta_{v\ell'}\bigg\}
	 + \sigma_{\sigma}^2\delta_{ijuv}\\
	&= kc^2(x,x') \left\{\sum_{\ell}\theta_{i\ell}\theta_{j\ell}\theta_{u\ell}\theta_{v\ell} + \sum_{\ell}\theta_{i\ell}\theta_{u\ell}\sum_{\ell'}\theta_{j\ell'}\theta_{v\ell'}\right\}
	 + \sigma_{\sigma}^2\delta_{ijuv}.
\end{align*}

An iterated expectation with respect to $\Theta$ yields the following results.  When $i\neq u$ or $j \neq v$, the independence between $\theta_{i\ell}$ (or $\theta_{j \ell}$) and the set of other $\theta_{k\ell}$ implies that $\mbox{cov}(\Sigma_{ij}(x),\Sigma_{uv}(x'))=0$.  When $i=u$ and $j=v$, but $i\neq j$,
\begin{align*}
	\mbox{cov}(\Sigma_{ij}(x),\Sigma_{ij}(x')) &= kc^2(x,x') \left\{\sum_{\ell}E[\theta^2_{i\ell}]E[\theta^2_{j\ell}] + \sum_{\ell}E[\theta^2_{i\ell}]\sum_{\ell'}E[\theta^2_{j\ell'}]\right\}\\
	&= kc^2(x,x') \left\{\sum_{\ell}\phi_{i\ell}^{-1}\phi_{j\ell}^{-1}\tau_{\ell}^{-2} + \sum_{\ell}\phi_{i\ell}^{-1}\tau_{\ell}^{-1}\sum_{\ell'}\phi_{j\ell'}^{-1}\tau_{\ell'}^{-1}\right\}.
\end{align*}
Finally, when $i=j=u=v$,
\begin{align*}
	\mbox{cov}(\Sigma_{ii}(x),\Sigma_{ii}(x')) &= kc^2(x,x') \left\{2\sum_{\ell}E[\theta^4_{i\ell}] + \sum_{\ell}E[\theta^2_{i\ell}]\sum_{\ell'\neq \ell}E[\theta^2_{i\ell'}]\right\} + \sigma_{\sigma}^2\\
	&= kc^2(x,x') \left\{6\sum_{\ell}\phi_{i\ell}^{-2}\tau_{\ell}^{-2} + \sum_{\ell}\phi_{i\ell}^{-1}\tau_{\ell}^{-1}\sum_{\ell'\neq \ell}\phi_{i\ell'}^{-1}\tau_{\ell'}^{-1}\right\} + \sigma_{\sigma}^2\\
	&= kc^2(x,x') \left\{5\sum_{\ell}\phi_{i\ell}^{-2}\tau_{\ell}^{-2} + (\sum_{\ell}\phi_{i\ell}^{-1}\tau_{\ell}^{-1})^2\right\} + \sigma_{\sigma}^2.
\end{align*}
\end{proof}

\section{Derivation of Gibbs Sampler}

In this Appendix, we derive the conditional distribution for sampling the Gaussian process dictionary elements.  Combining Eq.~\eqref{eq:base} and Eq.~\eqref{eq:Lamx}, we have that 
\begin{align}
	y_i = \Theta\begin{bmatrix} 
		\xi_{11}(x_i) & \xi_{12}(x_i) & \dots & \xi_{1k}(x_i)\\
		\xi_{21}(x_i) & \xi_{22}(x_i) & \dots & \xi_{2k}(x_i)\\
		\vdots & \vdots & \ddots & \vdots\\
		\xi_{L1}(x_i) & \xi_{L2}(x_i) & \dots & \xi_{Lk}(x_i)
	\end{bmatrix}\eta_i + \epsilon_i = 
	\Theta\begin{bmatrix} \sum_{m=1}^k \xi_{1m}(x_i)\eta_{im} \\ \vdots \\ \sum_{m=1}^k \xi_{Lm}(x_i)\eta_{Lm} \end{bmatrix} + \epsilon_i
	\label{eqn:y2}
\end{align}
implying that
\begin{align}
	y_{ij} = \sum_{\ell=1}^L\sum_{m=1}^k \theta_{j\ell}\eta_{im}\xi_{\ell m}(x_i) + \epsilon_{ij}.
\end{align}
Conditioning on $\xi(\cdot)^{-\ell m}$, we rewrite Eq.~\eqref{eqn:y2} as
\begin{align}
	y_i = \eta_{im}\begin{bmatrix} \theta_{1\ell} \\ \vdots \\ \theta_{p\ell} \end{bmatrix}\xi_{\ell m}(x_i) + \tilde{\epsilon}_i, \quad
	\tilde{\epsilon}_i \sim \mathcal{N}\left(\mu_{\ell m}(x_i) \triangleq \begin{bmatrix} 
	\sum_{(r,s)\neq(\ell,m)}\theta_{1r}\eta_{is}\xi_{rs}(x_i)\\ 
	\vdots\\ 
	\sum_{(r,s)\neq(\ell,m)} \theta_{pr}\xi_{rs}(x_i) \end{bmatrix}, 
	\Sigma_0 \right).
\end{align}
Let $\theta_{\cdot \ell} = \begin{bmatrix} \theta_{1\ell} & \dots & \theta_{p\ell} \end{bmatrix}'$. Then,
\begin{align}
	\begin{bmatrix} y_1 \\ \vdots \\ y_n \end{bmatrix} = 
		\begin{bmatrix} \eta_{1m}\theta_{\cdot \ell} & 0 & \dots & 0\\ 
			0 & \eta_{2m}\theta_{\cdot \ell} & \dots & 0 \\
			\vdots & \vdots & \ddots & \vdots\\
			0 & 0 & \dots & \eta_{nm}\theta_{\cdot \ell} 
		\end{bmatrix}
		\begin{bmatrix} \xi_{\ell m}(x_1) \\ \xi_{\ell m}(x_2) \\ \vdots \\ \xi_{\ell m}(x_n) \end{bmatrix} + 
		\begin{bmatrix} \tilde{\epsilon}_1 \\ \tilde{\epsilon}_2 \\ \vdots \\ \tilde{\epsilon}_n \end{bmatrix}
\end{align}
Defining $A_{\ell m} = \mbox{diag}(\eta_{1m}\theta_{\cdot \ell}, \dots, \eta_{nm}\theta_{\cdot \ell})$, our Gaussian process prior on the dictionary elements $\xi_{\ell m}(\cdot)$ implies the following conditional posterior
\begin{align}
	\begin{bmatrix} \xi_{\ell m}(x_1) \\ \xi_{\ell m}(x_2) \\ \vdots \\ \xi_{\ell m}(x_n) \end{bmatrix} \mid \{y_i\},\Theta,\eta,\xi(\cdot),\Sigma_0 &\sim 
		\mathcal{N}\left(\tilde{\Sigma}A_{\ell m}'\begin{bmatrix}\Sigma_0^{-1} & & \\ & \ddots & \\ & & \Sigma_0^{-1} \end{bmatrix}
		\begin{bmatrix} \tilde{y}_1 \\ \vdots \\ \tilde{y}_n \end{bmatrix},\tilde{\Sigma} \right)\nonumber\\
		&= \mathcal{N}\left(\tilde{\Sigma}\begin{bmatrix} \eta_{1m}\sum_{j=1}^p \theta_{j\ell}\sigma_j^{-2}\tilde{y}_{1j} \\ \vdots \\ \eta_{nm}\sum_{j=1}^p \theta_{j\ell}\sigma_j^{-2}\tilde{y}_{nj} \end{bmatrix},\tilde{\Sigma} \right),
\end{align} 
where $\tilde{y}_i = y_i - \mu_{\ell m}(x_i)$ and, taking $K$ to be the matrix of correlations $K_{ij} = c(x_i,x_j)$ defined by the Gaussian process parameter $\kappa$,
\begin{align}
	\tilde{\Sigma}^{-1} &= K^{-1} + A_{\ell m}'\begin{bmatrix}\Sigma_0^{-1} & & \\ & \ddots & \\ & & \Sigma_0^{-1} \end{bmatrix} A_{\ell m} 
	= K^{-1} + \mbox{diag}\left(\eta_{1m}^2\sum_{j=1}^p \theta_{j\ell}^2\sigma_j^{-2},\dots,\eta_{nm}^2\sum_{j=1}^p\theta_{j\ell}^2\sigma_j^{-2}\right).	
\end{align}

\bibliographystyle{plainnat}
\bibliography{../../Bibliography/Bibliography}

\end{document}